\newtheorem{assumption}{Assumption}
\newtheorem{lemma}{Lemma}
\newtheorem{theorem}{Theorem}
\newtheorem{proposition}{Proposition}
\theoremstyle{definition}
\newtheorem{remark}{Remark}
\newcommand{\diag}{\operatorname{diag}}
\newcommand{\calG}{\mathcal{G}}
\newcommand{\calV}{\mathcal{V}}
\newcommand{\calE}{\mathcal{E}}
\newcommand{\calN}{\mathcal{N}}
\renewcommand{\Re}{\operatorname{Re}}
\begin{document}

%\title{Broadcast Gossip Algorithms for Consensus on Strongly Connected Digraphs}
\title{\bf Broadcast Gossip Algorithms for Consensus on Strongly Connected Digraphs}

%\author{Wu~Shaochuan
%        and~Michael~G.~Rabbat% <-this % stops a space
%\thanks{Wu Shaochuan is with the Department
%of Electrical and Information Engineering, Harbin Institute of Technology, Harbin, 150080 China e-mail: scwu@hit.edu.cn. This work was conducted while W.S.~was visiting McGill University.}% <-this % stops a space
%\thanks{Michael G. Rabbat is with the Department of Electrical and Computer Engineering, McGill University, Montreal, QC, H3A 0E9 Canada, e-mail: michael.rabbat@mcgill.ca.}%
%\thanks{The work of W.S.~was partially supported by the National Science Foundation of China grant No.~61201147. The work of M.G.R.~was partially supported by grants from the Natural Sciences and Engineering Research Council of Canada and the \emph{Minist\`{e}re du D\'{e}veloppement \'{e}conomique, de l'Innovation et de l'Exportation}.}}% <-this % stops a space
%%\thanks{Manuscript received March XX, 2012; revised XX XX, XXXX.}}%January 11, 2014.}}

\author{Wu~Shaochuan
        and~Michael~G.~Rabbat\footnotemark[0]% <-this % stops a space
}% <-this 

\renewcommand{\thefootnote}{\fnsymbol{footnote}}
\footnotetext{Wu Shaochuan is with the Department
of Electrical and Information Engineering, Harbin Institute of Technology, Harbin, 150080 China e-mail: scwu@hit.edu.cn. This work was conducted while W.S.~was visiting McGill University. 

Michael G. Rabbat is with the Department of Electrical and Computer Engineering, McGill University, Montreal, QC, H3A 0E9 Canada, e-mail: michael.rabbat@mcgill.ca.

The work of W.S.~was partially supported by the National Science Foundation of China (NSFC) grant No.~61201147. The work of M.G.R.~was partially supported by grants from the Natural Sciences and Engineering Research Council of Canada (NSERC) and the \emph{Minist\`{e}re du D\'{e}veloppement \'{e}conomique, de l'Innovation et de l'Exportation} (MDEIE).}

\maketitle

\begin{abstract}
We study a general framework for broadcast gossip algorithms which use companion variables to solve the average consensus problem. Each node maintains an initial state and a companion variable. Iterative updates are performed asynchronously whereby one random node broadcasts its current state and companion variable and all other nodes receiving the broadcast update their state and companion variable. We provide conditions under which this scheme is guaranteed to converge to a consensus solution, where all nodes have the same limiting values, on any strongly connected directed graph. Under stronger conditions, which are reasonable when the underlying communication graph is undirected, we guarantee that the consensus value is equal to the average, both in expectation and in the mean-squared sense. Our analysis uses tools from non-negative matrix theory and perturbation theory. The perturbation results rely on a parameter being sufficiently small. We characterize the allowable upper bound as well as the optimal setting for the perturbation parameter as a function of the network topology, and this allows us to characterize the worst-case rate of convergence. Simulations illustrate that, in comparison to existing broadcast gossip algorithms, the approaches proposed in this paper have the advantage that they simultaneously can be guaranteed to converge to the average consensus and they converge in a small number of broadcasts.
\end{abstract}

\section{Introduction} \label{sec:intro}

Gossip algorithms are an attractive solution for information processing in applications such as distributed signal processing~\cite{DI10}, networked control~\cite{OL07}, and multi-robot systems~\cite{ME10}. They are attractive because they require little infrastructure; nodes iteratively pass messages with their immediate neighbors in a network until they reach a consensus on the solution. Consequently, there is little overhead associated with forming and maintaining specialized routes, and there are no bottlenecks or single points of failure. 

Broadcast gossip algorithms, introduced in~\cite{AY08a,AY08b,AY09}, are especially attractive for use in wireless networks. Unlike the majority of existing gossip algorithms, where messages are either asynchronously exchanged between pairs of nodes or where nodes synchronously exchange and process messages with all of their neighbors, in broadcast gossip algorithms nodes asynchronously broadcast a message and the message contents are immediately processed by all neighbors receiving it. By exploiting the broadcast nature of wireless communications, broadcast gossip algorithms are more efficient (they converge after fewer transmissions) than other gossip algorithms~\cite{AY09}. However, previously proposed broadcast gossip algorithms either converge to a consensus on a random solution~\cite{AY09}, which may not be acceptable in practical applications, or they do not have theoretical guarantees~\cite{FR09}.

In this article we propose and analyze a family of broadcast gossip algorithms for strongly connected directed graphs. If the network is symmetric (undirected) or if nodes know their out-degree, these algorithms are guaranteed to converge to the average consensus both in expectation and in the mean-squared sense. In more general settings, the algorithms are still guaranteed to converge to a specific solution which is a convex combination of the initial values at all nodes in the network (but not necessarily the average). We give a precise characterization of this solution in terms of the algorithm parameters. Our analysis combines tools and techniques from non-negative matrix theory and matrix perturbation theory. Along these lines, we derive an upper bound on the perturbation parameter under which convergence is guaranteed, and we derive an expression for the optimal value of the perturbation parameter.

\subsection{Related Work}

\emph{Broadcast gossip algorithms} (BGAs) are introduced in the series of papers by Aysal et al.~\cite{AY08a,AY08b,AY09}. The BGAs proposed there involve nodes asynchronously transmitting a scalar-valued message, and each time a node receives a message from its neighbors it performs an update by forming a convex combination of the received value with their own previous value. Then, when it is a given node's turn to broadcast next (as determined by a random timer, in the asynchronous model~\cite{TS86,BE97}), the node broadcasts its current value. In~\cite{AY08a,AY08b,AY09} it is shown that, when executed over an undirected graph (i.e., one with symmetric links) such an algorithm converges to a consensus solution almost surely. The updates of this algorithm are linear and can be expressed as a random, time-varying matrix acting on the vector containing the state values at each node. Unlike conventional pairwise or synchronous gossip algorithms, the matrices in~\cite{AY08a,AY08b,AY09} corresponding to the update when a particular node transmits cannot be viewed as the transition matrix of a reversible Markov chain, and so the average is not preserved from iteration to iteration. Consequently, although the consensus value is equal to the average of the initial values at every node in expectation, for any particular sample path (where the randomness is in the sequence determining the order in which nodes broadcast) the consensus value is randomly distributed about the average of their initial values but is not precisely equal to it. 

Subsequent recent work~\cite{FR11} investigates related BGAs, demonstrating that their convergence properties are robust even when the broadcasts from different nodes may interfere at a receiver. A broadcast-based algorithm has also been proposed for solving distributed convex optimization problems~\cite{NE11}.

A modified BGA is proposed by Franceschelli et al.~\cite{FR09,FA11}, where nodes maintain a companion (or surplus) variable in addition to the state variable they seek to average. By careful accounting of both the companion and state variables, a conservation principle is established, and simulation results suggest that the algorithm with companion variables converges to the average consensus for all sample paths, not just in expectation. However, no proof of convergence or theoretical convergence rate analysis is available for the algorithm of~\cite{FR09,FA11}. 

Recent work of Cai and Ishii~\cite{CA11,CA12} analyzes related distributed averaging algorithms on directed graphs that use companion variables. The two types of algorithms analyzed in~\cite{CA11,CA12} involve asynchronous pairwise updates and synchronous updates. They make use of tools from matrix perturbation theory, and the work in the present article can be seen as generalizing the results in~\cite{CA11,CA12} for broadcast gossip updates.

\subsection{Contributions and Paper Organization}

The contributions of this article are as follows. In Section~\ref{sec:framework} we propose a general framework for broadcast gossip algorithms over directed graphs using companion variables. For this framework we determine conditions on the algorithm parameters under which convergence to a consensus is guaranteed both in expectation (in  Section~\ref{sec:convergenceInExpectation}) and in the mean squared sense (in Section~\ref{sec:convergenceSecondMoment}). 

We then consider two specific instances of the general framework in Sections~\ref{sec:ubga} and \ref{sec:bbga}. In one instance, which we refer to as unbiased broadcast gossip algorithms (cf.~Section~\ref{sec:ubga}), the consensus value is guaranteed to be the average of the initial values. In the other instance (biased broadcast gossip algorithms, Sec.~\ref{sec:bbga}), the consensus value is no longer the average of the initial values, but it depends on the stationary distribution of a Markov chain associated with the algorithm parameters. The unbiased algorithm requires that each node be aware of its out-degree, the number of nodes that receive its broadcasts. This is a reasonable assumption in networks where connectivity is symmetric, but it may not be reasonable in networks with directed edges. In particular, if there are directed edges, then there is no immediate feedback link, making it more challenging for a node to identify the out-neighbors that receive its broadcasts. This motivates further study of the biased BGAs, which are more practical in such scenarios because they do not require that nodes know their out-degree. 

Our analysis of the general framework makes use of tools from matrix perturbation theory. In particular, the way in which the information in the companion variables is incorporated back into the main state variables depends on a parameter which can be viewed as controlling the extent to which a baseline linear system is perturbed. For sufficiently small values of the perturbation parameter, the algorithm is guaranteed to converge. In Section~\ref{sec:supremum} we determine a tight upper bound on the allowable values for the perturbation parameter for biased broadcast gossip algorithms. This bound constitutes an improvement over previous bounds along these lines because it explicitly takes into account the structure of the graph through spectral properties of a corresponding graph Laplacian matrix. In addition to determining this bound, we identify a topology-dependent optimal value for the perturbation parameter in Section~\ref{sec:optimal}, and we obtain an expression for the resulting second largest eigenvalue which governs the worst-case rate of convergence.

Simulation results, reported in Section~\ref{sec:perf}, demonstrate that the proposed broadcast gossip algorithms fare well compared to the existing algorithms~\cite{AY09,FR09}. The algorithm of~\cite{AY09} converges quickly but can converge to a consensus value which is very far from the average. The algorithm of~\cite{FR09} converges to the average consensus but requires significantly more iterations than the algorithm of~\cite{AY09}. The algorithms proposed here converge quickly and they can be made to converge to the average consensus. We conclude in Section~\ref{sec:conclusion}.

\subsection{Notation}

Before proceeding, we summarize some of the notation used in this article. Let $x \in \mathbb{R}^n$ be a $n$-dimensional column vector. The Euclidean norm of $x$ is denoted by $\|x\|_2$. Let $A$ be a $n \times n$ matrix with real-valued entries. Let $[A]_{i,j}$ denote the entry in the $i$th row and $j$th column of $A$; we also write $A_{i,j}$ when there is no ambiguity. The $\infty$-norm of $A$ is given by $\|A\|_{\infty} = \max_i \sum_{j=1}^n |A_{i,j}|$, the largest absolute row sum, and the $1$-norm of $A$ is give by $\|A\|_1 = \max_j \sum_{i=1}^n |A_{i,j}|$, the largest absolute column sum. The spectral radius of $A$ is the largest modulus of an eigenvalue of $A$ and is denoted by $\rho(A) = \max_{i=1,\dots,n} |\lambda_i(A)|$, where $\lambda_1(A), \dots, \lambda_n(A)$ are the eigenvalues of $A$. For the vector $x \in \mathbb{R}^n$, let $\diag(x)$ denote a $n \times n$ diagonal matrix with $[\diag(x)]_{i,i} = x_i$. For a $n \times n$ matrix $A$, let $\diag(A)$ denote a $n$-dimensional column vector with $[\diag(A)]_i = A_{i,i}$.

\section{Framework for Broadcast Gossip Algorithms} \label{sec:framework}

\subsection{Network Model} \label{sec:networkModel}

Let $\calG = (\calV, \calE)$ be a directed graph which represents the network connectivity, where $\calV = \{1,\dots,n\}$ is the set of nodes and $\calE \subseteq \calV \times \calV$ is the set of directed edges. The network contains a directed edge $(i,j) \in \calE$ if and only if node $i$ receives messages transmitted by node $j$. Let $\calN^+_i = \{j \in \calV \colon (i,j) \in \calE\}$ and $\calN^-_i = \{k \in \calV \colon (k,i) \in \calE\}$ denote the set of in-neighbors and out-neighbors, respectively, of node $i$. %Let $\delta_i^+ = |\calN^+_i|$ and $\delta_i^- = |\calN^-_i|$ denote the in-degree and out-degree, respectively, of node $i$. 
For the rest of this paper we make the following assumption.

\begin{assumption} \label{stronglyConnected}
The graph $\mathcal{G}$ is strongly connected; i.e., for any pair of nodes $i,j \in \mathcal{V}$, there exists a sequence of nodes $i = i_0, i_1, i_2, \dots, i_m = j$ such that $(i_{\ell-1}, i_{\ell}) \in \calE$ for all $\ell = 1,\dots,m$.
\end{assumption}

\subsection{Distributed Averaging} \label{sec:problemStatement}

The goal of any broadcast gossip algorithm is to accomplish distributed averaging. Each node $i \in \calV$ initially has a value, $x_i(0) \in \mathbb{R}$, and the goal is for all nodes to compute the average, $\frac{1}{n} \sum_{i =1}^n x_i(0)$. In general, distributed averaging algorithms seek to achieve consensus on the average while only allowing messages to be passed between neighboring nodes, as defined by the communication graph $\calG$. In broadcast gossip algorithms we make the following additional restrictions. Each node has a unique id, which corresponds to its index in the set $\{1,\dots,n\}$. When a node transmits a message, the message is received by all of its out-neighbors. The receiving nodes may know the id of the transmitter, but the transmitter will not, in general, know the ids of the receivers. Equivalently, each node knows the ids of its in-neighbors but not its out-neighbors.

\subsection{Asynchronous Time Model} \label{sec:timeModel}

Following~\cite{AY09}, we adopt the standard asynchronous time model~\cite{BE97}. Each node runs a clock which ticks according to an independent rate $1/n$ Poisson process. When node $k$'s clock ticks it initiates a broadcast gossip update, the details of which are described in the subsection that follows. Since the clocks at each node are independent, this model is equivalent to running a single, global Poisson clock which ticks at rate $1$, and assigning each tick uniformly and independently to one node in $\calV$. In the sequel we use the variable $t \in \{1, 2, \dots\}$ to index the ticks of this global Poisson clock. Each global clock tick corresponds to one update or iteration.

\subsection{Broadcast Gossip Updates} \label{sec:bgupdates}

Similar to previous broadcast gossip algorithms with companion variables~\cite{FR09,CA11}, every node $i$ maintains two variables, $x_i(t)$ and $y_i(t)$. The first variable, $x_i(t)$, is the estimate of the average at node $i$ after $t$ iterations, and it is initialized to $x_i(0)$, the same initial value from Sec.~\ref{sec:problemStatement}. The second variable, $y_i(t)$, is the companion variable at node $i$ after $t$ iterations, and it is initialized to $y_i(0) = 0$. The companion variables (called ``surplus'' variables in~\cite{CA12}), play the role of compensating for asymmetric updates made to $x_i(t)$, and if they are updated carefully, the companion variables can be used to ensure that consensus is achieved on the average.

When a node's clock ticks, it initiates an update by broadcasting its current state and companion value. Suppose that the $t+1$st global clock tick occurs at node $k$. Then $k$ broadcasts the values $x_k(t)$ and $y_k(t)$, and all nodes $j \in \calN^-_k$ which receive this information set
\begin{align}
x_j(t+1) &= (1 - a_{j,k}) x_j(t) + a_{j,k} x_k(t) + \epsilon d_j^{(k)} y_j(t) \label{eqn:updatexj} \\
y_j(t+1) &= a_{j,k} (x_j(t) - x_k(t)) + (1 - \epsilon d^{(k)}_j) y_j(t) + b_{j,k} y_k(t), \label{eqn:updateyj}
\end{align}
where the values of the algorithm parameters $a_{j,k}$, $b_{j,k}$, $d^{(k)}_j$, and $\epsilon > 0$ will be specified below. The transmitting node $k$ sets
\begin{align}
x_k(t+1) = x_k(t) \label{eqn:updatexk} \\
y_k(t+1) = 0, \label{eqn:updateyk}
\end{align}
and all other nodes $i \notin k \cup \calN_k^-$ keep
\begin{align}
x_i(t+1) = x_i(t) \label{eqn:updatexi} \\
y_i(t+1) = y_i(t). \label{eqn:updateyi}
\end{align}
Note that the nodes need not be aware of the global clock index $t$ to implement this protocol; they can simply update two local registers (one for $x_i$ and one for $y_i$) when they either broadcast a message or receive a broadcast. Below we continue to keep track of the global clock index for the purposes of analysis.

Different choices of the parameters $a_{j,k}$, $b_{j,k}$, $d^{(k)}_j$, and $\epsilon$ lead to different broadcast gossip algorithms with different properties; we will examine two particular choices of interest in Sections~\ref{sec:ubga} and \ref{sec:bbga}. Note that the seminal broadcast gossip algorithm of~\cite{AY09} is recovered by setting $\epsilon = 0$ and $a_{j,k} = \gamma$ for all $(j,k) \in \calE$. The broadcast gossip algorithm of~\cite{FR09} does not directly fit the form considered here, since in~\cite{FR09}, the receiving nodes also use $y_k(t)$ to calculate $x_j(t+1)$.

The broadcast gossip updates \eqref{eqn:updatexj}--\eqref{eqn:updateyi} are linear, and below we will use tools from linear algebra, spectral graph theory, and matrix perturbation theory to analyze their convergence properties. To this end, we introduce some additional notation. Let $A$ and $B$ be $n \times n$ matrices with entries $[A]_{i,j} = a_{i,j}$ and $[B]_{i,j} = b_{i,j}$, respectively, satisfying 
\begin{align}
\begin{cases} 0 < a_{i,j} \le 1 & \text{ if } (i,j) \in \calE \\ a_{i,j} = 0 & \text{ if } (i,j) \notin \calE, \end{cases} \label{eqn:constraints_a}
\end{align}
and
\begin{align}
\begin{cases} 0 < b_{i,j} \le 1 & \text{ if } (i,j) \in \calE \\ b_{i,j} = 0 & \text{ if } (i,j) \notin \calE. \end{cases} \label{eqn:constraints_b}
\end{align}
The matrices are graph-conformant in the sense that they have non-zero entries in locations corresponding to the edges of $\calG$.

We write $e_k \in \mathbb{R}^n$ for the $k$th canonical vector---the vector with all entries equal to $0$ except for the $k$th entry, which is equal to $1$. We also write $\mathbf{1}$ (respectively $\mathbf{0}$) for a $n$-dimensional vector with all entries equal to $1$ (respectively $0$).

Define $A_k = A e_k e_k^T$ and $B_k = B e_k e_k^T$. One can verify that $B_k$ is a $n \times n$ matrix, the $k$th column of $B_k$ is identical to that of $B$, and all other entries of $B_k$ are zero (and similar properties hold for $A_k$ in relation to $A$). It also follows directly from the definitions of $A_k$ and $B_k$ that $A = \sum_{k \in \calV} A_k$ and $B  = \sum_{k \in \calV} B_k$. 

The matrices $A$ and $B$ can be viewed as weighted adjacency matrices of the graph $\calG$ (possibly assigning different weights to each edge). From this view, the matrices $A_k$ and $B_k$ correspond to weighted adjacency matrices of a graph $\calG_k$ obtained from $\calG$ by eliminating all edges except those of the form $(i,k)$ for some $i \in \calV$, i.e., by retaining only those edges emanating from $k$. Thus, $\calG_k$ represents the graph of active edges when node $k$ transmits.

Finally, with this view of $A_k$ as a weighted adjacency matrix on $\calG_k$, let $L_k = \diag(A_k \mathbf{1}) - A_k$ denote the corresponding (directed) graph Laplacian. It follows from the definition of $L_k$ that $L_k \mathbf{1} = \mathbf{0}$. It also follows from the definition of $A_k$ that $\sum_{k \in V} L_k = \diag(A \mathbf{1}) - A \stackrel{\text{def}}{=} L$, where $L$ is the Laplacian corresponding to the graph with weighted adjacency matrix $A$.

The remaining algorithm parameters to discuss are $d^{(k)}_j$ and $\epsilon$. Let $d^{(k)} \in [0,1]^n$ denote a vector with values satisfying
\begin{align}
\begin{cases} d^{(k)}_j > 0 & \text{ if } (j, k) \in \calE \\ d^{(k)} = 0 & \text{ if } (j,k) \notin \calE, \end{cases} \label{eqn:constraints_d}
\end{align}
and let $D_k = \diag(d^{(k)})$ denote a diagonal matrix with $[D_k]_{i,i} = d^{(k)}_i$. The positive weights $\epsilon d^{(k)}_j$ determine the amount of $j$'s own surplus it injects into an update of $x_j(t+1)$ when $j$ receives a broadcast from node $k$. The parameter $\epsilon > 0$ will be treated as a perturbation parameter in our analysis below, and through this analysis we will obtain: 1) an upper bound on how large $\epsilon$ can be made while still ensuring convergence, as well as 2) an indication of how $\epsilon$ affects the rate of convergence.

Define the $2n \times 2n$ matrix $W_k$ to be
\begin{equation}
W_k = \begin{bmatrix}
I - L_k & \epsilon D_k \\
L_k & S_k - \epsilon D_k
\end{bmatrix}, \label{eqn:W_k}
\end{equation}
where $S_k = I - e_k e_k^T + B_k$. The general broadcast gossip updates \eqref{eqn:updatexj}--\eqref{eqn:updateyi} can be compactly written as
\begin{equation}
\begin{bmatrix}
x(t+1) \\
y(t+1)
\end{bmatrix}
= W(t) \begin{bmatrix}
x(t) \\
y(t)
\end{bmatrix}, \label{eqn:update}
\end{equation}
where $W(t)$ is a random matrix with $W(t) = W_k$ when node $k$ transmits at iteration $t$. In the asynchronous time model, the random matrices $W(t)$, $t = 1, \dots$, are independent and identically distributed, and $W(t) = W_k$ with probability $1/n$ for all $k \in \calV$. 

\section{Convergence in Expectation} \label{sec:convergenceInExpectation}

Next, we focus on identifying properties that the parameters $a_{i,j}$, $b_{i,j}$, $d^{(k)}_j$, and $\epsilon$ must satisfy in order to guarantee that the iterations \eqref{eqn:update} converge in expectation. Since $L_k$ contains some negative entries, $W_k$ is not nonnegative, and so standard results from nonnegative matrix analysis and the study of Markov chains are not sufficient to guarantee convergence in expectation. Our approach will make use of a combination of techniques from the theory of nonnegative matrices and perturbation theory.

Taking the conditional expectation of \eqref{eqn:update} with respect to the random node that broadcasts at each iteration, given the initial values $x(0)$ and $y(0)$, we obtain
\begin{align}
\mathbb{E}\left( \begin{bmatrix} x(t+1) \\ y(t+1)\end{bmatrix} \Bigg| \begin{bmatrix} x(0) \\ y(0)\end{bmatrix}\right) &= \mathbb{E}\left(\prod_{t'=0}^t W(t')\right) \begin{bmatrix} x(0) \\ y(0)\end{bmatrix} \\
&= \overline{W}^{t+1} \begin{bmatrix} x(0) \\ y(0)\end{bmatrix}, \label{eqn:expectedUpdate}
\end{align}
where $\overline{W} = \frac{1}{n} \sum_{k \in \calV} W_k$. One can verify that $W_k [\mathbf{1}^T \ \mathbf{0}^T]^T =  [\mathbf{1}^T \ \mathbf{0}^T]^T$ for all $k \in \calV$ since $L_k \mathbf{1} = \mathbf{0}$, and so $[\mathbf{1}^T \ \mathbf{0}^T]^T$ is also a right eigenvector of $\overline{W}$ corresponding to the eigenvalue $1$. The main result of this section is the following.

\begin{theorem} \label{thm:convergenceInExpectation}
In addition to the constraints \eqref{eqn:constraints_a}, \eqref{eqn:constraints_b}, and \eqref{eqn:constraints_d} imposed on the algorithm parameters above, suppose that $\|B\|_{\infty} \le 1$ or $\|B\|_1 \le 1$. Then under the assumption that $\calG$ is strongly connected (Assumption~\ref{stronglyConnected}), there exists a value $\eta > 0$ such that if $\epsilon \in (0,\eta]$, then $1$ is a simple eigenvalue of $\overline{W}$ with corresponding left eigenvector $[w_1^T \ w_2^T]^T$ normalized such that $[w_1^T \ w_2^T] [\mathbf{1}^T \ \mathbf{0}^T]^T = w_1^T \mathbf{1} = 1$, and
\begin{equation}
\lim_{t \rightarrow \infty} \mathbb{E}\left( \begin{bmatrix} x(t) \\ y(t)\end{bmatrix} \Bigg| \begin{bmatrix} x(0) \\ y(0)\end{bmatrix}\right) =  \begin{bmatrix} \big(w_1^T x(0)\big) \mathbf{1} \\ \mathbf{0} \end{bmatrix}.
\end{equation}
\end{theorem}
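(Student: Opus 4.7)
The plan is a matrix perturbation argument applied to $\overline{W}(\epsilon) = \frac{1}{n}\sum_{k \in \calV} W_k$. A direct computation from \eqref{eqn:W_k} yields the block form
\[
\overline{W} = \begin{bmatrix} I - \bar{L} & \epsilon \bar{D} \\ \bar{L} & \bar{S} - \epsilon \bar{D} \end{bmatrix},
\]
where $\bar{L} = \frac{1}{n} L$, $\bar{S} = \frac{1}{n}\sum_k S_k = \frac{n-1}{n} I + \frac{1}{n} B$, and $\bar{D} = \frac{1}{n}\sum_k D_k$. At $\epsilon = 0$ the matrix is block lower triangular, so its spectrum (with multiplicity) is the union of the spectra of $I - \bar{L}$ and $\bar{S}$. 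First I would pin down this baseline spectrum; then I would use continuity of eigenvalues together with a first-order analytic perturbation expansion to track the unit-modulus eigenvalues as $\epsilon$ grows.

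The conditions \eqref{eqn:constraints_a} and $L \mathbf{1} = \mathbf{0}$ make $I - \bar{L}$ a row-stochastic matrix with strictly positive diagonal entries, and Assumption \ref{stronglyConnected} makes it irreducible. Perron--Frobenius then yields that $1$ is a simple eigenvalue of $I - \bar{L}$ with right eigenvector $\mathbf{1}$ and every other eigenvalue strictly inside the open unit disk. For $\bar{S}$, the eigenvalues are $\frac{n-1 + \lambda_i(B)}{n}$; the hypothesis $\|B\|_\infty \le 1$ or $\|B\|_1 \le 1$ forces $|\lambda_i(B)| \le 1$, so the eigenvalues of $\bar{S}$ lie in the closed disk of radius $1/n$ centered at $(n-1)/n$, which is internally tangent to the unit circle only at $z = 1$. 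Consequently the only unit-modulus eigenvalue of $\overline{W}(0)$ is $1$, with algebraic multiplicity equal to $1$ plus the algebraic multiplicity of $1$ as an eigenvalue of $B$.

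For any $\epsilon \ge 0$, the identity $W_k [\mathbf{1}^T\; \mathbf{0}^T]^T = [\mathbf{1}^T\; \mathbf{0}^T]^T$ passes to the average, so $[\mathbf{1}^T\; \mathbf{0}^T]^T$ is always a right eigenvector of $\overline{W}(\epsilon)$ at $1$. By continuity of the spectrum, every eigenvalue of $\overline{W}(0)$ strictly inside the open unit disk remains there for small enough $\epsilon$. The delicate case is any extra copy of $z = 1$ coming from $B$ having $1$ as an eigenvalue; the corresponding eigenvectors of $\overline{W}(0)$ have the form $[\mathbf{0}^T\; v^T]^T$ with $B v = v$. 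For each of these I would apply a first-order analytic perturbation formula (in the sense of Kato/Stewart--Sun): pairing with the matching left eigenvector of $\overline{W}(0)$ and the perturbation direction
\[
\partial_\epsilon \overline{W} \big|_{\epsilon = 0} = \begin{bmatrix} 0 & \bar{D} \\ 0 & -\bar{D} \end{bmatrix},
\]
and using that $\bar{D}$ is strictly positive diagonal (by \eqref{eqn:constraints_d} and strong connectivity), the derivative of each perturbed eigenvalue at $\epsilon = 0$ should have strictly negative real part. Hence for $\epsilon \in (0, \eta]$ with some $\eta > 0$ each such eigenvalue moves strictly into the open unit disk, yielding the required spectral gap.

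Finally, with $1$ simple and all other eigenvalues strictly inside the unit disk, a standard Jordan-form argument gives $\overline{W}^t \to [\mathbf{1}^T\; \mathbf{0}^T]^T [w_1^T\; w_2^T]$ as $t \to \infty$; applying this to the initial condition $(x(0), y(0)) = (x(0), \mathbf{0})$ in \eqref{eqn:expectedUpdate} produces the stated limit. The main obstacle I anticipate is precisely the case where $B$ has $1$ as an eigenvalue: the multiplicity of $z = 1$ in $\overline{W}(0)$ then exceeds $1$, and the first-order perturbation computation must be carried out carefully (through the projection of the perturbation onto the relevant eigenspace) to ensure that the extra eigenvalues leave the unit circle into the interior rather than bifurcating along it. Quantitatively controlling this derivative, together with the interplay between $\bar D$ and the left-eigenvector structure of $\bar S$, is also what ultimately fixes the admissible upper bound $\eta$.
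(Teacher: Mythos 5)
Your proposal follows essentially the same route as the paper's proof: the decomposition $\overline{W}=W_0+\epsilon E$ with $W_0$ block triangular, Perron--Frobenius for $I-\overline{L}$ plus the norm conditions on $B$ to locate the baseline spectrum (the paper's Lemmas~\ref{lemma1}, \ref{lemma4}, \ref{lemma3}), eigenvalue continuity for the case where $1$ is simple, and a first-order semi-simple bifurcation analysis (the paper's Lemma~\ref{property2} and Proposition~\ref{prop2}) showing the extra copy of $1$ acquires a strictly negative derivative via $p^T\overline{D}q>0$, followed by the standard $\overline{W}^t\to uv^T$ limit. The one detail you flag as delicate---carrying out the perturbation through the $2\times 2$ reduced matrix on the eigenspace rather than eigenvector-by-eigenvector---is exactly how the paper resolves it, so the argument is sound and matches the published proof.
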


\begin{remark} \label{remark1}
As a consequence of Theorem~\ref{thm:convergenceInExpectation}, the broadcast gossip updates \eqref{eqn:updatexj}--\eqref{eqn:updateyi} will converge to the average consensus if and only if $w_1 = \frac{1}{n} \mathbf{1}$. From the expression for $\overline{W}$ derived below (see eqn.~\eqref{eqn:Wbar}), it turns out that this is only possible if $w_2 = \frac{1}{n} \mathbf{1}$ and $\mathbf{1}^T B = \mathbf{1}^T$.
\end{remark}

The rest of this section is devoted to the proof of Theorem~\ref{thm:convergenceInExpectation}.

\subsection{Preliminaries and The Plan}

From \eqref{eqn:W_k}, we find that the expected update matrix $\overline{W}$ has the form
\begin{equation}
\overline{W} = 
\underbrace{\begin{bmatrix} 
I - \overline{L} & 0 \\
\overline{L} & \overline{S}
\end{bmatrix}}_{\stackrel{\text{def}}{=} W_0}
+ \epsilon 
\underbrace{\begin{bmatrix}
0 & \overline{D} \\
0 & -\overline{D}
\end{bmatrix}}_{\stackrel{\text{def}}{=} E} , \label{eqn:Wbar}
\end{equation}
where, recalling that $L = \diag(A \mathbf{1}) - A$ is the Laplacian of the graph with weighted adjacency matrix $A$, we have
\begin{align}
\overline{L} &= \frac{1}{n} \sum_{k \in \calV} L_k = \frac{1}{n} L \\
\overline{D} &= \frac{1}{n} \sum_{k \in \calV} D_k = \diag\left(\sum_{k \in \calV} d^{(k)}\right) \\
\overline{S} &= \frac{1}{n} \sum_{k \in \calV} S_k = \left(1 - \frac{1}{n}\right) I + \frac{1}{n}B.
\end{align}
Using the expression \eqref{eqn:Wbar} for $\overline{W}$, one can verify the statement made in Remark~\ref{remark1} above.

From \eqref{eqn:Wbar}, it is evident that $\overline{W}$, can be viewed as a perturbed version of the matrix $W_0$. The proof of Theorem~\ref{thm:convergenceInExpectation} involves first characterizing the eigenvalues of $W_0$ using concepts from the theory of nonnegative matrices. Then results from perturbation theory can be used to determine the eigenvalues of $\overline{W}$ as a function of $\epsilon$ and the eigenvalues of $W_0$ and $E$. Before proceeding, we briefly review background material from nonnegative matrix theory and perturbation theory.

\subsection{Background}

Recall that a matrix $F$ is called \emph{nonnegative} if all of its entries are greater than or equal to zero. A square nonnegative matrix $F$ is \emph{primitive} if there exists a positive integer $k$ such that all entries of $F^k$ are strictly positive. If $F$ corresponds to the weighted adjacency matrix of a strongly connected graph, then it is irreducible and thus primitive~\cite{ME01}.

Next we recall some definitions and results from perturbation theory~\cite{SE04}.

\begin{lemma}[\cite{SE04} Sec.~2.4] \label{property1}
Suppose that a matrix $F(\epsilon)$ is continuously differentiable (entry-wise) with respect to the perturbation parameter $\epsilon$. Then the eigenvalues of $F(\epsilon)$ are continuous functions of $\epsilon$.
\end{lemma}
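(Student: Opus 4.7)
The plan is to reduce the continuity of eigenvalues to the classical continuity of roots of a polynomial with respect to its coefficients, via the characteristic polynomial. First, I would define $p(\lambda,\epsilon) = \det(\lambda I - F(\epsilon))$. Since the determinant is a fixed polynomial function of the entries of its matrix argument, and by hypothesis each entry of $F(\epsilon)$ is a continuously differentiable (in particular continuous) function of $\epsilon$, the coefficients $c_0(\epsilon), c_1(\epsilon), \dots, c_{n-1}(\epsilon)$ of $p(\lambda, \epsilon) = \lambda^n + \sum_{j=0}^{n-1} c_j(\epsilon)\lambda^j$ are continuous functions of $\epsilon$. So it suffices to prove that the roots of a monic polynomial depend continuously on its coefficients.

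Second, I would establish this classical root-continuity statement using Rouch\'e's theorem. Fix $\epsilon_0$ and let $\lambda_0$ be a root of $p(\cdot,\epsilon_0)$ of algebraic multiplicity $m$. Choose $r>0$ small enough that no other root of $p(\cdot,\epsilon_0)$ lies in the closed disk of radius $r$ around $\lambda_0$. On the circle $|\lambda - \lambda_0| = r$, the continuous function $|p(\lambda,\epsilon_0)|$ attains a positive minimum $\delta>0$. By continuity of the coefficients, there exists $\gamma > 0$ such that for $|\epsilon - \epsilon_0| < \gamma$ we have $|p(\lambda,\epsilon) - p(\lambda,\epsilon_0)| < \delta$ uniformly on the circle. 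Rouch\'e's theorem then implies that $p(\cdot,\epsilon)$ has exactly $m$ roots (counted with multiplicity) inside this disk, all within distance $r$ of $\lambda_0$. Applying this argument simultaneously at each distinct root of $p(\cdot,\epsilon_0)$ yields that the multiset of roots varies continuously at $\epsilon_0$: for every $r > 0$ there is a $\gamma > 0$ such that whenever $|\epsilon - \epsilon_0| < \gamma$, the multisets of roots agree up to a bijection that moves each root by less than $r$.

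Third, I would interpret the conclusion carefully. The statement "the eigenvalues are continuous functions of $\epsilon$" is to be understood in the multiset sense just established: there exists a labeling $\lambda_1(\epsilon), \dots, \lambda_n(\epsilon)$ of the eigenvalues such that each $\lambda_i(\epsilon)$ is continuous in $\epsilon$. This labeling follows from the above by standard arguments: locally one can match roots to their nearest neighbors and, because the bounds above can be chosen uniformly on compact sets of $\epsilon$, the matching extends consistently (any ambiguity at coalescence points is harmless since any bijection between the small roots is within $r$).

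The main obstacle is the final interpretive step: at parameter values where eigenvalues coalesce, no globally smooth individual labeling need exist (the eigenvalues may exhibit branching behavior, e.g., $\lambda(\epsilon) = \pm\sqrt{\epsilon}$), so care is needed to state precisely what "continuous function" means. The Rouch\'e argument gives continuity of the unordered multiset of roots, which is the strongest clean statement and which is what the rest of the paper actually needs: small perturbations of $\epsilon$ produce only small perturbations of the spectrum of $\overline{W}$, so that the dominant eigenvalue $1$ of $W_0$ remains isolated and simple for all sufficiently small $\epsilon$.
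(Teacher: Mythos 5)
Your argument is correct, but note that the paper does not prove this lemma at all: it is stated as imported background, with a citation to \cite{SE04}, Sec.~2.4, and is used as a black box in Propositions~\ref{prop1} and \ref{prop2} and in Lemma~\ref{lemma10}. What you have written is the standard self-contained proof of that cited fact: pass to the characteristic polynomial, observe that its coefficients are polynomials in the entries of $F(\epsilon)$ and hence continuous in $\epsilon$, and then invoke Rouch\'e's theorem on small circles around each distinct root to get continuity of the spectrum as an unordered multiset. The Rouch\'e step is carried out correctly (positive minimum of $|p(\cdot,\epsilon_0)|$ on the circle, uniform smallness of the coefficient perturbation on the compact circle, equal zero counts inside). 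The only place where your sketch is thinner than a full proof is the claim that one can extract a \emph{global} continuous labeling $\lambda_1(\epsilon),\dots,\lambda_n(\epsilon)$; ``match roots to nearest neighbors'' does not by itself resolve the branching at coalescence points, and the existence of a continuous selection of roots for a one-real-parameter family is a separate (true but nontrivial) classical result. You correctly flag this and observe that multiset continuity is all that is needed: in the paper the lemma is only ever used to conclude that eigenvalues of $\overline{W}=W_0+\epsilon E$ stay close to those of $W_0$ for small $\epsilon$, and for that purpose the Rouch\'e-level statement suffices. So your proposal both fills a gap the paper leaves to the reference and is careful about the one genuine subtlety in the statement.
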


An eigenvalue of $F(\epsilon)$ is called \emph{stable} if it does not depend on $\epsilon$. An eigenvalue of $F(\epsilon)$ is called \emph{semi-simple} if its algebraic multiplicity is equal to its geometric multiplicity.

\begin{lemma}[\cite{SE04} Sec.~2.8] \label{property2}
Suppose that $F(\epsilon) = F_0 + \epsilon E$. Let $\lambda_0$ be a semi-simple double eigenvalue of $W_0$ with corresponding left eigenvectors $u_1$ and $u_2$ and right eigenvectors $v_1$ and $v_2$ normalized such that $u_1^T v_1 = 1$ and $u_2 ^T v_2 = 1$. Then, for $\epsilon > 0$, the eigenvalue $\lambda_0$ bifurcates into two distinct eigenvalues $\lambda_{0,1}(\epsilon)$ and $\lambda_{0,2}(\epsilon)$ of $F(\epsilon)$. The bifurcation is given in the form of the power series
\begin{align}
\lambda_{0,1}(\epsilon) &= \lambda_0 + \epsilon \lambda' + o(\epsilon) \\
\lambda_{0,2}(\epsilon) &= \lambda_0 + \epsilon \lambda'' + o(\epsilon),
\end{align}
where $\lambda'$ and $\lambda''$ are the eigenvalues of the $2 \times 2$ matrix,
\begin{equation}
\begin{bmatrix}
u_1^T E v_1 & u_1^T E v_2 \\
u_2^T E v_1 & u_2^T E v_2
\end{bmatrix}. \label{eqn:property2}
\end{equation}
\end{lemma}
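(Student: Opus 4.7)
The plan is to prove Lemma~\ref{property2} by isolating the bifurcating pair with a Riesz spectral projector and then reducing the problem to a $2 \times 2$ matrix eigenvalue calculation on the associated perturbed invariant subspace. Since the spectrum of $F_0$ is finite, we may enclose $\lambda_0$ by a small positively oriented circle $\Gamma \subset \mathbb{C}$ lying in the resolvent set of $F_0$ and excluding all other eigenvalues. The resolvent $(zI - F(\epsilon))^{-1}$ is jointly analytic in $(z,\epsilon)$ on a neighborhood of $\Gamma \times \{0\}$, so the projector
\[ P(\epsilon) \;=\; \frac{1}{2\pi i} \oint_{\Gamma} (zI - F(\epsilon))^{-1}\,dz \]
is analytic in $\epsilon$, has constant rank $2$ for all sufficiently small $\epsilon$, and satisfies $P(0) = P_0$, the spectral projector of $F_0$ at $\lambda_0$. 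Combined with Lemma~\ref{property1}, this guarantees that the two eigenvalues of $F(\epsilon)$ inside $\Gamma$ are precisely $\lambda_{0,1}(\epsilon)$ and $\lambda_{0,2}(\epsilon)$, and they coincide with the spectrum of the reduced operator $F(\epsilon)\big|_{\mathrm{range}\,P(\epsilon)}$.

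Next I would compute this reduced operator to first order. Because $\lambda_0$ is semi-simple of multiplicity $2$, the bilinear pairing between its left and right eigenspaces is non-degenerate, so we may augment the given normalization $u_i^T v_i = 1$ to full biorthogonality $u_i^T v_j = \delta_{ij}$ by an invertible change of basis within the two-dimensional eigenspaces; with this choice, $P_0 = v_1 u_1^T + v_2 u_2^T$ and $u_i^T F_0 v_j = \lambda_0 \delta_{ij}$. Expanding the resolvent as a Neumann series,
\[ (zI - F(\epsilon))^{-1} \;=\; (zI - F_0)^{-1} + \epsilon\,(zI - F_0)^{-1} E\,(zI - F_0)^{-1} + O(\epsilon^2), \]
and representing $F(\epsilon)\big|_{\mathrm{range}\,P(\epsilon)}$ in the biorthogonal pair $(\{v_1, v_2\},\{u_1,u_2\})$, a direct contour-integral computation of the $(i,j)$ entry using $u_i^T F_0 v_j = \lambda_0 \delta_{ij}$ yields the matrix representation $\lambda_0 I_2 + \epsilon M + o(\epsilon)$, where $M_{ij} = u_i^T E v_j$ is exactly the matrix in \eqref{eqn:property2}. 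Taking eigenvalues of this $2 \times 2$ matrix gives the first-order coefficients $\lambda'$ and $\lambda''$ in the claimed expansions.

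The main obstacle is making the asymptotic manipulations rigorous. One needs the Neumann series to converge uniformly in $z$ along $\Gamma$ when $\epsilon$ is small, which reduces to a uniform bound on $\|(zI - F_0)^{-1}\|$ on $\Gamma$; this is routine since $\Gamma$ is a compact subset of the resolvent set of $F_0$. A more delicate secondary issue is the possibility of a degenerate first-order bifurcation, namely $\lambda' = \lambda''$, in which case higher-order Puiseux terms may be required to fully separate the two perturbed eigenvalues. The statement of the lemma is robust to this degeneracy because it only asserts an $o(\epsilon)$ remainder, which holds whether or not the $2 \times 2$ matrix has distinct eigenvalues; the same Riesz-projector argument applies verbatim.
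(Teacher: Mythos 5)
The paper never proves Lemma~\ref{property2}: it is imported verbatim from the cited reference ([SE04], Sec.~2.8) and used as a black box in Propositions~\ref{prop2} and \ref{prop:ubga}. Your proposal therefore supplies a proof where the paper supplies none, and the route you take --- enclose $\lambda_0$ by a contour $\Gamma$ in the resolvent set of $F_0$, show the Riesz projector $P(\epsilon)$ is analytic in $\epsilon$ with constant rank $2$, and extract the first-order term of the reduced operator, using that the compressions $u_i^T F(\epsilon) P(\epsilon) v_j$ and $u_i^T P(\epsilon) v_j$ pick up no first-order correction because the corresponding contour integrands have only a double pole at $\lambda_0$ --- is the standard Kato-style derivation and is sound. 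You also handle the two genuinely delicate points correctly: (i) the lemma as stated only normalizes $u_i^T v_i = 1$, and you rightly note that the formula requires full biorthogonality $u_i^T v_j = \delta_{ij}$; arranging this changes the matrix $M_{ij} = u_i^T E v_j$ only by a similarity, so its eigenvalues are unaffected, and in the paper's application the cross-orthogonality $v_1^T u_2 = v_2^T u_1 = 0$ is verified explicitly anyway; (ii) the $o(\epsilon)$ remainder survives even when $\lambda' = \lambda''$, since the eigenvalues of $\epsilon\bigl(M + o(1)\bigr)$ are $\epsilon\lambda_i(M) + o(\epsilon)$ by continuity. The one clause of the quoted statement your argument does not establish is that the two perturbed eigenvalues are literally \emph{distinct} for all $\epsilon > 0$; that is false in general (e.g., if $E$ acts as a scalar on the eigenspace), but it is not what the paper actually uses --- in Proposition~\ref{prop2} distinctness follows from $\lambda' = 0 \ne \lambda''$, which your expansions already deliver.
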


\subsection{Eigenvalues of $W_0$} \label{sec:eigenvaluesW_0}

In order to apply the perturbation results mentioned above, we need to first identify the eigenvalues of $W_0$. Observe, from \eqref{eqn:Wbar}, that $W_0$ is block diagonal, and so the eigenvalues of $W_0$ are the collective eigenvalues of $I - \overline{L}$ and $\overline{S}$.

\begin{lemma} \label{lemma1}
The matrix $I - \overline{L}$ is primitive, its largest eigenvalue is $1$, and all other eigenvalues of $I - \overline{L}$ have moduli strictly less than $1$.
\end{lemma}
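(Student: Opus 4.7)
The plan is to show that $I - \overline{L}$ is a row-stochastic matrix whose associated directed graph is strongly connected and possesses a self-loop at every vertex; then the claims follow immediately from the Perron--Frobenius theorem for primitive matrices.

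First, I would expand the matrix entry-wise. Using $\overline{L} = \frac{1}{n} L = \frac{1}{n}(\diag(A\mathbf{1}) - A)$, the diagonal entry of $I - \overline{L}$ at position $(i,i)$ equals $1 - \frac{1}{n}\sum_{j \ne i} a_{i,j}$, while the off-diagonal entry at $(i,j)$ equals $\frac{1}{n} a_{i,j}$. Invoking \eqref{eqn:constraints_a}, each $a_{i,j}\in[0,1]$, and node $i$ has at most $n-1$ out-neighbors in the sense of the index $j$, so each diagonal entry is at least $\frac{1}{n}$, hence strictly positive. The off-diagonal entries are nonnegative, and strictly positive precisely when $(i,j)\in\calE$. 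Thus $I - \overline{L}$ is nonnegative, has strictly positive diagonal, and its off-diagonal sparsity pattern coincides with the adjacency structure of $\calG$.

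Second, since $L\mathbf{1} = \mathbf{0}$, we have $(I - \overline{L})\mathbf{1} = \mathbf{1}$, so $I - \overline{L}$ is row-stochastic. It follows that its spectral radius is exactly $1$ and that $1$ is an eigenvalue (with right eigenvector $\mathbf{1}$). To upgrade this to the desired simplicity and strict spectral gap, I would argue primitivity: because $\calG$ is strongly connected (Assumption~\ref{stronglyConnected}) and the off-diagonal support of $I - \overline{L}$ matches $\calE$, the matrix is irreducible; because every diagonal entry is strictly positive, the associated directed graph has a self-loop at every vertex, so the matrix is aperiodic. An irreducible nonnegative matrix with at least one positive diagonal entry is primitive.

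Finally, I would invoke the Perron--Frobenius theorem for primitive matrices: the spectral radius is a simple eigenvalue, and every other eigenvalue has strictly smaller modulus. Combined with the preceding observation that the spectral radius equals $1$, this gives the full statement of the lemma. The main ``obstacle'' is really just bookkeeping: one must be careful to note that although the bound $a_{i,j}\le 1$ alone does not obviously yield strictly positive diagonal entries, summing at most $n-1$ such terms and dividing by $n$ leaves a margin of at least $\tfrac{1}{n}$, which is exactly what is needed for the self-loop argument that supplies aperiodicity.
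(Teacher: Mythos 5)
Your proposal is correct and follows essentially the same route as the paper: nonnegativity and strictly positive diagonal entries plus irreducibility from strong connectivity give primitivity, the row sums pin the spectral radius at $1$, and Perron--Frobenius supplies simplicity and the strict spectral gap. The only difference is that you explicitly justify the positive diagonal via the $1/n$ margin, a detail the paper asserts without computation.
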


\begin{proof}
Recall that $a_{i,j} \in [0,1]$ according to the constraints \eqref{eqn:constraints_a}, and so $I - \overline{L}$ is a square non-negative matrix. Moreover, all diagonal elements of $I - \overline{L}$ are strictly positive. Since $\calG$ is strongly connected, it follows that $I - \overline{L}$ is irreducible, and combining these facts gives that $I - \overline{L}$ is primitive. Since $I - \overline{L}$ is primitive and $(I - \overline{L})\mathbf{1} = \mathbf{1}$, we have $\rho(I - \overline{L}) \ge 1$. Also, its spectral radius is bounded by $\rho(I - \overline{L}) \le \| I - \overline{L} \|_{\infty} = 1$. Thus its largest eigenvalue is $1$ and, according to the Perron-Frobenius Theorem~\cite{Seneta}, all other eigenvalues of $I - \overline{L}$ are strictly less than $1$.
\end{proof}

Based on Lemma~\ref{lemma1} we know that $W_0$ has at least one eigenvalue equal to $1$. Next we need to determine the eigenvalues of $\overline{S} = (1 - \frac{1}{n}) I + \frac{1}{n}B$. If $\lambda_i(B)$ is an eigenvalue of $B$, then $\lambda_i(\overline{S}) = 1 - \frac{1}{n} + \frac{1}{n}\lambda_i(B)$ is an eigenvalue of $\overline{S}$, and so the real task is to characterize the eigenvalues of $B$. If all eigenvalues of $B$ have magnitude less than $1$, then all eigenvalues of $\overline{S}$ are also less than $1$, and so $1$ is a simple eigenvalue of $W_0$. On the other hand, if $1$ is an eigenvalue of $B$ then it is also an eigenvalue of $\overline{S}$, in which case $1$ is a multiple eigenvalue of $W_0$.

Under the assumptions of Theorem~\ref{thm:convergenceInExpectation} we have that $\|B\|_\infty \le 1$ or $\|B \|_1 \le 1$. Since $\rho(B) \le \min\{\|B\|_\infty, \|B\|_1\}$, it follows that the largest eigenvalue of $B$ is no larger than $1$. Moreover, it follows from Assumption~1 and \eqref{eqn:constraints_b} that $B$ corresponds to the weighted adjacency matrix of a strongly connected digraph, and hence $B$ is primitive. Thus, $\overline{S} = (1 - \frac{1}{n})I + \frac{1}{n}B$ is also primitive and its diagonal entries are all positive. Then, from the Perron-Frobenius Theorem, the largest eigenvalue $\lambda_1(B)$ of $B$ is simple and all other eigenvalues of $B$ have magnitude strictly less than $\lambda_1(B)$. 

It turns out that under the condition $\| B \|_\infty \le 1$ or $\| B \|_1 \le 1$ there are two possible cases: either $\lambda_1(\overline{S}) < 1$ or $\lambda_1(\overline{S}) = 1$. These cases are captured by the two following lemmas.

\begin{lemma} \label{lemma4}
Suppose that $B$ is either row stochastic, column stochastic, or doubly stochastic. Then $1$ is a simple eigenvalue of $\overline{S}$ and all other eigenvalues of $\overline{S}$ have moduli strictly less than $1$.
\end{lemma}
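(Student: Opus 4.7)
The plan is to reduce the statement about $\overline{S}$ to a statement about $B$, and then invoke the Perron--Frobenius machinery already set up in the paragraphs preceding the lemma. Since $\overline{S} = (1 - \tfrac{1}{n}) I + \tfrac{1}{n} B$, the eigenvalues satisfy $\lambda_i(\overline{S}) = 1 - \tfrac{1}{n} + \tfrac{1}{n} \lambda_i(B)$, with identical algebraic and geometric multiplicities. So it suffices to show that $1$ is a simple eigenvalue of $B$ and every other eigenvalue of $B$ has modulus strictly less than $1$.

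First I would establish that $1$ is indeed an eigenvalue of $B$. If $B$ is row stochastic (respectively column stochastic), then $B\mathbf{1} = \mathbf{1}$ (respectively $\mathbf{1}^T B = \mathbf{1}^T$) directly exhibits $1$ as an eigenvalue; the doubly stochastic case yields both. In the row stochastic case $\|B\|_\infty = 1$ so $\rho(B) \le 1$, in the column stochastic case $\|B\|_1 = 1$ so $\rho(B) \le 1$, and in either case we conclude that $\lambda_1(B) = 1 = \rho(B)$.

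Next I would appeal to the observation already made in the discussion leading up to the lemma: because $B$ satisfies \eqref{eqn:constraints_b} and $\calG$ is strongly connected (Assumption~\ref{stronglyConnected}), $B$ is the weighted adjacency matrix of a strongly connected digraph and hence irreducible. Combined with the fact that each diagonal entry of $\overline{S}$ is positive, which makes $\overline{S}$ primitive (equivalently, $B$ together with a self-loop at every vertex is primitive), the Perron--Frobenius theorem applies to give that $\rho(B) = 1$ is a simple eigenvalue of $B$ and all other eigenvalues have modulus strictly smaller than $1$.

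Finally I would translate this back to $\overline{S}$. Simplicity is preserved by the affine transformation $\lambda \mapsto 1 - \tfrac{1}{n} + \tfrac{1}{n}\lambda$, so $\lambda_1(\overline{S}) = 1$ is simple. For any other eigenvalue $\lambda_i(B)$ with $|\lambda_i(B)| < 1$, the triangle inequality gives
\begin{equation*}
\Bigl| 1 - \tfrac{1}{n} + \tfrac{1}{n} \lambda_i(B) \Bigr| \le 1 - \tfrac{1}{n} + \tfrac{1}{n} |\lambda_i(B)| < 1,
\end{equation*}
which completes the proof. No step looks like a genuine obstacle; the main subtlety is just being careful that irreducibility plus the positive diagonal of $\overline{S}$ (coming from the $(1-\tfrac{1}{n})I$ term) is what justifies invoking Perron--Frobenius in the strong form that gives strict spectral gap, rather than merely asserting $\rho(B) = 1$.
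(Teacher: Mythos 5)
The paper itself gives no proof of this lemma beyond citing standard nonnegative matrix theory, so your write-up is essentially the intended argument made explicit, and its overall architecture (reduce to $B$, use strong connectivity for irreducibility, use the positive diagonal of $\overline{S}$ for primitivity, invoke Perron--Frobenius, transfer back through the affine spectral map) is exactly right. There is, however, one overclaim in the middle that you should repair. Irreducibility of $B$ gives, via Perron--Frobenius, that $\rho(B)=1$ is a \emph{simple} eigenvalue, but it does \emph{not} give that all other eigenvalues of $B$ have modulus strictly less than $1$: an irreducible nonnegative matrix can be periodic, in which case it has several eigenvalues on the spectral circle. (Take $\calG$ to be a directed $n$-cycle with $b_{i,j}=1$ on each edge; then $B$ is a permutation matrix whose eigenvalues are all the $n$th roots of unity, every one of modulus $1$.) Your parenthetical ``$B$ together with a self-loop at every vertex is primitive'' is the right observation, but primitivity of $\overline{S}$ does not retroactively make $B$ primitive, so the sentence ``all other eigenvalues of $B$ have modulus strictly smaller than $1$'' is not justified as written, and your final display then leans on a hypothesis you have not established.

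The fix is small and you already have both ingredients. Either apply Perron--Frobenius directly to the primitive matrix $\overline{S}$ (it is irreducible with positive diagonal, $\rho(\overline{S})=1$ since $\overline{S}\mathbf{1}=\mathbf{1}$ or $\mathbf{1}^T\overline{S}=\mathbf{1}^T$ and the corresponding induced norm equals $1$), which yields the lemma in one step without ever discussing the non-Perron spectrum of $B$; or keep your route through $B$ but weaken the intermediate claim to ``$1$ is a simple eigenvalue of $B$ and every other eigenvalue $\lambda$ satisfies $|\lambda|\le 1$ and $\lambda\ne 1$,'' and then observe that the first inequality in your last display is \emph{strict} unless $\lambda$ is a nonnegative real number: equality in
\begin{equation*}
\Bigl| 1 - \tfrac{1}{n} + \tfrac{1}{n}\lambda \Bigr| \le 1 - \tfrac{1}{n} + \tfrac{1}{n}|\lambda|
\end{equation*}
forces $\lambda \ge 0$, which together with $|\lambda|\le 1$ and $\lambda\ne 1$ gives $|1-\tfrac{1}{n}+\tfrac{1}{n}\lambda|<1$ in all cases. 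Either repair closes the gap; everything else in your argument is sound.
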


Lemma~\ref{lemma4} follows from standard arguments in the theory of nonnegative matrices~\cite{ME01,Seneta}.

\begin{lemma} \label{lemma3}
Suppose that either
\begin{equation}
\max_j \sum_{i=1}^n b_{i,j} \le 1 \quad \text{ and } \quad \min_j \sum_{i=1}^n b_{i,j} < 1, \label{eqn:lemma3condition1}
\end{equation}
or
\begin{equation}
\max_i \sum_{j=1}^n b_{i,j} \le 1 \quad \text{ and } \quad \min_i \sum_{j=1}^n b_{i,j} < 1. \label{eqn:lemma3condition2}
\end{equation}
Then the modulus of the largest eigenvalue of $\overline{S}$ is strictly less than $1$.
\end{lemma}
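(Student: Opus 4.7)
The plan is to reduce the lemma to showing $\rho(B) < 1$ strictly, and then to prove this by a Perron--Frobenius contradiction argument, pairing the row-sum hypothesis with the left eigenvector of $B$ and the column-sum hypothesis with the right eigenvector.

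First I would observe that since $\overline{S} = (1 - 1/n) I + (1/n) B$, every eigenvalue of $\overline{S}$ is of the form $\mu = 1 - 1/n + \lambda/n$ for some eigenvalue $\lambda$ of $B$. The triangle inequality then gives
\begin{equation*}
|\mu| \;\le\; 1 - \tfrac{1}{n} + \tfrac{1}{n}|\lambda| \;\le\; 1 - \tfrac{1}{n} + \tfrac{1}{n}\rho(B).
\end{equation*}
Hence as soon as we establish $\rho(B) < 1$ strictly, every eigenvalue of $\overline{S}$ has modulus strictly less than $1$, and the lemma follows. Note that we already know from the discussion preceding the lemma that $\rho(B) \le \min\{\|B\|_1, \|B\|_\infty\} \le 1$ under either hypothesis; the work is to upgrade this non-strict bound to a strict one under the additional assumption that at least one row (or column) sum is strictly less than $1$.

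For that upgrade I would argue by contradiction, assuming $\rho(B) = 1$. Because $B$ is the weighted adjacency matrix of the strongly connected digraph $\calG$, it is irreducible (and in fact primitive, as already used in the text), so by Perron--Frobenius $1$ is a simple eigenvalue of $B$ with strictly positive right and left eigenvectors $v$ and $u$ satisfying $Bv = v$ and $u^T B = u^T$. Under the column-sum hypothesis \eqref{eqn:lemma3condition1}, I would use the right eigenvector: the identity $v_i = \sum_j b_{i,j} v_j$, summed over $i$, gives
\begin{equation*}
\sum_{i=1}^n v_i \;=\; \sum_{j=1}^n v_j \Bigl(\sum_{i=1}^n b_{i,j}\Bigr) \;=\; \sum_{j=1}^n v_j c_j,
\end{equation*}
where $c_j = \sum_i b_{i,j}$. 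Since $v > 0$ componentwise, $c_j \le 1$ for all $j$, and $c_{j_0} < 1$ for at least one $j_0$, the right-hand side is strictly less than $\sum_j v_j$, a contradiction. Under the row-sum hypothesis \eqref{eqn:lemma3condition2}, I would apply the same idea to the left eigenvector: summing $u_j = \sum_i u_i b_{i,j}$ over $j$ yields $\sum_j u_j = \sum_i u_i r_i$ with $r_i = \sum_j b_{i,j}$, and strict positivity of $u$ together with $r_i \le 1$ (strict at some $i_0$) again produces the contradiction.

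There is no substantive obstacle in this proof; the only thing that needs care is the pairing of the Perron eigenvector with the correct hypothesis (column sums $\leftrightarrow$ right eigenvector, row sums $\leftrightarrow$ left eigenvector), since the ``wrong'' pairing produces an identity that does not see the sum one is trying to bound. Once this is set up correctly, the Perron--Frobenius positivity of $u$ and $v$ converts the at-least-one strict inequality in the hypothesis into a global strict inequality on the weighted sum, contradicting $\rho(B) = 1$, and the reduction in the first paragraph completes the argument.
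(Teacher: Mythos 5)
Your proof is correct and follows essentially the same route as the paper's: a Perron--Frobenius contradiction obtained by summing the eigenvector identity against the column (or row) sums, with the two hypotheses handled by transposition. The only cosmetic differences are that you factor the argument through $\rho(B)$ rather than working with $\overline{S}$ directly, you make the positivity of the Perron vector explicit (which the paper leaves implicit), and you fold the paper's easy subcase $\max_j \sum_i b_{i,j} < 1$ into the general contradiction argument.
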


\begin{proof}
We will prove the lemma for the case where \eqref{eqn:lemma3condition1} holds; the proof for the other case follows by a similar argument where $\overline{S}$ is replaced with $\overline{S}^T$. If $\max_j \sum_{i=1}^n b_{i,j} < 1$ then the claim follows since
\begin{equation}
\rho(\overline{S}) \le \|\overline{S}\|_1 = 1 - \frac{1}{n} + \frac{1}{n} \max_j \sum_{i=1}^n b_{i,j} < 1.
\end{equation}
If
\begin{equation}
\min_j \sum_{i=1}^n b_{i,j} < \max_j \sum_{i=1}^n b_{i,j} = 1, \label{eqn:lemma3assumption}
\end{equation}
then suppose, to arrive at a contradiction, that $\rho(\overline{S}) = 1$.
%Then $\mathbf{1}^T B = \mathbf{1}$, and so $\mathbf{1}^T \overline{S} = \mathbf{1}$. 
Let $u$ denote the right eigenvector of $\overline{S}$ for which $\overline{S} u = u$, and let $u$ be normalized such that $\mathbf{1}^T u = 1$. Then $Bu = u$ also, and summing over the entries of $u$ we find that
\begin{align}
\sum_{i=1}^n u_i &= \sum_{i=1}^n \left(\sum_{j=1}^n b_{i,j} u_j\right) \\
&= \sum_{j=1}^n \left(\sum_{i=1}^n b_{i,j} \right) u_j \\
&< \sum_{j=1}^n u_j,
\end{align}
where the inequality follows from the assumption \eqref{eqn:lemma3assumption}. Since this is a contradiction, it must be true that all eigenvalues of $\overline{S}$ are strictly less than $1$ when either \eqref{eqn:lemma3condition1} or \eqref{eqn:lemma3condition2} holds.
\end{proof}

To summarize, in this subsection we have shown that $1$ is an eigenvalue of $W_0$ with multiplicity at least $1$. If $B$ is sub-stochastic (i.e., if elements of some rows and columns sum to a value less than $1$), then $1$ is a simple eigenvalue of $W_0$. On the other hand if $B$ is a (row-, column-, or doubly) stochastic matrix, then $1$ is an eigenvalue of $W_0$ with multiplicity $2$. In the next subsection we apply tools from perturbation theory to characterize the eigenvalues of $\overline{W}$.

\subsection{Perturbation Analysis}

The proof that the broadcast gossip iterations converge in expectation hinges on showing that $1$ is a simple eigenvalue of $\overline{W}$ with appropriate corresponding eigenvectors. When $1$ is a simple eigenvalue of $\overline{W}$ then the analysis is straightforward. 

\begin{proposition} \label{prop1}
Suppose that either of the conditions \eqref{eqn:lemma3condition1} or \eqref{eqn:lemma3condition2} of Lemma~\ref{lemma3} hold. Then there exists a number $\eta > 0$ such that if $\epsilon \in (0, \eta]$ then $1$ is a simple eigenvalue of the matrix $\overline{W} = W_0 + \epsilon E$ in \eqref{eqn:Wbar} and the moduli of all other eigenvalues of $\overline{W}$ are strictly less than $1$.
\end{proposition}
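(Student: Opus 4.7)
The plan is to combine the spectral description of $W_0$ from Section~\ref{sec:eigenvaluesW_0} with the continuity of eigenvalues under perturbation (Lemma~\ref{property1}). Under the hypotheses of Lemma~\ref{lemma3}, I would first argue that the eigenvalue $1$ of $W_0$ is already simple and strictly isolated from the rest of the spectrum. Since \eqref{eqn:Wbar} shows that $W_0$ is block lower triangular with diagonal blocks $I - \overline{L}$ and $\overline{S}$, its spectrum is the union (counted with multiplicity) of the spectra of those two blocks. Lemma~\ref{lemma1} gives that $1$ is a simple eigenvalue of $I - \overline{L}$ with all other eigenvalues having modulus strictly less than $1$, while Lemma~\ref{lemma3} gives $\rho(\overline{S}) < 1$. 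Hence $1$ is a simple eigenvalue of $W_0$ and every other eigenvalue of $W_0$ has modulus strictly less than $1$.

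Next I would observe that $1$ persists as an exact eigenvalue of $\overline{W}(\epsilon) = W_0 + \epsilon E$ for every $\epsilon \ge 0$, not merely at $\epsilon = 0$. The calculation in the paragraph preceding Theorem~\ref{thm:convergenceInExpectation} shows that $W_k [\mathbf{1}^T\ \mathbf{0}^T]^T = [\mathbf{1}^T\ \mathbf{0}^T]^T$ for every $k \in \calV$, so averaging yields $\overline{W}(\epsilon)[\mathbf{1}^T\ \mathbf{0}^T]^T = [\mathbf{1}^T\ \mathbf{0}^T]^T$; equivalently, the explicit form of $E$ in \eqref{eqn:Wbar} gives $E[\mathbf{1}^T\ \mathbf{0}^T]^T = \mathbf{0}$. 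In the language of the discussion preceding Lemma~\ref{property2}, $1$ is a \emph{stable} eigenvalue of $W_0$ under the perturbation $\epsilon E$.

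Finally, I would invoke Lemma~\ref{property1}: the eigenvalues of $\overline{W}(\epsilon)$ are continuous functions of $\epsilon$. Setting
\[
\delta = \tfrac{1}{2}\bigl(1 - \max\{|\lambda| : \lambda \text{ is an eigenvalue of } W_0,\ \lambda \neq 1\}\bigr) > 0,
\]
one can then choose $\eta > 0$ such that, for every $\epsilon \in (0, \eta]$, exactly one eigenvalue of $\overline{W}(\epsilon)$ (counted with algebraic multiplicity) lies in $\{z \in \mathbb{C} : |z-1| < \delta\}$ while the remaining $2n-1$ eigenvalues lie in $\{z \in \mathbb{C} : |z| < 1 - \delta\}$. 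Since $1$ is known to be an eigenvalue of $\overline{W}(\epsilon)$, the single nearby eigenvalue must equal $1$ exactly, and its algebraic multiplicity being $1$ makes it simple. The one delicate point in the plan is precisely the stability identity $E[\mathbf{1}^T\ \mathbf{0}^T]^T = \mathbf{0}$: without it, Lemma~\ref{property1} would only place some eigenvalue of $\overline{W}(\epsilon)$ near $1$ rather than pin it exactly at $1$, and that exactness is essential for the subsequent convergence-in-expectation argument in Theorem~\ref{thm:convergenceInExpectation}.
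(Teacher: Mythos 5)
Your proposal is correct and follows essentially the same route as the paper: use the block-triangular structure of $W_0$ together with Lemmas~\ref{lemma1} and \ref{lemma3} to conclude that $1$ is a simple, isolated eigenvalue of $W_0$, then invoke continuity of the eigenvalues in $\epsilon$ (Lemma~\ref{property1}) to obtain $\eta$. Your explicit observation that $E[\mathbf{1}^T\ \mathbf{0}^T]^T = \mathbf{0}$, so that $1$ remains an exact eigenvalue of $\overline{W}$ for all $\epsilon$, is a detail the paper's proof leaves implicit (relying on the earlier remark that $[\mathbf{1}^T\ \mathbf{0}^T]^T$ is a right eigenvector of every $W_k$), and spelling it out makes the argument tighter.
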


\begin{proof}
From the discussion in Section~\ref{sec:eigenvaluesW_0}, we know that the eigenvalues of $\overline{W}$ are the collective eigenvalues of $W_0$ and $\overline{S}$. Under the conditions of the proposition, Lemmas~\ref{lemma1} and \ref{lemma3} provide that $1$ is a simple eigenvalue of $W_0$ and the moduli of all other eigenvalues of $W_0$ are strictly less than $1$. Observe that $\overline{W} = W_0 + \epsilon E$ depends on $\epsilon$ in a continuous manner, and so the eigenvalues of $\overline{W}$ are continuous functions of the perturbation parameter $\epsilon$. Therefore, there exists $\eta > 0$ so that $1$ is a simple eigenvalue of $\overline{W}$ and the moduli of all other eigenvalues of $\overline{W}$ are strictly less than $1$ when $\epsilon \in (0, \eta]$.
\end{proof}

When $1$ is a double eigenvalue of $\overline{W}$ we arrive at the same conclusion, but the proof requires a bit more effort.

\begin{proposition} \label{prop2}
Suppose that $B$ is either row stochastic, column stochastic, or doubly stochastic. Then there exists a number $\eta > 0$ such that if $\epsilon \in (0, \eta]$ then $1$ is a simple eigenvalue of the matrix $\overline{W}$ and the moduli of all other eigenvalues of $\overline{W}$ are strictly less than $1$.
\end{proposition}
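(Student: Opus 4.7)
The plan is to apply Lemma~\ref{property2} to the decomposition $\overline{W} = W_0 + \epsilon E$ given in \eqref{eqn:Wbar}. When $B$ is (row-, column-, or doubly) stochastic, Lemmas~\ref{lemma1} and~\ref{lemma4} together imply that $1$ is an eigenvalue of the block-triangular matrix $W_0$ of algebraic multiplicity exactly $2$---once in each diagonal block---while every other eigenvalue of $W_0$ has modulus strictly less than $1$. My first task is to check semi-simplicity and exhibit eigenvector bases. For a right eigenvector $v = [v_1^T,\, v_2^T]^T$, $W_0 v = v$ forces $\overline{L} v_1 = \mathbf{0}$ and $\overline{S} v_2 = v_2$; strong connectivity gives $v_1 \in \mathrm{span}(\mathbf{1})$, and Perron--Frobenius applied to the primitive matrix $\overline{S}$ gives $v_2 \in \mathrm{span}(\mu)$ for some entry-wise positive right Perron vector $\mu$ of $\overline{S}$. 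Hence the right eigenspace is two-dimensional, spanned by $v^{(1)} = [\mathbf{1}^T,\, \mathbf{0}^T]^T$ and $v^{(2)} = [\mathbf{0}^T,\, \mu^T]^T$. An analogous analysis of $u^T W_0 = u^T$ shows the left eigenspace is spanned by $[\pi^T,\, \mathbf{0}^T]^T$ and $[\nu^T,\, \nu^T]^T$, where $\pi > 0$ is the stationary distribution of the irreducible row-stochastic matrix $I - \overline{L}$ (so $\pi^T \overline{L} = \mathbf{0}^T$) and $\nu > 0$ is the left Perron eigenvector of $\overline{S}$. Geometric multiplicity matches algebraic multiplicity, so the eigenvalue is semi-simple and Lemma~\ref{property2} applies.

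Next I would construct biorthonormal bases satisfying $u^{(i)T} v^{(j)} = \delta_{ij}$, by taking each $u^{(i)}$ as an appropriate linear combination of $[\pi^T,\, \mathbf{0}^T]^T$ and $[\nu^T,\, \nu^T]^T$ and solving a small $2\times 2$ linear system. I would then evaluate the $2 \times 2$ matrix appearing in \eqref{eqn:property2}. The first block column of $E$ is zero, so $E v^{(1)} = \mathbf{0}$ and the first column of that $2 \times 2$ matrix vanishes; its eigenvalues are therefore $0$ and $u^{(2)T} E v^{(2)}$. A short calculation using $E v^{(2)} = [(\overline{D}\mu)^T,\, -(\overline{D}\mu)^T]^T$ yields
\begin{equation*}
u^{(2)T} E v^{(2)} = -\frac{\nu^T \mathbf{1}}{\nu^T \mu}\, \pi^T \overline{D} \mu,
\end{equation*}
which is strictly negative because $\pi,\nu,\mu,\mathbf{1}$ are all entry-wise positive and the diagonal entries $(\overline{D})_{jj} = \sum_k d^{(k)}_j$ are strictly positive (strong connectivity guarantees every $j$ has at least one in-neighbor $k$, so $d^{(k)}_j > 0$ for that $k$).

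Applying Lemma~\ref{property2}, the double eigenvalue $1$ of $W_0$ bifurcates into $1 + o(\epsilon)$ and $1 + \epsilon \lambda'' + o(\epsilon)$ with $\lambda'' < 0$. The zero branch coincides with an exact eigenvalue $1$ that $\overline{W}$ carries for every $\epsilon$, since $W_k [\mathbf{1}^T,\, \mathbf{0}^T]^T = [\mathbf{1}^T,\, \mathbf{0}^T]^T$ for each $k$ (as noted just after \eqref{eqn:expectedUpdate}), so $[\mathbf{1}^T,\, \mathbf{0}^T]^T$ remains a right eigenvector of $\overline{W}$ at the eigenvalue $1$; the other branch is pushed strictly below $1$ for all sufficiently small $\epsilon > 0$. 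Combined with Lemma~\ref{property1} (continuity of eigenvalues) applied to the remaining eigenvalues of $W_0$, which already have modulus strictly less than $1$ and therefore remain so on a neighborhood of $\epsilon = 0$, there exists $\eta > 0$ such that for every $\epsilon \in (0,\eta]$, $1$ is a simple eigenvalue of $\overline{W}$ and all other eigenvalues of $\overline{W}$ have moduli strictly less than $1$.

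The main obstacle will be the bookkeeping for the semi-simple case: identifying the positive Perron vectors $\pi,\nu,\mu$ (in particular, recognizing that $I - \overline{L}$ is irreducible and row-stochastic so Perron--Frobenius supplies a positive left null vector $\pi$ of $\overline{L}$), carrying out the biorthonormal basis change, and unwinding the bifurcation matrix carefully enough that the sign of $\lambda''$ comes out manifestly negative under all three stochasticity assumptions on $B$.
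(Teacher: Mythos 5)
Your proposal is correct and follows essentially the same route as the paper: decompose $\overline{W} = W_0 + \epsilon E$, verify that $1$ is a semi-simple double eigenvalue of $W_0$ by exhibiting biorthogonal left/right eigenbases, apply Lemma~\ref{property2} to find the bifurcation derivatives $0$ and a strictly negative $\lambda''$, identify the stable branch with the persistent eigenvector $[\mathbf{1}^T\ \mathbf{0}^T]^T$, and handle the remaining eigenvalues by continuity. The only difference is your (somewhat more carefully normalized) choice of basis for the two eigenspaces, which does not change the argument.
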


\begin{proof}
The proof follows from a generalization of an argument in~\cite{CA12}. Under the conditions of the proposition, Lemmas~\ref{lemma1} and \ref{lemma4} provide that $1$ is an eigenvalue of $W_0$ with multiplicity $2$. One can verify that $1$ is a semi-simple eigenvalue of $W_0$ since there exist two linearly independent right eigenvectors $u_1$ and $u_2$, with corresponding linearly independent left eigenvectors $v_1$ and $v_2$. These eigenvectors are given by
\begin{equation}
u_1 = \begin{bmatrix} \mathbf{0} \\ \frac{1}{w^T q} q\end{bmatrix}, \quad
u_2 = \begin{bmatrix} \mathbf{1} \\ \frac{-1}{w^T q} q\end{bmatrix}, \quad
v_1 = \begin{bmatrix} w \\ w\end{bmatrix}, \quad
v_2 = \begin{bmatrix} p \\ \mathbf{0}\end{bmatrix},
\end{equation}
where $w$ is the eigenvector of $\overline{S}$ satisfying $w^T \overline{S} = w^T$ and $w^T \mathbf{1} = 1$, and where $p$ and $q$ are the left and right eigenvectors of $I - \overline{L}$ corresponding to the eigenvalue $1$, normalized so that $p^T \mathbf{1} = 1$ and $\mathbf{1}^T q = 1$. Note that these eigenvectors exist as a consequence of Lemmas~\ref{lemma1} and \ref{lemma4}. One can verify that
\begin{align}
v_1^T u_1 = v_2^T u_2 = 1, \quad
v_1^T u_2 = v_2^T u_1 = 0.
\end{align}
Also note that all entries of the vectors $w$, $p$, and $q$ are positive since $I - \overline{L}$ and $\overline{S}$ are primitive matrices.

Applying Lemma~\ref{property2} with these values, we find that the semi-simple eigenvalue $1$ bifurcates into two eigenvalues
\begin{align}
\lambda_{1,1}(\epsilon) &= 1 + \epsilon \lambda' + o(\epsilon) \\
\lambda_{1,2}(\epsilon) &= 1 + \epsilon \lambda'' + o(\epsilon),
\end{align}
where $\lambda'$ and $\lambda''$ are the eigenvalues of the matrix
\begin{equation}
\begin{bmatrix}
v_1^T E u_1 & v_1^T E u_2 \\
v_2^T E u_1 & v_2^T E u_2
\end{bmatrix}
=
\begin{bmatrix}
0 & 0 \\
\frac{1}{w^T q} p^T \overline{D} q & \frac{-1}{w^T q} p^T \overline{D} q
\end{bmatrix}.
\end{equation}
Thus, we have $\lambda' = 0$. Also, since $p$, $q$, $w$, and the diagonal entries of $\overline{D}$ are all strictly positive, we have $\lambda'' = \frac{-1}{w^T q} p^T \overline{D} q < 0$. It follows that $\lambda_{1,1} = 1$ is the stable eigenvalue of $\overline{W}$ corresponding to the right eigenvector $[\mathbf{1}^T \ \mathbf{0}^T]^T$. Moreover, for sufficiently small $\epsilon$, we have $\lambda_{1,2}(\epsilon) < 1$ since $d \lambda_{1,2}(\epsilon) / d \epsilon \vert_{\epsilon = 0} = \lambda'' < 0$. Therefore, there must exist a positive constant $\eta_1$ so that $|\lambda_{1,2}(\epsilon)| < |\lambda_{1,1}(\epsilon)|$ when $\epsilon \in (0, \eta_1]$. In addition. the eigenvalues of $\overline{W}$ are continuous functions of $\epsilon$, so the moduli of $\lambda_{1,1}(\epsilon) = \lambda_1(\overline{W})$ and $\lambda_{1,2}(\epsilon) = \lambda_{2}(\overline{W})$ will dominate the moduli of all other eigenvalues of $\overline{W}$ provided that $\epsilon > 0$ is sufficiently small; i.e., there exists an $\eta_2 > 0$ such that $\max_{i \ne 1,2} |\lambda_i(\overline{W})| < 1$ when $\epsilon \in (0, \eta_2]$. Therefore, when $\epsilon \in (0, \min\{\eta_1, \eta_2\}]$, then $1$ is a simple eigenvalue of $\overline{W}$ and the moduli of all other eigenvalues are strictly less than $1$.
\end{proof}

We are now ready to complete the proof of Theorem~\ref{thm:convergenceInExpectation}.

\subsection{Proof of Theorem~\ref{thm:convergenceInExpectation}}

Suppose that $1$ is a simple eigenvalue of a matrix $\overline{W}$ and the moduli of all other eigenvalues are strictly less than $1$. Let $u$ and $v$ denote the left and right eigenvectors of $\overline{W}$ corresponding to the eigenvalue $1$, normalized so that $u^T v = 1$. Then it is known~\cite{CA12,XI04} that $\lim_{t \rightarrow \infty} \overline{W}^t = u v^T$.

From Propositions~\ref{prop1} and \ref{prop2} we know that, under the conditions of Theorem~\ref{thm:convergenceInExpectation}, $1$ is a simple eigenvalue of $\overline{W}$ with corresponding right eigenvector $[\mathbf{1}^T \ \mathbf{0}^T]^T$, and the moduli of all other eigenvalues of $\overline{W}$ are strictly less than $1$. Let $[w_1^T \ w_2^T]^T$ be the left eigenvector of $\overline{W}$ satisfying
\begin{equation}
[w_1^T \ w_2^T] \overline{W} = [w_1^T \ w_2^T],
\end{equation}
normalized such that $[w_1^T \ w_2^T] [\mathbf{1}^T \ \mathbf{0}^T]^T = 1$. Then the expected broadcast gossip updates \eqref{eqn:expectedUpdate} converge to a limit
\begin{align}
\lim_{t \rightarrow \infty} \mathbb{E} \left( \begin{bmatrix} x(t) \\ y(t) \end{bmatrix} \Bigg| \begin{bmatrix} x(0) \\ y(0) \end{bmatrix} \right) &= \lim_{t \rightarrow \infty} \overline{W}^t \begin{bmatrix} x(0) \\ y(0) \end{bmatrix} \\
&= \begin{bmatrix} \mathbf{1} \\ \mathbf{0} \end{bmatrix} [w_1^T \ w_2^T] \begin{bmatrix} x(0) \\ y(0) \end{bmatrix} \\
&= \big(w_1^T x(0)\big) \begin{bmatrix} \mathbf{1} \\ \mathbf{0} \end{bmatrix},
\end{align}
where the last line follows since $y(0) = \mathbf{0}$. This completes the proof of Theorem~\ref{thm:convergenceInExpectation}.

\section{Convergence in the Second Moment} \label{sec:convergenceSecondMoment}

The previous section dealt with convergence in expectation. Next we present a general condition for convergence in the second moment of the broadcast gossip algorithm described in Section~\ref{sec:framework}.

\begin{theorem} \label{thm:secondMoment}
Suppose that $B$ is a (row-, column-, or doubly) stochastic matrix. Let $v \in \mathbb{R}^n$ be the vector satisfying $v^T B = v^T$ normalized such that $v^T \mathbf{1} = 1$. The sequence of vectors $\{x(t), y(t)\}_{t=1}^\infty$ generated by the broadcast gossip updates \eqref{eqn:updatexj}--\eqref{eqn:updateyi} satisfy
\begin{equation}
\lim_{t \rightarrow \infty} \mathbb{E}\left[ \left\lVert \begin{bmatrix} x(t) \\ y(t)\end{bmatrix} - \begin{bmatrix} (v^T x(0)) \mathbf{1} \\ \mathbf{0}\end{bmatrix} \right\rVert^2_2 \ \Bigg \vert \ \begin{bmatrix} x(0) \\ y(0) \end{bmatrix}\right] = 0 \label{eqn:secondMomentConvergence}
\end{equation}
if and only if 
\begin{equation}
\rho\left(\mathbb{E}[W(t) \otimes W(t)] - \left(\begin{bmatrix} \mathbf{1} \\ \mathbf{0} \end{bmatrix} \otimes \begin{bmatrix} \mathbf{1} \\ \mathbf{0} \end{bmatrix}\right) \left(\begin{bmatrix} v \\ v \end{bmatrix} \otimes \begin{bmatrix} v \\ v \end{bmatrix}\right)^T\right) < 1. \label{eqn:secondMomentCondition}
\end{equation}
\end{theorem}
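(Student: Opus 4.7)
The plan is to reduce mean-square convergence to a spectral-radius condition on the Kronecker-lifted update, after identifying the linear functional that is deterministically conserved and selects the consensus value.

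First, I would establish a deterministic conservation law. Stacking the state as $z(t) = [x(t)^T\ y(t)^T]^T$, I would verify that $[v^T\ v^T] W_k = [v^T\ v^T]$ for every $k$: the two block columns evaluate to $v^T(I-L_k) + v^T L_k = v^T$ and $\epsilon v^T D_k + v^T(S_k - \epsilon D_k) = v^T S_k$, and expanding $S_k = I - e_k e_k^T + B e_k e_k^T$ together with $v^T B = v^T$ yields $v^T S_k = v^T$. Consequently $[v^T\ v^T] z(t) = v^T x(0)$ for all $t$ almost surely, using $y(0)=\mathbf{0}$. Because $W_k[\mathbf{1}^T\ \mathbf{0}^T]^T = [\mathbf{1}^T\ \mathbf{0}^T]^T$ (from $L_k\mathbf{1}=\mathbf{0}$), the target $z^\star = [(v^T x(0))\mathbf{1}^T\ \mathbf{0}^T]^T$ is a deterministic fixed point of every $W_k$, so the error $e(t) = z(t) - z^\star$ satisfies $e(t+1) = W(t) e(t)$ and $[v^T\ v^T] e(t) = 0$ almost surely.

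Next, I would pass to the second moment by vectorization. Using $\operatorname{vec}(W e e^T W^T) = (W \otimes W)\operatorname{vec}(ee^T)$ and the independence of $W(t)$ from $e(t)$, the vector $\mu(t) = \operatorname{vec}(\mathbb{E}[e(t)e(t)^T])$ obeys $\mu(t+1) = M\mu(t)$ with $M = \mathbb{E}[W(t) \otimes W(t)]$, and $\mathbb{E}\|e(t)\|_2^2 = \operatorname{vec}(I)^T \mu(t)$. Writing $r = [\mathbf{1}^T\ \mathbf{0}^T]^T \otimes [\mathbf{1}^T\ \mathbf{0}^T]^T$ and $\ell = [v^T\ v^T]^T \otimes [v^T\ v^T]^T$, the Kronecker identities applied to the relations from the first step give $Mr = r$, $\ell^T M = \ell^T$, and $\ell^T r = (v^T\mathbf{1})^2 = 1$. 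These yield $M(r\ell^T) = r\ell^T = (r\ell^T)M$ and $(r\ell^T)^2 = r\ell^T$, so $(M - r\ell^T)(r\ell^T) = (r\ell^T)(M - r\ell^T) = 0$. The identity $\ell^T \operatorname{vec}(ee^T) = ([v^T\ v^T]e)^2$ combined with the conservation law gives $\ell^T\mu(t) = 0$ for all $t$. Since $\ker\ell^T$ is $M$-invariant and $M$ coincides with $M - r\ell^T$ on $\ker\ell^T$, one obtains $\mu(t) = (M - r\ell^T)^t\mu(0)$.

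The sufficiency direction is then immediate: if $\rho(M - r\ell^T) < 1$ then $\mu(t) \to \mathbf{0}$ geometrically, so $\mathbb{E}\|e(t)\|_2^2 \to 0$. For the necessity, I would note that $(M - r\ell^T)r = 0$, so the spectrum of $M - r\ell^T$ is $\{0\} \cup \operatorname{spec}(M|_{\ker\ell^T})$; if this spectral radius is at least $1$, there is a (possibly generalized) eigenvector $u \in \ker\ell^T$ of $M$ of modulus at least $1$, and I would exhibit $x(0) \in \mathbb{R}^n$ for which $\mu(0)$ has a nonzero projection along $u$, preventing decay of $\mathbb{E}\|e(t)\|_2^2$. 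The main obstacle is this last density step: $\mu(0)$ is constrained to be the vectorization of a rank-one PSD matrix $e(0)e(0)^T$ with $e(0)$ confined to the $(n-1)$-dimensional slice $\{(\xi,\mathbf{0}) : v^T\xi = 0\} \subset \mathbb{R}^{2n}$. I would complete it by a polarization argument showing that vectorizations of such rank-one outer products span all symmetric matrices supported on that slice, which is rich enough to reach every unstable eigenspace of $M|_{\ker\ell^T}$ and thereby produce a non-converging initial condition.
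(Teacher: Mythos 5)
Your proposal follows essentially the same route as the paper's proof: you identify $[\mathbf{1}^T\ \mathbf{0}^T]^T$ as a deterministic fixed point of every $W_k$ and $[v^T\ v^T]$ as a conserved functional, vectorize the second moment to obtain the lifted recursion driven by $M=\mathbb{E}[W(t)\otimes W(t)]$, verify that $r$ and $\ell$ are right and left eigenvectors of $M$ with $\ell^T r=1$, and deflate to $M-r\ell^T$. (The paper works with the projector $J=[\mathbf{1}^T\ \mathbf{0}^T]^T[v^T\ v^T]$ and only records $[v^T\ v^T]\mathbb{E}[W(t)]=[v^T\ v^T]$, whereas you prove the stronger per-realization identity $[v^T\ v^T]W_k=[v^T\ v^T]$; either suffices to get $\ell^T\mu(t)=0$.) Your sufficiency direction is complete and, if anything, more explicit than the paper's about why $\mu(t)=(M-r\ell^T)^t\mu(0)$.

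The one substantive issue is the necessity direction. You correctly isolate the difficulty that the paper glosses over: the admissible $\mu(0)$ are vectorizations of rank-one PSD matrices $e(0)e(0)^T$ with $e(0)$ confined to the slice $V_0=\{(\xi,\mathbf{0}):v^T\xi=0\}$. Your polarization step does show that these outer products span $\mathrm{Sym}(V_0)$, but that span has dimension $n(n-1)/2$ inside the $(4n^2-1)$-dimensional space $\ker\ell^T$, and the concluding claim that it is ``rich enough to reach every unstable eigenspace of $M|_{\ker\ell^T}$'' is asserted rather than proved: an eigenvector of $M-r\ell^T$ with modulus at least $1$ could a priori lie outside the $M$-invariant subspace generated by $\mathrm{Sym}(V_0)$, in which case no choice of $x(0)$ would excite it and failure of the spectral condition would not contradict \eqref{eqn:secondMomentConvergence}. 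So your completion plan leaves a genuine gap. For what it is worth, the paper's own necessity argument makes exactly the same leap silently --- it passes from ``$\mathbb{E}[m(t)m(t)^T]\to 0$ for every $m(0)$'' directly to \eqref{eqn:secondMomentCondition} --- so you have at least located the weak point precisely, even if you have not closed it.
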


\begin{remark}
Theorem~\ref{thm:secondMoment} can be viewed as generalizing the convergence conditions for linear iterations described in~\cite{XI04,BO10} to update matrices which have the form \eqref{eqn:W_k}.
\end{remark}

\begin{proof}
We first prove that \eqref{eqn:secondMomentCondition} implies \eqref{eqn:secondMomentConvergence}. Let $z(t) = [x(t)^T \ y(t)^T]^T$ and define the error vector $m(t) = z(t) - J z(0)$, where
\begin{equation}
J = \begin{bmatrix} \mathbf{1} \\ \mathbf{0}\end{bmatrix} \begin{bmatrix} v^T & v^T \end{bmatrix}.
\end{equation}
Observe that $W_k J = J$ for all $k \in \calV$, since $L_k \mathbf{1} = \mathbf{0}$, and therefore $m(t+1) = W(t) m(t)$. Let $M(t) = m(t) m(t)^T$. Then $M(t+1) = W(t) M(t) W(t)^T$. Construct a vector $\widetilde{m}(t) \in \mathbb{R}^{4n^2}$ by stacking the elements of $M(t)$ column-wise, and observe that
\begin{align}
\mathbb{E}[\widetilde{m}(t) | \widetilde{m}(0)] &= \prod_{s=0}^{t-1} \mathbb{E}[W(s) \otimes W(s)] \ \widetilde{m}(0) \\
&= \mathbb{E}[W(1) \otimes W(1)]^t \ \widetilde{m}(0),
\end{align}
since the matrices $W(t)$ are independent and identically distributed. 

Under the assumption that $v^T B = v^T$, one can verify that $[v^T \ v^T] \mathbb{E}[W(t)] = [v^T \ v^T]$. Note that such a vector exists since $B$ is primitive\footnote{To see why, recall that $\calG$ is strongly connected (Assumption~1).} and either row or column stochastic (by assumption). We have also seen that $\mathbb{E}[W(t)] [\mathbf{1}^T \ \mathbf{0}^T]^T = [\mathbf{1}^T \ \mathbf{0}^T]^T$. It follows that $([v^T \ v^T]^T \otimes [v^T \ v^T]^T)$ and $([\mathbf{1}^T \ \mathbf{0}^T]^T \otimes [\mathbf{1}^T \ \mathbf{0}^T]^T)$ are the left and right eigenvectors of $\mathbb{E}[W(t) \otimes W(t)]$ corresponding to the eigenvalue $1$. If assumption \eqref{eqn:secondMomentCondition} holds, then we have
\begin{align}
\lim_{t \rightarrow \infty} \mathbb{E}[\widetilde{m}(t) | \widetilde{m}(0)] &= \left(\begin{bmatrix} \mathbf{1} \\ \mathbf{0} \end{bmatrix} \otimes \begin{bmatrix} \mathbf{1} \\ \mathbf{0} \end{bmatrix}\right) \left(\begin{bmatrix} v \\ v \end{bmatrix} \otimes \begin{bmatrix} v \\ v \end{bmatrix}\right)^T \widetilde{m}(0) \\
&= 0, \label{eqn:equalsZero}
\end{align}
where the last equality follows because $v$ is orthogonal to $m(0)$ and hence the vector $([v^T \ v^T]^T \otimes [v^T \ v^T]^T)$ is orthogonal to $\widetilde{m}(0)$.

Observe that \eqref{eqn:equalsZero} implies that $\mathbb{E}[m_i(t)^2] \rightarrow 0$ for all $i$, and therefore
\begin{equation}
\mathbb{E}\big[m(t)^T m(t) \big| m(0)\big] = \sum_{i=1}^{2n} \mathbb{E}[m_i(t)^2] \rightarrow 0,
\end{equation}
which gives us \eqref{eqn:secondMomentConvergence}.

Next, to see that \eqref{eqn:secondMomentConvergence} implies \eqref{eqn:secondMomentCondition}, observe that if \eqref{eqn:secondMomentConvergence} holds then it must be true that $\mathbb{E}[m_i(t)^2] \rightarrow 0$ for all $i=1,\dots,2n$. By the Cauchy-Schwarz inequality, we have
\begin{equation}
\mathbb{E}[m_i(t) m_j(t)]^2 \leq \mathbb{E}[m_i(t)^2] \cdot \mathbb{E}[m_j(t)^2] \rightarrow 0.
\end{equation}
Therefore, each entry in the matrix $\mathbb{E}[m(t) m(t)^T]$ tends to $0$ as $t \rightarrow \infty$, independent of $m(0)$, which implies that \eqref{eqn:secondMomentCondition} must hold.
\end{proof}

\section{Unbiased Broadcast Gossip} \label{sec:ubga}

This section proposes a particular choice of values for the parameters $a_{j,k}$, $b_{j,k}$, and $d^{(k)}_j$, corresponding to a particular family of broadcast gossip algorithms. For the choice considered in this section, we guarantee that the broadcast gossip updates \eqref{eqn:updatexj}--\eqref{eqn:updateyi} converge to the average consensus. For this reason we refer to these as \emph{unbiased broadcast gossip algorithms} (UBGA).

Recall that $\mathcal{N}_j^-$ denotes the set of out-neighbors of node $j$ and $\mathcal{N}_j^+$ denotes the in-neighbors of node $j$. Let $|\mathcal{N}|$ denote the cardinality of the set $\mathcal{N}$. Let $\delta_j^- = |\mathcal{N}_j^-|$ denote the out-degree of node $j$, and let $\delta_j^+ = |\mathcal{N}_j^+|$ denote its in-degree.

Unbiased broadcast gossip algorithms are obtained by setting
\begin{align}
d^{(k)}_j &= \begin{cases} 1 / \delta_j^+ & \text{ if } j \in \mathcal{N}_k^- \\ 0 & \text{ otherwise } \end{cases} \label{eqn:ubga_d}\\
B_{j,k} &= \begin{cases} 1 / \delta_k^- & \text{ if } j \in \mathcal{N}_k^- \\ 0 & \text{ otherwise,} \end{cases} \label{eqn:ubga_B}
\end{align}
and taking $A_{j,k}$ to be any values which satisfy the constraints \eqref{eqn:constraints_a}. In order to implement such a protocol, each node $j \in \mathcal{N}_k^-$ that receives messages from $k$ needs to know $\delta_k^-$, the out-degree of the broadcasting node $k$. If $k$ knows its out-degree (i.e., the number of neighbors that receive its broadcasts) then this can be accomplished by having $k$ broadcast the value of $\delta_k^-$ to all nodes in $\mathcal{N}_k^-$. If $\calG$ is undirected\footnote{I.e., $(i,j) \in \calE$ if and only if $(j,i) \in \calE$}, as is assumed in~\cite{AY09}, then $\delta_k^- = \delta_k^+$, and so it is reasonable for $k$ to know its out-degree. On the other hand, in a general directed graph it may be difficult or impractical for $k$ to know its out degree since $k$ may not receive messages directly from all nodes $j \in \mathcal{N}_k^-$. In Section~\ref{sec:bbga} below we describe and analyze an alternative algorithm which does not require knowledge of $\delta_k^-$, but for which we are not guaranteed to achieve consensus on the average. First we discuss theoretical convergence guarantees for UBGA.

First, note that for $\epsilon$ sufficiently small, UBGA asymptotically converges in expectation to the average consensus in the sense that $\mathbb{E}[x(t)] \rightarrow \frac{1}{n} \mathbf{1} \mathbf{1}^T x(0)$ and $\mathbb{E}[y(t)] \rightarrow \mathbf{0}$. To see why, observe that for the choice of parameters given in \eqref{eqn:ubga_B} we have $\mathbf{1}^T B = \mathbf{1}^T$. Therefore, by Theorem~\ref{thm:convergenceInExpectation} (see also Remark~1), there exists an $\eta > 0$ such that UBGA converges in expectation to the average consensus for $\epsilon \in (0, \eta]$. It turns out that UBGA also converges to the average consensus solution in the mean-squared sense.

\begin{proposition} \label{prop:ubga}
Let the parameters $d^{(k)}_j$ and $B_{j,k}$ be chosen as in \eqref{eqn:ubga_d} and \eqref{eqn:ubga_B}, and take $v = \frac{1}{n}\mathbf{1}$. Then there exists a constant $\eta > 0$ such that if $\epsilon \in (0, \eta]$ then~\eqref{eqn:secondMomentCondition} holds, and so 
\begin{equation}
\lim_{t \rightarrow \infty} \mathbb{E}\left[ \left\lVert \begin{bmatrix} x(t) \\ y(t)\end{bmatrix} - \begin{bmatrix} \frac{1}{n}\left(\mathbf{1}^T x(0)\right) \mathbf{1} \\ \mathbf{0}\end{bmatrix} \right\rVert^2_2 \ \Bigg \vert \ \begin{bmatrix} x(0) \\ y(0) \end{bmatrix}\right] = 0.
\end{equation}
\end{proposition}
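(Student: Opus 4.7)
The plan is to apply Theorem~\ref{thm:secondMoment} with $v = \frac{1}{n}\mathbf{1}$ and then establish the spectral-radius condition~\eqref{eqn:secondMomentCondition} by a perturbation argument on $\epsilon$. First I would verify the hypotheses of Theorem~\ref{thm:secondMoment}: the choice~\eqref{eqn:ubga_B} gives $\sum_j B_{j,k} = \sum_{j \in \calN_k^-} 1/\delta_k^- = 1$ for every $k$, so $B$ is column stochastic and (by Assumption~\ref{stronglyConnected}) primitive. Consequently $v = \frac{1}{n}\mathbf{1}$ is the unique left Perron vector satisfying $v^T B = v^T$ and $v^T \mathbf{1} = 1$.

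Next I would spectrally reinterpret~\eqref{eqn:secondMomentCondition}. Set $T(\epsilon) := \mathbb{E}[W(t) \otimes W(t)] = \frac{1}{n}\sum_{k \in \calV} W_k \otimes W_k$, $j_r := [\mathbf{1}^T\,\mathbf{0}^T]^T \otimes [\mathbf{1}^T\,\mathbf{0}^T]^T$, and $j_\ell := [v^T\,v^T]^T \otimes [v^T\,v^T]^T$. Since $L_k \mathbf{1} = \mathbf{0}$, each $W_k \otimes W_k$ fixes $j_r$. A short calculation using $v^T B = v^T$ shows $v^T S_k = v^T$, hence $[v^T\,v^T]\,W_k = [v^T\,v^T]$ for every $k$, so each $W_k \otimes W_k$ has $j_\ell$ as a left eigenvector at $1$. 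Because $j_\ell^T j_r = 1$, the rank-one matrix $j_r j_\ell^T$ coincides with the spectral projector of $T(\epsilon)$ onto $\mathrm{span}(j_r)$ whenever $1$ is a simple eigenvalue of $T(\epsilon)$, and in that case $T(\epsilon) - j_r j_\ell^T$ has spectrum $\mathrm{spec}(T(\epsilon)) \setminus \{1\}$. So~\eqref{eqn:secondMomentCondition} reduces to showing that $1$ is a simple eigenvalue of $T(\epsilon)$ and every other eigenvalue has modulus strictly less than $1$.

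To establish this I would adapt the perturbation analysis of Section~\ref{sec:convergenceInExpectation}. At $\epsilon = 0$, each $W_k(0)$ is block lower triangular with diagonal blocks $I - L_k$ and $S_k$, and $\|I - L_k\|_\infty = \|S_k\|_1 = 1$ (the latter using column stochasticity of $B$), which yields $\rho(T(0)) \le 1$. The eigenvalue $1$ of $T(0)$ will in general have algebraic multiplicity greater than one, reflecting the nonzero steady-state variance of Aysal-style broadcast gossip in the absence of companion-variable feedback. I would enumerate the associated left and right eigenvectors from the block structure of $W_k(0) \otimes W_k(0)$ combined with Lemmas~\ref{lemma1} and~\ref{lemma4}, verify semi-simplicity, and then apply Lemma~\ref{property2} (extended to higher multiplicity) with perturbation matrix $\frac{d T}{d \epsilon}\big|_{\epsilon = 0}$. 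Paralleling the calculation $\lambda'' = -\frac{1}{w^T q}\, p^T \overline{D} q < 0$ of Proposition~\ref{prop2}, the resulting small matrix of bifurcation coefficients should have one zero eigenvalue (preserving the stable eigenvalue attached to $j_r, j_\ell$) with all other eigenvalues in the open left half-plane, thanks to strict positivity of $\overline{D}$ and of the Perron vectors of $\overline{L}$ and $\overline{S}$. Eigenvalues of $T(0)$ already strictly inside the open unit disk stay inside for small $\epsilon$ by continuity (Lemma~\ref{property1}). Taking $\eta > 0$ small enough that these three conclusions hold simultaneously then yields~\eqref{eqn:secondMomentCondition} on $(0, \eta]$, and Theorem~\ref{thm:secondMoment} finishes the proof.

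The main obstacle is the multiplicity count and first-order perturbation calculation in the tensor space $\mathbb{R}^{4n^2}$: the mechanism is identical to Proposition~\ref{prop2}, but the invariant subspace at eigenvalue $1$ now has dimension larger than two and pulls contributions from both diagonal blocks of $W_k(0)$, so identifying a complete semi-simple eigenbasis and checking that every bifurcating direction moves strictly inside the unit disk takes careful bookkeeping.
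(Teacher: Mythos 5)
Your proposal is correct and follows essentially the same route as the paper: invoke Theorem~\ref{thm:secondMoment} with $v=\frac{1}{n}\mathbf{1}$ (using column stochasticity of $B$ under \eqref{eqn:ubga_B}), then show $1$ is a simple eigenvalue of $\mathbb{E}[W(t)\otimes W(t)]$ for small $\epsilon$ by permuting to a block lower-triangular matrix $\widehat{M}+\epsilon\widehat{F}$ whose diagonal blocks are the four Kronecker products you list, and perturbing the semi-simple eigenvalue $1$. The paper carries out the bookkeeping you defer, finding that the eigenvalue $1$ of $\widehat{M}$ has multiplicity exactly $4$ and that the bifurcation derivatives are $0$, $-Nv_1^T\mathbb{E}[D]v_2$ (twice), and $-2Nv_1^T\mathbb{E}[D]v_2$, confirming your expectation that exactly one direction is stable and the rest move strictly into the unit disk.
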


\begin{proof}
To prove the theorem, we need to show that there exists an $\eta > 0$ such that \eqref{eqn:secondMomentCondition} holds. Convergence in the second moment then follows from Theorem~\ref{thm:secondMoment}.

To show that \eqref{eqn:secondMomentCondition} holds for sufficiently small $\epsilon$, we use a perturbation argument similar to one in~\cite{CA12}. Since $(I - \mathbb{E}[L_k]) \mathbf{1} = \mathbf{1}$, the matrix $I-\mathbb{E}[L_k]$ has $1$ as an eigenvalue with right eigenvector $\mathbf{1}$. It follows that there exists a corresponding left eigenvector $w$ satisfying $w^T (I - \mathbb{E}[L_k]) = w^T$ and $w^T \mathbf{1} = 1$, and all entries of $w$ are positive. Consequently, the following four equalities hold:
\begin{align}
\mathbb{E}[(I - L_k) \otimes (I - L_k)] \ ( \mathbf{1} \otimes \mathbf{1}) &= ( \mathbf{1} \otimes \mathbf{1}) \\
(w \otimes \mathbf{1})^T \ \mathbb{E}[(I - L_k) \otimes S_k] &= (w \otimes \mathbf{1})^T \\
(\mathbf{1} \otimes w)^T \ \mathbb{E}[S_k \otimes (I - L_k)] &= (\mathbf{1} \otimes w)^T \\
(\mathbf{1} \otimes \mathbf{1})^T \ \mathbb{E}[S_k \otimes S_k] &= (\mathbf{1} \otimes \mathbf{1})^T.
\end{align}
Thus, the four matrices $\mathbb{E}[(I - L_k) \otimes (I - L_k)]$, $\mathbb{E}[(I - L_k) \otimes S_k]$, $\mathbb{E}[S_k \otimes (I - L_k)]$, and $\mathbb{E}[S_k \otimes S_k]$ have eigenvalue $1$. These matrices are all non-negative and irreducible under the constraints \eqref{eqn:constraints_a} and since $\calG$ is strongly connected. Moreover, since the corresponding eigenvectors $( \mathbf{1} \otimes \mathbf{1})$, $(w \otimes \mathbf{1})^T$, $(\mathbf{1} \otimes w)^T$, and $(\mathbf{1} \otimes \mathbf{1})^T$ are all positive, it follows from the Perron-Frobenius Theorem that $1$ is the largest eigenvalue of the four matrices above, and all other eigenvalues have moduli strictly less than $1$.

In order to apply a perturbation argument to analyze the eigenvalues of $\mathbb{E}[W(t) \otimes W(t)]$, let us write $W_k = M_k + \epsilon F_k$ with
\begin{align}
M_k &= \begin{bmatrix} I - L_k & 0 \\ L_k & S_k\end{bmatrix} \\
F_k &= \begin{bmatrix} 0 & D_k \\ 0 & -D_k \end{bmatrix}.
\end{align}
Similarly, we have $W(t) = M(t) + \epsilon F(t)$, with $M(t)$ and $F(t)$ being random matrices drawn from the collection $\{(M_k, F_k)\}$ depending on which node $k$ broadcasts at iteration $t$. In the following we omit the dependence on $t$ to simplify notation. With the above definitions, we have $\mathbb{E}[W \otimes W] = \mathbb{E}[M \otimes M] + \epsilon \mathbb{E}[M \otimes F + F \otimes M + F \otimes \epsilon F]$. There is a $2n \times 2n$ permutation matrix $P$ such that $P^T \mathbb{E}[W \otimes W] P = \widehat{M} + \epsilon \widehat{F}$ where
\begin{align}
\widehat{M} &= \mathbb{E} \begin{bmatrix} (I-L) \otimes (I-L) & 0 & 0 & 0\\
(I-L) \otimes L & (I-L) \otimes S & 0 & 0\\
L \otimes (I-L) & 0 & S \otimes (I-L) & 0\\
L \otimes L & L \otimes S & S \otimes L & S \otimes S \end{bmatrix} \\
\widehat{F} &= \mathbb{E} \begin{bmatrix} 0 & (I-L) \otimes D & D \otimes (I-L) & D \otimes \epsilon D \\
0 & -(I-L) \otimes D & D \otimes L & D \otimes (S - \epsilon D) \\
0 & L \otimes D & -D \otimes (I-L) & (S - \epsilon D) \otimes D \\
0 & -L \otimes D & -D \otimes L & -S \otimes D - D \otimes S + D \otimes \epsilon D \end{bmatrix}.
\end{align}
Since $1$ is a simple eigenvalue of each of the matrices $\mathbb{E}[(I - L) \otimes (I - L)]$, $\mathbb{E}[(I - L) \otimes S]$, $\mathbb{E}[S \otimes (I - L)]$, and $\mathbb{E}[S \otimes S]$, and all other eigenvalues of these matrices have moduli less than $1$, we find that $1$ is a semi-simple eigenvalue of $\widehat{M}$ with multiplicity $4$, and all other eigenvalues of $\widehat{M}$ are strictly less than $1$. 

We will use a perturbation argument to show that for sufficiently small $\epsilon > 0$, the largest eigenvalue of $\widehat{M} + \epsilon \widehat{F}$ is $1$ and $1$ is a simple eigenvalue. Our argument is based on a generalization of Lemma~\ref{property2} that addresses bifurcation of a quadruple semi-simple eigenvalues rather than double semi-simple eigenvalues~\cite{SE04}.

Let $\widehat{\lambda}_i(\epsilon)$, $i=1,\dots,4$ denote the four bifurcating eigenvalues of $\widehat{M} + \epsilon \widehat{F}$. Similar to Lemma~\ref{property2}, we have $\widehat{\lambda}_i(\epsilon) = 1 + \epsilon \xi_i + o(\epsilon)$, where $\xi_i$, $i=1,\dots,4$ are four eigenvalues of the matrix of similar structure to \eqref{eqn:property2}. Solving for $\xi_1,\dots,\xi_4$, we find that the derivatives of the eigenvalues $\widehat{\lambda}_i(\epsilon)$ with respect to $\epsilon$ are given by
\begin{align}
&\frac{d \widehat{\lambda}_1(\epsilon)}{d \epsilon} = 0 \\
&\frac{d \widehat{\lambda}_2(\epsilon)}{d \epsilon} = \frac{d \widehat{\lambda}_3(\epsilon)}{d \epsilon} = -N v_1^T \mathbb{E}[D] v_2 < 0 \\
&\frac{d \widehat{\lambda}_4(\epsilon)}{d \epsilon} = - 2N v_1^T \mathbb{E}[D] v_2 < 0,
\end{align}
where $v_1$ is the positive left eigenvector of $I - L$ normalized such that $v^T \mathbf{1} = 1$, and $v_2$ is the positive right eigenvector of $S$ normalized so that $\mathbf{1}^T v_2 = 1$. Similar as in Lemma~\ref{property2} (see~\cite{SE04}), it follows that there exists a real number $\eta > 0$ such that the matrix $\mathbb{E}[W \otimes W]$ has only one simple eigenvalue $1$, and the moduli of all other eigenvalues are smaller than $1$ when $\epsilon \in (0, \eta]$. Thus, \eqref{eqn:secondMomentCondition} holds, and convergence in the second moment follows from Theorem~\ref{thm:secondMoment}.
\end{proof}

\section{Biased Broadcast Gossip} \label{sec:bbga}

The previous section proposed UBGA, a broadcast gossip algorithm which provably converges to the average consensus in both expectation and in the mean-squared sense. UBGA is practical in situations when the network can be guaranteed to be undirected, or when nodes otherwise know their out-degree. For instance, one could enforce that only symmetric links are used by having each node broadcast its set of in-neighbors and then only updating using messages from neighbors for which the neighborhood relationship is symmetric. However, this may be undesirable in some applications, and so in this section we consider an alternative family of broadcast gossip algorithms. These algorithms are no longer guaranteed to converge to an average consensus, and so we refer to them as \emph{biased broadcast gossip algorithms} (BBGAs). However, we still guarantee convergence in expectation and in the mean-squared sense to a characterizable value which depends on the initial state at each node and the structure of the network.

Biased broadcast gossip algorithms are obtained by setting 
\begin{align}
d^{(k)}_j &= \begin{cases} 1 / \delta_j^+ & \text{ if } j \in \mathcal{N}_k^- \\ 0 & \text{ otherwise } \end{cases} \label{eqn:bbga_d}\\
B_{j,k} &= \begin{cases} 1 / \delta_j^+ & \text{ if } j \in \mathcal{N}_k^- \\ 0 & \text{ otherwise,} \end{cases} \label{eqn:bbga_B}
\end{align}
and taking $A_{j,k}$ to be any values which satisfy the constraints \eqref{eqn:constraints_a}. To implement such a scheme we only require that each node has knowledge of its in-neighbors, which is reasonable in the broadcast setting. 

Observe that, for the choice of parameters just specified, both $I - L_k$ and $\overline{S}$ are row-stochastic matrices. Let $v$ be such that $v^T B = v^T$ and $v^T \mathbf{1} = 1$. Thus, the entries of $v$ satisfy $v_k = \sum_{j \in \calN_k^-} v_j / \delta_j^+$, and all entries of $v$ are positive. Such an eigenvector exists since $B$ is also row-stochastic. One can verify that $v^T S_k = v^T$ also holds, and so $v^T \overline{S} = v^T$. 
In general, we do not have $v = \frac{1}{n} \mathbf{1}$ unless $\delta_j^+ = \delta_j^-$ for all $j$ and $\delta_j^+ = \delta_i^+$ for all $i \ne j$. Therefore convergence to the average consensus can no longer be guaranteed in general. However, we still obtain convergence in expectation to a (non-average) consensus, via Theorem~\ref{thm:convergenceInExpectation}, and we can also show that BBGA converges in the second moment.

\begin{proposition} \label{prop:bbga}
Let the parameters $d^{(k)}_j$ and $B_{j,k}$ be chosen as in~\eqref{eqn:bbga_d} and \eqref{eqn:bbga_B}. There exists $\eta > 0$ such that if $\epsilon \in (0, \eta]$ then \eqref{eqn:secondMomentCondition} holds and so \eqref{eqn:secondMomentConvergence} also holds with $v$ being the vector such that $v^T B = v^T$ and $v^T \mathbf{1} = 1$.
\end{proposition}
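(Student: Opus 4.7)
The proof will closely parallel that of Proposition~\ref{prop:ubga}, but with the left Perron eigenvector $v$ of $\overline{S}$ playing the role that $\mathbf{1}$ played in the UBGA case. First I would verify that, under the BBGA parameter choice \eqref{eqn:bbga_B}, the matrix $B$ is row-stochastic, so $B\mathbf{1}=\mathbf{1}$ and $\|B\|_\infty=1$. Since $B$ is primitive (by Assumption~\ref{stronglyConnected}), Perron-Frobenius gives a strictly positive $v$ with $v^T B=v^T$ and $v^T\mathbf{1}=1$, and a direct computation yields $v^T S_k = v^T - v_k e_k^T + v^T B e_k e_k^T = v^T$ for every $k$. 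Theorem~\ref{thm:convergenceInExpectation} already yields the existence of an $\eta_0>0$ so that convergence in expectation holds for $\epsilon\in(0,\eta_0]$; the content of this proposition is the mean-squared statement, i.e., \eqref{eqn:secondMomentCondition}.

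Next I would set up the perturbation argument exactly as in the UBGA proof: write $W_k = M_k + \epsilon F_k$ with
\begin{equation*}
M_k = \begin{bmatrix} I-L_k & 0 \\ L_k & S_k \end{bmatrix}, \qquad F_k = \begin{bmatrix} 0 & D_k \\ 0 & -D_k \end{bmatrix},
\end{equation*}
so that $\mathbb{E}[W\otimes W] = \mathbb{E}[M\otimes M] + \epsilon G(\epsilon)$ for some matrix pencil $G(\epsilon)$. Conjugating by the same permutation matrix $P$ used in the UBGA proof brings $P^T\mathbb{E}[M\otimes M]P$ into block lower-triangular form, with the four diagonal blocks $\mathbb{E}[(I-L_k)\otimes(I-L_k)]$, $\mathbb{E}[(I-L_k)\otimes S_k]$, $\mathbb{E}[S_k\otimes(I-L_k)]$, and $\mathbb{E}[S_k\otimes S_k]$. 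Each of these four blocks is nonnegative and, because $\calG$ is strongly connected and $A$, $B$ are graph-conformant with strictly positive weights on $\calE$, each is irreducible. Using the identities $(I-L_k)\mathbf{1}=\mathbf{1}$, $\overline{S}\mathbf{1}=\mathbf{1}$, $w^T(I-\overline{L})=w^T$, and $v^T S_k=v^T$, one shows that each block has eigenvalue $1$ with the corresponding right eigenvector $\mathbf{1}\otimes\mathbf{1}$ and a positive left eigenvector drawn from $\{w,v\}^{\otimes 2}$ (namely $w\otimes w$, $w\otimes v$, $v\otimes w$, $v\otimes v$ respectively). Perron-Frobenius then forces $1$ to be a simple eigenvalue of each block, with all others strictly inside the unit disk; thus $1$ is a semi-simple eigenvalue of $\mathbb{E}[M\otimes M]$ of multiplicity $4$, and it dominates the remaining spectrum.

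The final step is to apply the four-fold bifurcation version of Lemma~\ref{property2} (as cited in~\cite{SE04}): the eigenvalue $1$ of $\mathbb{E}[M\otimes M]$ splits into four eigenvalues $\widehat\lambda_i(\epsilon) = 1+\epsilon\xi_i + o(\epsilon)$, where the $\xi_i$ are the eigenvalues of the $4\times 4$ matrix whose entries are $(\text{left eigenvector})^T G(0) (\text{right eigenvector})$. The stable eigenvalue is pinned: since $W_k[\mathbf{1}^T\,\mathbf{0}^T]^T=[\mathbf{1}^T\,\mathbf{0}^T]^T$ holds exactly, the vector $[\mathbf{1}^T\,\mathbf{0}^T]^T \otimes [\mathbf{1}^T\,\mathbf{0}^T]^T$ is a right $1$-eigenvector of $\mathbb{E}[W\otimes W]$ for every $\epsilon$, and the companion left eigenvector is $[v^T\,v^T]^T\otimes[v^T\,v^T]^T$; this gives $\xi_1=0$. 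A calculation mirroring the UBGA argument, using the block structure of $F_k$ and the positivity of $w$, $v$, and the diagonal entries of $\mathbb{E}[D_k]=\overline{D}$, shows that the remaining three perturbation coefficients are strictly negative (each is a negative multiple of a positive quadratic form in $w$, $v$, $\overline{D}$).

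Hence for all sufficiently small $\epsilon>0$ the three bifurcating eigenvalues move strictly into the open unit disk, while continuity (Lemma~\ref{property1}) keeps the remaining spectrum of $\mathbb{E}[M\otimes M]$ inside the unit disk. Therefore there is $\eta>0$ such that for $\epsilon\in(0,\eta]$ the matrix $\mathbb{E}[W\otimes W]$ has $1$ as a simple dominant eigenvalue with right eigenvector $[\mathbf{1}^T\,\mathbf{0}^T]^T\otimes[\mathbf{1}^T\,\mathbf{0}^T]^T$ and left eigenvector $[v^T\,v^T]^T\otimes[v^T\,v^T]^T$, which is precisely \eqref{eqn:secondMomentCondition}; Theorem~\ref{thm:secondMoment} then delivers the desired mean-squared convergence. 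The main obstacle is the explicit computation of the $4\times 4$ derivative matrix in the bifurcation lemma and the verification that three of its eigenvalues are strictly negative; here the positivity of $w$, $v$, and the diagonal of $\overline{D}$, together with the block structure of $F_k$ (its first block column is zero), is what makes the sign analysis go through.
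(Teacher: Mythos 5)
Your argument is essentially the paper's own proof: the same decomposition $W_k = M_k + \epsilon F_k$, the same permutation to block lower-triangular form with the four Kronecker diagonal blocks, Perron--Frobenius on each block via a positive eigenvector with eigenvalue $1$, and the four-fold bifurcation of the semi-simple eigenvalue $1$ with one stable branch and three branches whose derivatives in $\epsilon$ are strictly negative (the paper obtains $-v_1^T\mathbb{E}[D]\mathbf{1}$ twice and $-2v_1^T\mathbb{E}[D]\mathbf{1}$, matching your sign analysis). One caution: two of the auxiliary eigenvector identities you assert are false because the Kronecker factors are correlated, i.e.\ $\mathbb{E}[X\otimes X]\neq\mathbb{E}[X]\otimes\mathbb{E}[X]$. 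Specifically, $w\otimes w$ is \emph{not} a left $1$-eigenvector of $\mathbb{E}[(I-L_k)\otimes(I-L_k)]$ (since $w^T(I-L_k)\neq w^T$ for individual $k$, only in expectation), and $\mathbf{1}\otimes\mathbf{1}$ is \emph{not} a right $1$-eigenvector of $\mathbb{E}[S_k\otimes S_k]$ (since $S_k\mathbf{1}=\mathbf{1}-e_k+Be_k\neq\mathbf{1}$ for individual $k$, only $\overline{S}\mathbf{1}=\mathbf{1}$). The identities that survive are exactly those where at least one factor is fixed deterministically for every $k$ --- $(I-L_k)\mathbf{1}=\mathbf{1}$ and $v^TS_k=v^T$ --- which is why the paper uses $\mathbf{1}\otimes\mathbf{1}$ as a right eigenvector for the first three blocks and $v\otimes v$ as a left eigenvector only for the fourth. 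Since one positive eigenvector per irreducible nonnegative block is all that Perron--Frobenius requires, your conclusion is unaffected; just drop the two spurious claims.
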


\begin{proof}
To prove the claim we show that \eqref{eqn:secondMomentCondition} holds and then invoke Theorem~\ref{thm:secondMoment}. We use an argument similar to that used in the proof of Proposition~\ref{prop:ubga}. Since $(I - L_k) \mathbf{1}$, $S_k \mathbf{1} = \mathbf{1}$, and $v^T S_k = v^T$, the following four equalities hold:
\begin{align}
\mathbb{E}[(I- L_k) \otimes (I - L_k)] (\mathbf{1} \otimes \mathbf{1}) &= (\mathbf{1} \otimes \mathbf{1}) \\
\mathbb{E}[(I- L_k) \otimes S_k] (\mathbf{1} \otimes \mathbf{1}) &= (\mathbf{1} \otimes \mathbf{1}) \\
\mathbb{E}[S_k \otimes (I - L_k)] (\mathbf{1} \otimes \mathbf{1}) &= (\mathbf{1} \otimes \mathbf{1})\\
(v^T \otimes v^T) \mathbb{E}[S_k \otimes S_k] &= (v^T \otimes v^T).
\end{align}
Since the eigenvectors above are all positive, it follows that $1$ is the largest eigenvalue of each of the four matrices $\mathbb{E}[(I- L_k) \otimes (I - L_k)]$, $\mathbb{E}[(I- L_k) \otimes S_k]$, $\mathbb{E}[S_k \otimes (I - L_k)]$, and $\mathbb{E}[S_k \otimes S_k]$.

Similar to the proof of Proposition~\ref{prop:ubga}, we find that $\widehat{M}$ has largest eigenvalue $1$ with multiplicity $4$, and all the moduli of all other eigenvalues are strictly smaller than $1$. Let $\lambda_1(\epsilon), \dots, \lambda_4(\epsilon)$ denote the four corresponding eigenvalues of $\mathbb{E}[W \otimes W]$. In this case, we can solve for the eigenvalues and again take their derivatives to find that
\begin{align}
&\frac{d \widehat{\lambda}_1(\epsilon)}{d \epsilon} = 0 \\
&\frac{d \widehat{\lambda}_2(\epsilon)}{d \epsilon} = \frac{d \widehat{\lambda}_3(\epsilon)}{d \epsilon} = - v_1^T \mathbb{E}[D] \mathbf{1} < 0 \\
&\frac{d \widehat{\lambda}_4(\epsilon)}{d \epsilon} = - 2 v_1^T \mathbb{E}[D] \mathbf{1} < 0,
\end{align}
where $v_1$ satisfies $v_1^T \mathbb{E}[I - L_k] = v_1^T$ and $v_1^T \mathbf{1} = 1$. Thus, there exists a positive scalar $\eta > 0$ such that $1$ is a simple eigenvalue of $\mathbb{E}[W \otimes W]$ and all other eigenvalues are strictly less than $1$ when $\epsilon \in (0, \eta]$. Subsequently, \eqref{eqn:secondMomentCondition} holds, and convergence in the second moment follows from Theorem~\ref{thm:secondMoment}.
\end{proof}

\section{Upper Bound on $\eta$} \label{sec:supremum}

So far we have demonstrated that there exist broadcast gossip algorithms of the form described in Section~\ref{sec:framework} which are guaranteed to converge when the parameter $\epsilon$ is chosen to be sufficiently small. In this section we derive bounds on $\eta$ which can be used as practical guidelines for setting this parameter. Previous results suggest that, in general, one must take $\eta = \Theta(n^{-n})$, which is extremely conservative~\cite{CA12,BH96,ST90}. The bounds in this section make use of the specific structure of $W(t)$ to obtain tighter, more useful bounds.

We begin with a simple observation related to the expected BBGA update matrix.

\begin{lemma}
For updates using the BBGA parameters and for sufficiently small $\epsilon > 0$, the second largest eigenvalue of $\overline{W}$ is $1 - \epsilon / n$.
\end{lemma}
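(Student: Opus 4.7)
\bigskip

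The plan is to produce an exact (not merely asymptotic) eigenvalue equal to $1 - \epsilon/n$ by exhibiting a closed-form eigenpair, and then to invoke the perturbation machinery already developed in Section~\ref{sec:eigenvaluesW_0} and Proposition~\ref{prop2} to argue that no other eigenvalue of $\overline{W}$ can beat it for small $\epsilon$.

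First I would simplify $\overline{D}$ under the BBGA parametrization \eqref{eqn:bbga_d}. For each $j$, the entry $d^{(k)}_j = 1/\delta_j^+$ is nonzero exactly when $k \in \calN_j^+$, and summing over $k$ telescopes to $(\sum_k d^{(k)})_j = \delta_j^+ \cdot (1/\delta_j^+) = 1$, so $\overline{D} = \frac{1}{n} I$. Plugging this into \eqref{eqn:Wbar} and using the three elementary identities $(I - \overline{L})\mathbf{1} = \mathbf{1}$, $\overline{L} \mathbf{1} = \mathbf{0}$, and $\overline{S}\mathbf{1} = \mathbf{1}$ (the last holding because the BBGA choice \eqref{eqn:bbga_B} makes $B$ row-stochastic), a one-line block multiplication shows
\begin{equation*}
\overline{W} \begin{bmatrix} \mathbf{1} \\ -\mathbf{1} \end{bmatrix}
= \left(1 - \frac{\epsilon}{n}\right) \begin{bmatrix} \mathbf{1} \\ -\mathbf{1} \end{bmatrix}
\end{equation*}
for \emph{every} $\epsilon > 0$. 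So $1 - \epsilon/n$ is literally an eigenvalue of $\overline{W}$, with no hidden $o(\epsilon)$ term.

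Next I would argue that for $\epsilon$ sufficiently small, this eigenvalue is the second largest in modulus. Since BBGA makes $B$ row-stochastic, Proposition~\ref{prop2} applies: for small $\epsilon$, the eigenvalue $1$ of $W_0$ (double by Lemmas~\ref{lemma1} and \ref{lemma4}) bifurcates into the stable eigenvalue $1$ and a second eigenvalue $\lambda_{1,2}(\epsilon) = 1 + \epsilon \lambda'' + o(\epsilon)$ with $\lambda'' < 0$, while all remaining eigenvalues of $W_0$ (whose moduli are bounded above by some graph-dependent constant $r < 1$) are perturbed continuously by Lemma~\ref{property1}. The only eigenvalue of $\overline{W}$ that tends to $1$ besides $1$ itself is $\lambda_{1,2}(\epsilon)$, so the exhibited eigenvalue $1 - \epsilon/n$ must \emph{be} $\lambda_{1,2}(\epsilon)$ once $\epsilon$ is small enough (a quick sanity check: the formula $\lambda'' = -p^T \overline{D} q/(w^T q)$ from the proof of Proposition~\ref{prop2}, evaluated with $\overline{D} = I/n$, $q = \mathbf{1}/n$, and $w^T \mathbf{1} = 1$, gives $\lambda'' = -1/n$, matching the exact expression). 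All other $2n-2$ eigenvalues of $\overline{W}$ lie within $o(1)$ of the eigenvalues of $W_0$ of modulus at most $r < 1$, so for $\epsilon$ small enough they are bounded strictly below $1 - \epsilon/n$ in modulus.

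The main obstacle is the insistence on \emph{exact} equality with $1 - \epsilon/n$, which is what distinguishes this lemma from a plain application of Proposition~\ref{prop2}; the cancellation $\overline{D} = I/n$ specific to BBGA is what makes the explicit eigenvector $[\mathbf{1}^T \ -\mathbf{1}^T]^T$ available, and without that closed-form identification one would only obtain the asymptotic $1 - \epsilon/n + o(\epsilon)$. The remaining separation of $\lambda_{1,2}(\epsilon)$ from the other perturbed eigenvalues is a routine continuity argument that reuses the spectral gap of $W_0$ already established in Section~\ref{sec:eigenvaluesW_0}.
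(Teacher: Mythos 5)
Your proposal is correct and follows essentially the same route as the paper: exhibit the exact eigenpair $\overline{W}[\mathbf{1}^T\ {-\mathbf{1}}^T]^T = (1-\epsilon/n)[\mathbf{1}^T\ {-\mathbf{1}}^T]^T$ (using $\overline{D} = \tfrac{1}{n}I$ and row-stochasticity of $B$ under BBGA), then use continuity of the eigenvalues in $\epsilon$ together with the bifurcation of the semi-simple double eigenvalue $1$ of $W_0$ from Proposition~\ref{prop2} to conclude that $1-\epsilon/n$ must be the second largest eigenvalue for small $\epsilon$. You actually spell out the identification with $\lambda_{1,2}(\epsilon)$ and the spectral-gap separation more explicitly than the paper does, which only sketches the contradiction as $\epsilon\to 0$.
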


\begin{proof}
One can verify that, for BBGA, $\overline{W} [\mathbf{1}^T \ \mathbf{0}^T]^T = [\mathbf{1}^T \ \mathbf{0}^T]^T$ and $\overline{W} [\mathbf{1}^T \ -\mathbf{1}^T]^T = (1 - \epsilon / n) [\mathbf{1}^T \ -\mathbf{1}^T]^T$. Thus, $1$ and $1 - \epsilon / n$ are eigenvalues of $\overline{W}$. According to Lemma~\ref{property1}, the eigenvalues of $\overline{W}$ are continuous functions of $\epsilon$. If $1$ or $1 - \epsilon / n$ are not eigenvalues stemming from the semi-simple double eigenvalue $1$ of $W_0$, then we obtain a contradiction as $\epsilon \rightarrow 0$. Therefore, $1 - \epsilon / n$ must be the second largest eigenvalue of $\overline{W}$ for sufficiently small $\epsilon$.
\end{proof}

In order to provide a tight characterization of the upper bound $\eta$ on the perturbation parameter, we need to make more specific assumptions about the values of the weights $a_{j,k}$. Previously, we only assumed that they satisfy the constraints \eqref{eqn:constraints_a}. For the remainder of the article, unless otherwise stated, we assume that
\begin{equation}
a_{j,k} = \begin{cases} 1 / \delta_j^+ & \text{ if } j \in \mathcal{N}_k^- \\ 0 & \text{ otherwise}.\end{cases} \label{eqn:bbga_a}
\end{equation}
In this case, for BBGA, we have $\overline{S} = I - \overline{L}$.

Let $\xi_i$ denote the eigenvalues of the graph Laplacian $L = \diag(A \mathbf{1}) - A$ sorted by increasing real part~\cite{BE00}, where $n$ is counted with multiplicity; i.e., $0 = \Re(\xi_1) \le \Re(\xi_2) \le \dots \le \Re(\xi_n) \le 2$. Note that this last inequality holds because $\rho(L) \le \|L \|_\infty = 2$, and the real part of $\xi_k$ for any $k$ is nonnegative because $L$ is diagonally dominant. We have the following lemma characterizing the eigenvalues of $\overline{W}$ for BBGA.

\begin{lemma} \label{lemma10}
For BBGA, the $2n$ eigenvalues of $\overline{W}$ are
\begin{align}
\lambda_{k,1} &= 1 - \frac{1}{n} \xi_k - \frac{\epsilon}{2n} - \frac{1}{n} \sqrt{\epsilon \xi_k + \frac{\epsilon^2}{4}} \quad \text{ for } k=1,\dots,n, \label{eqn:lambdak1} \\
\lambda_{k,2} &= 1 - \frac{1}{n} \xi_k - \frac{\epsilon}{2n} + \frac{1}{n} \sqrt{\epsilon \xi_k + \frac{\epsilon^2}{4}} \quad \text{ for } k=1,\dots,n. \label{eqn:lambdak2}
\end{align}
\end{lemma}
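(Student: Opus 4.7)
The plan is to exploit the highly special structure that $\overline{W}$ acquires under the BBGA weight choice \eqref{eqn:bbga_a}. First I would verify three simplifications: $\overline{L} = L/n$ (this is already noted after \eqref{eqn:Wbar}); $\overline{S} = I - L/n$, because with both $a_{j,k}$ and $B_{j,k}$ equal to $1/\delta_j^+$ on the edges of $\mathcal{G}$ one gets $\overline{S} = I - \overline{L}$; and $\overline{D} = (1/n) I$, because $\sum_{k} d^{(k)}_j = \delta_j^+ \cdot (1/\delta_j^+) = 1$ and averaging by $1/n$ then gives $\overline{D} = I/n$. Substituting these back into \eqref{eqn:Wbar} yields
\[
\overline{W} \;=\; \begin{bmatrix} I - L/n & (\epsilon/n)\,I \\ L/n & I - L/n - (\epsilon/n)\,I \end{bmatrix},
\]
and the decisive observation is that every one of the four $n \times n$ blocks is a polynomial in the single matrix $L$, so all four blocks pairwise commute.

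Next I would simultaneously triangularize the four blocks via Schur's theorem: pick a unitary $U$ with $U^* L U = T$ upper triangular, whose diagonal entries are exactly $\xi_1,\dots,\xi_n$. Conjugating $\overline{W}$ by the $2n \times 2n$ unitary $\diag(U,U)$ preserves the spectrum and replaces $L$ by $T$ in every block. A row-and-column permutation that interleaves the $k$th row/column of the top block with the $k$th row/column of the bottom block then brings the conjugate into block upper triangular form whose $2 \times 2$ diagonal blocks are
\[
M_k \;=\; \begin{bmatrix} 1 - \xi_k/n & \epsilon/n \\ \xi_k/n & 1 - \xi_k/n - \epsilon/n \end{bmatrix}, \qquad k = 1,\dots,n.
\]
The $2n$ eigenvalues of $\overline{W}$ are therefore precisely the collective eigenvalues of $M_1,\dots,M_n$.

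Finally, the eigenvalues of each $M_k$ come from a one-line characteristic-polynomial calculation. Writing $\beta = \lambda - 1 + \xi_k/n$, the equation $\det(\lambda I - M_k) = 0$ reduces to $\beta^2 + (\epsilon/n)\beta - \epsilon \xi_k / n^2 = 0$, whose two roots $\beta = -\epsilon/(2n) \pm (1/n)\sqrt{\epsilon \xi_k + \epsilon^2/4}$ translate directly into the expressions \eqref{eqn:lambdak1} and \eqref{eqn:lambdak2}.

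I do not anticipate a serious obstacle; the whole proof is driven by the commuting-block structure, and once that is recognized the rest is bookkeeping. The one subtle point worth care is that Schur triangularization, rather than diagonalization, is the right tool because $L$ is a non-symmetric digraph Laplacian that can fail to be diagonalizable; the block upper triangular reduction still yields the correct spectrum through the diagonal blocks $M_k$, but one should not try to read off explicit eigenvectors of $\overline{W}$ from this argument in the degenerate case.
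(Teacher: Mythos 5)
Your proof is correct, but it takes a genuinely different route from the paper's. Both arguments begin with the same simplifications ($\overline{L}=L/n$, $\overline{S}=I-L/n$, $\overline{D}=I/n$) and both ultimately solve the same scalar quadratic $\beta^2 + (\epsilon/n)\beta - \epsilon\xi_k/n^2 = 0$. The paper, however, gets there by applying the commuting-block determinant identity to write $\det(\lambda I - \overline{W})$ as the determinant of the single $n\times n$ matrix $(\lambda-1)(\lambda-1+\tfrac{\epsilon}{n})I + \tfrac{2(\lambda-1)}{n}L + \tfrac{1}{n^2}L^2$, then invokes the spectral mapping theorem to conclude that each eigenvalue $\lambda$ of $\overline{W}$ satisfies the quadratic for \emph{some} eigenvalue $\xi_j$ of $L$, and finally uses a continuity-in-$\epsilon$ argument to pin down the pairing $j=k$. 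Your simultaneous Schur triangularization followed by the interleaving permutation reduces $\overline{W}$ directly to a block upper triangular matrix with $2\times 2$ diagonal blocks $M_k$, so the spectrum of $\overline{W}$ is by construction the union (with multiplicity) of the spectra of $M_1,\dots,M_n$. What this buys you is that the association of each eigenvalue pair with a specific $\xi_k$, and the count of $2n$ eigenvalues, come out automatically; you avoid the paper's continuity step, which is the least robust part of their argument (the inference ``$(\xi_j-\xi_k)^2=0$ implies $j=k$'' is not literally valid when $L$ has repeated eigenvalues, though the final formulas are unaffected). Your closing remark about using Schur triangularization rather than diagonalization, and about not extracting eigenvectors in the degenerate case, is exactly the right caveat.
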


\begin{proof}
Observe that \eqref{eqn:bbga_d} implies $\overline{D} = \frac{1}{n} I$ for BBGA. The characteristic polynomial of $\overline{W}$ is
\begin{align}
\det(\lambda I - \overline{W}) 
&= \det\left(\begin{bmatrix} (\lambda - 1)I + \frac{1}{n} L & - \frac{\epsilon}{n} I \\
- \frac{1}{n} L & (\lambda - 1 + \frac{\epsilon}{n})I + \frac{1}{n} L\end{bmatrix}\right) \\
&= \det\left(\begin{bmatrix} (\lambda - 1) \left(\lambda - 1 + \frac{\epsilon}{n}\right) I + \frac{2(\lambda - 1)}{n} L + \frac{1}{n^2} L^2\end{bmatrix}\right), \label{eqn:charPoly}
\end{align}
where the last inequality follows since the matrices $I$ and $L$, as well as any linear combination of them, are commutative. The zeros of the characteristic polynomial of $\overline{W}$ thus correspond to zero eigenvalues of the matrix
\begin{equation}
U = (\lambda - 1) \left(\lambda - 1 + \frac{\epsilon}{n}\right) I + \frac{2(\lambda - 1)}{n} L + \frac{1}{n^2} L^2.
\end{equation}
According to the spectral mapping theorem\cite{ME01}, for any eigenvalue $\xi_i$ of the matrix $L$, the matrix $U$ has a corresponding eigenvalue $(\lambda - 1)(\lambda - 1 - \frac{\epsilon}{n}) + \frac{2(\lambda - 1)}{n}\xi_i + \frac{1}{n^2}\xi_i^2$. Therefore, given an eigenvalue $\lambda_k$ of $\overline{W}$, there must exist an eigenvalue $\xi_j$ of $L$ such that
\begin{equation}
(\lambda_k - 1) \left(\lambda_k - 1 + \frac{\epsilon}{n}\right) + \frac{2(\lambda_k - 1)}{n} \xi_j + \frac{1}{n^2} \xi_j^2 = 0. \label{eqn:40}
\end{equation}
According to Lemma~\ref{property1}, $\lambda_k$ is a continuous function of $\epsilon$, and so as $\epsilon \rightarrow 0$, $\lambda_k$ is an eigenvalue of $W_0$. For the particular choice of parameters made above, we have that $I - \overline{L} = \overline{S}$. Since $\xi_k$ is an eigenvalue of $L$, we have that $1 - \frac{1}{n} \xi_k$ is an eigenvalue of $I - \overline{L} = \overline{S} = I - \frac{1}{n} L$. Thus, in the continuous limit as $\epsilon \rightarrow 0$, we have that $\lambda_k = 1 - \frac{1}{n} \xi_k$ is an eigenvalue of $\overline{W}$. Taking $\epsilon = 0$ and substituting for $\lambda_k$ in \eqref{eqn:40}, we find that $(\xi_j - \xi_k)^2 = 0$, which implies that $j = k$. In this case, \eqref{eqn:40} can be simplified to
\begin{equation}
(\lambda_k - 1) \left(\lambda_k - 1 + \frac{\epsilon}{n}\right) + \frac{2(\lambda_k - 1)}{n} \xi_k + \frac{1}{n^2} \xi_k^2 = 0. 
\end{equation}
Solving for the roots of these quadratic polynomials in $\lambda_k$ proves the claim.
\end{proof}

We have already seen that $L$ has an eigenvalue $\lambda_1 = 0$. From Lemma~\ref{lemma10}, we again find that $\lambda_{1,1} = 1 - \epsilon / n$ and $\lambda_{1,2} = 1$ are eigenvalues of $\overline{W}$. If all eigenvalues of $L$ are real, then all eigenvalues of $\overline{W}$ are also real. In this case, we can use the monotonic ordering of eigenvalues to determine an upper bound $\eta$ on $\epsilon$ to ensure that BBGA converges in expectation. We restrict to the case where all eigenvalues of $L$ are real. When $L$ has some complex eigenvalues monotonicity is no longer preserved making it difficult to determine a reasonable bound. The main result of this section is as follows.

\begin{proposition} \label{theorem4}
Consider BBGA updates and suppose that all eigenvalues of $L$ are real. Then BBGA converges in expectation when $0 \le \epsilon \le 2n + \xi_n^2/(2n) - 2\xi_n$, where $\xi_n$ is the largest eigenvalue of $L$.
\end{proposition}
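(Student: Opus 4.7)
The plan is to exploit Lemma~\ref{lemma10}, which already furnishes closed-form expressions for the $2n$ eigenvalues of $\overline{W}$ in terms of the Laplacian eigenvalues $\xi_1,\dots,\xi_n$. Under the hypothesis that all $\xi_k$ are real (hence nonnegative, since $\operatorname{Re}(\xi_k)\ge 0$ always and $\xi_k$ is real by assumption), the radicand $\epsilon\xi_k+\epsilon^2/4$ is nonnegative, so every $\lambda_{k,j}$ is a real number. Since Assumption~\ref{stronglyConnected} (strong connectivity) makes $L$ irreducible, $\xi_1=0$ is simple and $\xi_k>0$ for $k\ge 2$. The goal reduces to determining the range of $\epsilon$ for which $1$ is a simple eigenvalue of $\overline{W}$ and every other eigenvalue lies strictly in $(-1,1)$; then convergence in expectation follows from Theorem~\ref{thm:convergenceInExpectation} (applied at the level of the resulting spectral gap).

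Concretely, I would proceed in four short steps. First, substitute $\xi_1=0$ in \eqref{eqn:lambdak1}--\eqref{eqn:lambdak2} to confirm $\lambda_{1,2}=1$ and $\lambda_{1,1}=1-\epsilon/n$, establishing both the existence of the eigenvalue $1$ and showing $\lambda_{1,1}<1$ for $\epsilon>0$. Second, show $\lambda_{k,2}<1$ for every $k\ge 2$: this inequality is equivalent to $\sqrt{\epsilon\xi_k+\epsilon^2/4}<\xi_k+\epsilon/2$, and squaring both (positive) sides reduces it to $\xi_k^2>0$, which holds since $\xi_k>0$ for $k\ge 2$. Together these steps guarantee that $1$ is simple and that every $\lambda_{k,2}$ with $k\ge 2$ lies strictly below $1$; since $\lambda_{k,1}\le \lambda_{k,2}$, the same is true for $\lambda_{k,1}$.

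Third, it remains to bound $\lambda_{k,1}$ from below by $-1$. The function
\begin{equation*}
f(\xi)=1-\frac{\xi}{n}-\frac{\epsilon}{2n}-\frac{1}{n}\sqrt{\epsilon\xi+\tfrac{\epsilon^2}{4}}
\end{equation*}
is strictly decreasing in $\xi\ge 0$ (both the linear term and the square-root term have negative derivative), so $\min_k \lambda_{k,1}=\lambda_{n,1}$ is attained at the largest Laplacian eigenvalue $\xi_n$. Consequently the constraint $\lambda_{n,1}>-1$ is the binding one. Fourth, I would solve this constraint explicitly: the inequality $\lambda_{n,1}>-1$ rearranges to
\begin{equation*}
2n-\xi_n-\tfrac{\epsilon}{2}>\sqrt{\epsilon\xi_n+\tfrac{\epsilon^2}{4}},
\end{equation*}
and provided the left-hand side is positive (which is the regime of interest, since $\xi_n\le 2$ and $n\ge 2$), squaring and cancelling the common $\epsilon^2/4$ term yields $(2n-\xi_n)^2>2n\epsilon$, i.e., $\epsilon<(2n-\xi_n)^2/(2n)=2n-2\xi_n+\xi_n^2/(2n)$, which is precisely the stated bound.

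The only real obstacle is bookkeeping: one must check that squaring is legitimate (i.e., that the left-hand side of the derived inequality is nonnegative on the range of $\epsilon$ being considered), and one must verify that the monotonicity argument in step three correctly identifies $\xi_n$ as the worst case—this is what forces the restriction to real spectra of $L$, because when $\xi_k$ is complex the expressions in Lemma~\ref{lemma10} no longer admit a scalar monotone ordering and the above computation breaks down. Modulo that caveat, the proof is a direct consequence of Lemma~\ref{lemma10} and elementary algebra.
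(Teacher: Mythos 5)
Your proposal is correct and follows essentially the same route as the paper's proof: both start from the closed-form eigenvalues of Lemma~\ref{lemma10}, use monotonicity in $\xi_k$ to identify $\lambda_{n,1}$ as the binding eigenvalue, and solve $\lambda_{n,1}=-1$ to obtain the threshold $2n+\xi_n^2/(2n)-2\xi_n$ (the paper routes the monotonicity and the $|\lambda|<1$ checks through Lemmas~\ref{lemma11} and \ref{lemma12} and the sandwich bound \eqref{eqn:45}, whereas you verify them by direct squaring, but the substance is identical). The squaring step you flag is indeed legitimate on the relevant range, since $2n-\xi_n-\epsilon/2\ge n-\xi_n^2/(4n)>0$ whenever $\epsilon$ is below the claimed bound and $\xi_n\le 2$.
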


The proof of Proposition~\ref{theorem4} relies on two intermediate lemmas.

\begin{lemma} \label{lemma11}
For BBGA updates, if all eigenvalues of $L$ are real, then there is one stable eigenvalue $\lambda_{1,2} = 1$. All other eigenvalues $\lambda_{k,1}$ are strictly decreasing functions of $\epsilon$, and all eigenvalues $\lambda_{k,2}$ are strictly increasing functions of $\epsilon$.
\end{lemma}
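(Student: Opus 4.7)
The plan is a direct calculation from the explicit formulas in Lemma~\ref{lemma10}. Under strong connectivity (Assumption~\ref{stronglyConnected}) the zero eigenvalue of $L$ is simple, so together with the hypothesis that all $\xi_k$ are real we have $\xi_1 = 0$ and $\xi_k > 0$ for $k = 2,\dots,n$. In particular $\epsilon \xi_k + \epsilon^2/4 \ge 0$ for $\epsilon \ge 0$, so the square roots in \eqref{eqn:lambdak1}--\eqref{eqn:lambdak2} are real and the formulas are differentiable in $\epsilon$ on $(0,\infty)$.

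First I would handle $k = 1$ by substituting $\xi_1 = 0$ directly into \eqref{eqn:lambdak1}--\eqref{eqn:lambdak2}, which gives $\lambda_{1,1}(\epsilon) = 1 - \epsilon/n$ (strictly decreasing) and $\lambda_{1,2}(\epsilon) \equiv 1$ (the stable eigenvalue of the statement). For $k \ge 2$, setting $\phi_k(\epsilon) = \sqrt{\epsilon \xi_k + \epsilon^2/4}$ and $\phi_k'(\epsilon) = (\xi_k + \epsilon/2)/(2\phi_k(\epsilon))$, term-by-term differentiation of \eqref{eqn:lambdak1}--\eqref{eqn:lambdak2} yields
\begin{align*}
\frac{d\lambda_{k,1}}{d\epsilon} &= -\frac{1}{2n} - \frac{\xi_k + \epsilon/2}{2n\,\phi_k(\epsilon)}, &
\frac{d\lambda_{k,2}}{d\epsilon} &= -\frac{1}{2n} + \frac{\xi_k + \epsilon/2}{2n\,\phi_k(\epsilon)}.
\end{align*}
The first expression is the sum of two strictly negative quantities for every $\xi_k > 0$ and $\epsilon > 0$, so $\lambda_{k,1}$ is strictly decreasing.

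For $\lambda_{k,2}$, positivity of the derivative is equivalent to $(\xi_k + \epsilon/2) > \phi_k(\epsilon)$. Squaring both sides, this reduces to $(\xi_k + \epsilon/2)^2 - (\epsilon \xi_k + \epsilon^2/4) > 0$, which expands to the identity $\xi_k^2 > 0$, valid for $k \ge 2$. Hence $\lambda_{k,2}$ is strictly increasing for $k \ge 2$, completing the claim. There is no real obstacle here: the whole argument reduces to the single algebraic identity $(\xi_k + \epsilon/2)^2 - (\epsilon \xi_k + \epsilon^2/4) = \xi_k^2$, together with the observation that reality of the spectrum plus strong connectivity keeps $\xi_k$ strictly positive for $k \ne 1$. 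The only point that deserves care is ruling out $\xi_k = 0$ for $k \ge 2$, which would collapse the strict inequality; this is where the simplicity of the Laplacian's zero eigenvalue (from Assumption~\ref{stronglyConnected}) is used.
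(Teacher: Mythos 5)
Your proof is correct and follows essentially the same route as the paper's: differentiate the closed-form expressions \eqref{eqn:lambdak1}--\eqref{eqn:lambdak2} from Lemma~\ref{lemma10} and bound the square-root term via the identity $(\xi_k+\epsilon/2)^2 - (\epsilon\xi_k + \epsilon^2/4) = \xi_k^2$. Your explicit separation of the $k=1$ case and the appeal to simplicity of the zero Laplacian eigenvalue (to get $\xi_k>0$ for $k\ge 2$, hence \emph{strict} monotonicity of $\lambda_{k,2}$) is a small but welcome tightening of a point the paper leaves implicit.
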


\begin{proof}
To prove the claim, differentiate \eqref{eqn:lambdak1} and \eqref{eqn:lambdak2} with respect to $\epsilon$:
\begin{align}
\frac{d \lambda_{k,1}}{d \epsilon} &= - \frac{1}{2n}\left(1 + \frac{2 \xi_k + \epsilon}{\sqrt{4 \epsilon \xi_k + \epsilon^2}}\right) \\
&\le - \frac{1}{2n}\left(1 + \frac{2 \xi_k + \epsilon}{\sqrt{(2 \xi_k + \epsilon)^2}}\right) \\
&= - \frac{1}{n} \\
&< 0,
\end{align}
and
\begin{align}
\frac{d \lambda_{k,2}}{d \epsilon} &= - \frac{1}{2n}\left(1 - \frac{2 \xi_k + \epsilon}{\sqrt{4 \epsilon \xi_k + \epsilon^2}}\right)  \\
&\ge - \frac{1}{2n}\left(1 - \frac{2 \xi_k - \epsilon}{\sqrt{(2 \xi_k + \epsilon)^2}}\right) \label{eqn:43} \\ 
&= 0. 
\end{align}
Note that equality holds in \eqref{eqn:43} if and only if $\xi_k = 0$.
\end{proof}

\begin{lemma} \label{lemma12}
For BBGA updates, If all eigenvalues of $L$ are real, then the eigenvalues $\lambda_{k,1}$ and $\lambda_{k,2}$ are monotonic decreasing functions of $\xi_k$.
\end{lemma}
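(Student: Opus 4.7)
The plan is to prove Lemma~\ref{lemma12} by a direct differentiation argument analogous to the proof of Lemma~\ref{lemma11}, but now treating $\xi_k$ (rather than $\epsilon$) as the variable and holding $\epsilon > 0$ fixed. Since the closed-form expressions in \eqref{eqn:lambdak1} and \eqref{eqn:lambdak2} are valid for every real eigenvalue $\xi_k \in [0,2]$ of the Laplacian $L$, and the square-root term $\sqrt{\epsilon \xi_k + \epsilon^2/4}$ is well-defined and differentiable for $\xi_k \ge 0$, we may simply compute $d \lambda_{k,1}/d\xi_k$ and $d \lambda_{k,2}/d\xi_k$ from \eqref{eqn:lambdak1}--\eqref{eqn:lambdak2}.

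First I would compute
\begin{equation*}
\frac{d \lambda_{k,1}}{d \xi_k} = -\frac{1}{n} - \frac{1}{n}\cdot \frac{\epsilon}{2\sqrt{\epsilon \xi_k + \epsilon^2/4}},
\end{equation*}
which is manifestly strictly negative for every $\epsilon > 0$ and $\xi_k \ge 0$, so $\lambda_{k,1}$ is strictly decreasing in $\xi_k$. For the other branch,
\begin{equation*}
\frac{d \lambda_{k,2}}{d \xi_k} = -\frac{1}{n} + \frac{1}{n}\cdot \frac{\epsilon}{2\sqrt{\epsilon \xi_k + \epsilon^2/4}}.
\end{equation*}
To show this is non-positive, the only step requiring any care is establishing that $\epsilon/(2\sqrt{\epsilon \xi_k + \epsilon^2/4}) \le 1$. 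Squaring the inequality, this reduces to $\epsilon^2 \le 4 \epsilon \xi_k + \epsilon^2$, i.e.\ $0 \le 4 \epsilon \xi_k$, which holds since $\epsilon > 0$ and $\xi_k \ge 0$, with equality precisely when $\xi_k = 0$. Hence $d \lambda_{k,2}/d\xi_k \le 0$, with strict inequality whenever $\xi_k > 0$.

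Combining the two computations gives that both branches are monotonically (strictly, away from $\xi_k = 0$) decreasing in $\xi_k$, which is the claim. I do not anticipate any real obstacle: the whole proof is a two-line calculus exercise once Lemma~\ref{lemma10} is in hand. The one subtlety worth noting explicitly in the writeup is that Lemma~\ref{lemma12} is only claimed under the hypothesis that the eigenvalues of $L$ are real (so that $\xi_k$ can meaningfully be treated as a real parameter and the induced ordering on $\{\lambda_{k,1}\}$ and $\{\lambda_{k,2}\}$ is well-defined); this hypothesis is inherited from the statement and is exactly what makes the square-root term a real smooth function of $\xi_k$.
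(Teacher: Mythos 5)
Your proposal is correct and follows essentially the same route as the paper: the paper likewise notes that monotonicity of $\lambda_{k,1}$ is immediate from \eqref{eqn:lambdak1} and then differentiates $\lambda_{k,2}$ with respect to $\xi_k$, obtaining $-\tfrac{1}{n} + \tfrac{1}{n\sqrt{4\xi_k/\epsilon + 1}} \le 0$, which is algebraically identical to your expression $-\tfrac{1}{n} + \tfrac{1}{n}\cdot\tfrac{\epsilon}{2\sqrt{\epsilon\xi_k + \epsilon^2/4}}$. Your squaring argument for why the second term is at most $\tfrac{1}{n}$ is just a slightly more explicit rendering of the paper's one-line justification that the inequality follows from $\xi_k \ge 0$.
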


\begin{proof}
It is clear from \eqref{eqn:lambdak1} that $\lambda_{k,1}$ is a monotonic decreasing function of $\xi_k$. For $\lambda_{k,2}$ and for fixed $\epsilon > 0$, observe that
\begin{equation}
\frac{d \lambda_{k,2}}{d \xi_k} = - \frac{1}{n} + \frac{1}{n \sqrt{4 \xi_k / \epsilon + 1}} \le 0,
\end{equation}
where the inequality follows since $\xi_k \ge 0$. Therefore, $\lambda_{k,2}$ is also a monotonic decreasing function of $\xi_k$.
\end{proof}

\begin{proof}[Proof of Proposition~\ref{theorem4}]
Under the assumption that the eigenvalues of $L$ are nonnegative real numbers, we have
\begin{equation}
\frac{\epsilon}{2} \le \sqrt{\epsilon \xi_k + \frac{\epsilon^2}{4}} \le \sqrt{\left(\xi_k + \frac{\epsilon}{2}\right)^2} = \frac{\epsilon}{2} + \xi_k. \label{eqn:45}
\end{equation}
Substituting \eqref{eqn:45} into \eqref{eqn:lambdak1} and \eqref{eqn:lambdak2}, we get
\begin{align}
1 - \frac{\epsilon}{n} - \frac{2\xi_k}{n} \le &\ \lambda_{k,1} \le 1 - \frac{\epsilon}{n} - \frac{\xi_k}{n},\\
1 - \frac{\xi_k}{n} \le &\ \lambda_{k,2} \le 1,
\end{align}
where the equalities hold if and only if $k=1$ with corresponding $\xi_k=0$. From these expressions, it is clear that $\lambda_{1,2} =1$ is a simple eigenvalue of $\overline{W}$ and all other eigenvalues are strictly smaller than $1$ since $0 \le \xi_k \le 2$ and $\epsilon > 0$. Furthermore, convergence is guaranteed when all eigenvalues are strictly larger than $-1$. For a given $\xi_k$, observe that $\lambda_{k,1} \le \lambda_{k,2}$. In addition, from Lemma~\ref{lemma12}, we have that $\lambda_{k,1}$ is a monotonic decreasing function of $\xi_k$. Therefore, $\lambda_{k,2} \ge \lambda_{k,1} \ge \lambda_{n,1}$ for all $k=1,\dots,n$. Thus, we focus on determining conditions under which $\lambda_{n,1} > -1$. According to Lemma~\ref{lemma11}, $\lambda_{n,1}$ is a strictly decreasing function of $\epsilon$. If there exists $\eta$ such that $\lambda_{n,1} = -1$ when $\epsilon = \eta$, then $\lambda_{n,1} > -1$ when $\epsilon < \eta$. Solving \eqref{eqn:lambdak1} for $\lambda_{n,1} = -1$ we obtain
\begin{equation}
\eta = 2n + \frac{\xi_n^2}{2n} - 2 \xi_n, \label{eqn:48}
\end{equation}
which completes the proof.
\end{proof}

\begin{remark} \label{practicalBound}
In general, the value of $\xi_n$ depends on the network topology, and it may not be easy to determine a precise value of $\xi_n$. A more practical guideline is to take $\epsilon \in \big(0, \frac{2}{n} (n - 1)^2\big)$. To see why this is reasonable, differentiate \eqref{eqn:48} with respect to $\xi_n$:
\begin{equation}
\frac{d \eta}{d \xi_n} = \frac{\xi_n}{n} - 2 \le \frac{2}{n} - 2 \le 0.
\end{equation}
Therefore, $\eta$ is a monotonic decreasing function of $\xi_n$, and $\eta$ thus satisfies $\frac{2}{n}(n - 1)^2 \le \eta \le 2n$ since $0 \le \xi_n \le 2$. If the perturbation parameter $\epsilon$ is not larger than $\frac{2}{n}(n-1)^2$ then BBGA is guaranteed to converge in expectation.
\end{remark}

Note that, from Remark~\ref{practicalBound}, the upper bound $\eta$ is not smaller than $1$. In the following section we investigate what value of $\epsilon$ leads to the fastest convergence. We find that we typically seek values of $\epsilon$ less than $1$, and so this upper bound will suffice.

Although the guidelines derived above are for BBGA, in extensive simulations we have observed that the maximal value of $\epsilon$ under which UBGA still converges is typically no different than that for BBGA for a given graph. Therefore, the guidelines derived above can be also used as approximate guidelines for setting the parameters of UBGA.

\section{Optimal Perturbation Parameter} \label{sec:optimal}

In the previous section we determined an upper bound on the perturbation parameter $\epsilon$ under which convergence in expectation is guaranteed. This can be viewed as a sort of stability result. In this section we investigate what value of the perturbation parameter leads to the fastest rate of convergence. It is well known that the worst-case rate of convergence of systems of the form~\eqref{eqn:expectedUpdate} is governed by the second largest eigenvalue of $\overline{W}$. In the previous section we saw that, for BBGA, this second largest eigenvalue is $1-\epsilon/n$ if the perturbation parameter $\epsilon > 0$ is sufficiently small, and this eigenvalue is a monotonic decreasing function of $\epsilon$. At the same time, other eigenvalues of $\overline{W}$ are monotonic increasing, and so it follows that the optimal value of $\epsilon$ is the one where the modulus of $1 - \epsilon/n$ first coincides with the modulus of another eigenvalue of $\overline{W}$.

\begin{theorem} \label{theorem5}
Consider the expected update matrix $\overline{W}$ corresponding to BBGA, and suppose that all eigenvalues of the Laplacian $L = \diag(A \mathbf{1}) - A$ are real. For networks with at least three nodes, the modulus of the second largest eigenvalue of $\overline{W}$ is minimized when the perturbation parameter is equal to $\epsilon^* = \xi_2 / 2$, where $\xi_2$ is the second largest eigenvalue of the graph Laplacian $L$. In this case, the second largest eigenvalue of $\overline{W}$ is $1 - \xi_2 / (2n)$. When $n=2$, the second largest eigenvalue of $\overline{W}$ is minimized by $\epsilon^* = 2 - \sqrt{2}$.
\end{theorem}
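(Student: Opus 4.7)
The plan is to use the explicit eigenvalue formulas from Lemma~\ref{lemma10} together with the monotonicity results from Lemmas~\ref{lemma11} and \ref{lemma12} to track how the second-largest modulus
$\rho_2(\epsilon) = \max_{(k,i)\neq(1,2)} |\lambda_{k,i}(\epsilon)|$
varies as $\epsilon$ increases from $0$. Since $\lambda_{1,2}=1$ is the stable consensus eigenvalue, $\rho_2$ is what governs the worst-case convergence rate.

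For small $\epsilon>0$, we have $\lambda_{1,1}(\epsilon)=1-\epsilon/n$, which is positive and closest to $1$, while every other $\lambda_{k,i}$ is bounded away from $1$ by a continuity argument (since $\xi_k\ge \xi_2>0$ for $k\ge 2$); hence $\rho_2(\epsilon)=\lambda_{1,1}(\epsilon)$ for $\epsilon$ small. By Lemma~\ref{lemma11}, $\lambda_{1,1}$ strictly decreases in $\epsilon$, while each $\lambda_{k,2}$ for $k\ge 2$ strictly increases. By Lemma~\ref{lemma12}, $\lambda_{2,2}$ is the largest among the increasing branch. The natural candidate for $\epsilon^\ast$ is therefore the smallest value at which $\lambda_{1,1}$ meets $\lambda_{2,2}$. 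Setting $\lambda_{1,1}(\epsilon)=\lambda_{2,2}(\epsilon)$ gives $\xi_2 - \epsilon/2 = \sqrt{\epsilon\xi_2+\epsilon^2/4}$; squaring and simplifying yields $\epsilon^\ast = \xi_2/2$, at which both eigenvalues equal $1-\xi_2/(2n)$. For $\epsilon>\xi_2/2$, the monotonicity of $\lambda_{2,2}$ then forces $\rho_2(\epsilon)\ge\lambda_{2,2}(\epsilon)>1-\xi_2/(2n)$, and for $\epsilon<\xi_2/2$, $\rho_2(\epsilon)\ge\lambda_{1,1}(\epsilon)>1-\xi_2/(2n)$.

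The main obstacle is to verify that at $\epsilon=\xi_2/2$ no other eigenvalue of $\overline{W}$ exceeds $1-\xi_2/(2n)$ in modulus. The only real threat is $\lambda_{n,1}$, which is becoming increasingly negative under Lemma~\ref{lemma11}, so we need $\lambda_{n,1}(\xi_2/2)\ge -(1-\xi_2/(2n))$. Substituting into \eqref{eqn:lambdak1}, this amounts to
\[
2n-\xi_n-\tfrac{3}{4}\xi_2 \;\ge\; \sqrt{\tfrac{1}{2}\xi_n\xi_2+\tfrac{1}{16}\xi_2^2}.
\]
For $n\ge 3$ the left side exceeds $2n-7/2\ge 5/2$ (using $\xi_n,\xi_2\le 2$) while the right side is at most $3/2$, so the bound holds with room to spare, and squaring confirms the inequality for all admissible $\xi_2\le\xi_n\le 2$. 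Combined with Lemma~\ref{lemma12}, this also dominates $|\lambda_{k,1}(\xi_2/2)|$ for all $k$, completing the case $n\ge 3$.

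For $n=2$, strong connectivity forces the Laplacian $L=\bigl[\begin{smallmatrix}1 & -1\\ -1 & 1\end{smallmatrix}\bigr]$, so $\xi_2=2$ and the formula above gives $\lambda_{2,1}(1)=-1$, showing $\epsilon=\xi_2/2$ is not admissible. Here $|\lambda_{2,1}|$ dominates $|\lambda_{2,2}|$ for all $\epsilon>0$, so the optimum balances $|\lambda_{1,1}|=1-\epsilon/2$ against $|\lambda_{2,1}|=\epsilon/4+\tfrac{1}{2}\sqrt{2\epsilon+\epsilon^2/4}$. Squaring the equation $1-3\epsilon/4=\tfrac{1}{2}\sqrt{2\epsilon+\epsilon^2/4}$ reduces to $\epsilon^2-4\epsilon+2=0$, whose admissible root is $\epsilon^\ast=2-\sqrt 2$, giving the claimed value.
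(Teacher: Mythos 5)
Your proposal is correct and follows essentially the same route as the paper: the explicit spectrum from Lemma~\ref{lemma10} plus the monotonicity Lemmas~\ref{lemma11} and \ref{lemma12} reduce the problem to the two candidate crossings $\lambda_{1,1}=\lambda_{2,2}$ and $|\lambda_{1,1}|=|\lambda_{n,1}|$, with the first giving $\epsilon^*=\xi_2/2$ for $n\ge 3$ and the second giving $2-\sqrt{2}$ for $n=2$. The only (immaterial) difference is that you verify $\lambda_{n,1}(\xi_2/2)\ge-(1-\xi_2/(2n))$ by direct substitution, whereas the paper solves for the crossing point $\epsilon_2=\tfrac{1}{2}\bigl(3n-\xi_n-\sqrt{n^2+2n\xi_n-\xi_n^2}\bigr)\ge n-\xi_n$ and shows it exceeds $\xi_2/2$ when $n\ge 3$.
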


\begin{proof}
According to Lemma~\ref{lemma11}, for a fixed $\epsilon > 0$, the eigenvalues of $\overline{W}$ satisfy
\begin{equation}
1 - \epsilon / n = \lambda_{1,1} \ge \lambda_{2,1} \ge \dots \ge \lambda_{n,1},
\end{equation}
and
\begin{equation}
1 = \lambda_{1,2} \ge \lambda_{2,2} \ge \dots \ge \lambda_{n,2}.
\end{equation}
We also have monotonicity of the respective eigenvalues as a function of $\epsilon$ from Lemma~\ref{lemma12}. Because the eigenvalues are continuous functions of $\epsilon$, it follows that there are two points of interest where the second largest eigenvalue (in modulus) may switch from being $\lambda_{1,1} = 1 - \epsilon/n$. These are the points $\epsilon_1$ where $\lambda_{1,1}(\epsilon_1) = \lambda_{2,2}(\epsilon_1)$ and $\epsilon_2$ where $\lambda_{1,1}(\epsilon_2) = -\lambda_{n,1}(\epsilon_2)$. To complete the proof we can solve for these two values of and then determine that $\epsilon^* = \min\{\epsilon_1, \epsilon_2\}$.

Solving $\lambda_{1,1}(\epsilon_1) = \lambda_{2,2}(\epsilon_2)$, we have $\epsilon_1 = \xi_2/2$ and the corresponding eigenvalue of $\overline{W}$ is $1 - \xi_2 / (2n)$. 

To solve $\lambda_{1,1}(\epsilon_2) = - \lambda_{n,1}(\epsilon_2)$, observe that
\begin{align}
\epsilon_2 &= \frac{3n - \xi_n - \sqrt{n^2 + 2n \xi_n - \xi_n^2}}{2} \\
&\ge \frac{3n - \xi_n - \sqrt{(n + \xi_n)^2}}{2} \\
&= n - \xi_n.
\end{align}
Since $0 \le \xi_n \le 2$, it must be that $\epsilon_2 > 1$ if $n \ge 3$, from which we find that $\epsilon_2 > \epsilon_1$ since $\xi_2 / 2 < 1$. Therefore, the optimal perturbation parameter is $\epsilon^* = \epsilon_1 = \xi_2 / 2$ when $n \ge 3$. If $n=2$, then there is only one non-zero eigenvalue of the weighted Laplacian matrix $L$ and it is equal to $2$. In this case, $\epsilon_2 = 2 - \sqrt{2} < 1 = \epsilon_1$, so the optimal perturbation parameter is $\epsilon^* = 2 - \sqrt{2}$.
\end{proof}

\begin{remark}
Note that since the modulus of the second largest eigenvalue of $\overline{W}$ satisfies $|\lambda_{1,1}(\epsilon^*)| = \lambda_{1,1}(\epsilon^*) < \lambda_{1,1}(0) < 1$, we see that BBGA is guaranteed to converge in expectation for this setting.
\end{remark}

The above analysis focused on the case where the eigenvalues of $L$ are assumed to be real. In extensive simulations, we have observed that this is the case whenever $\calG$ is undirected, regardless of whether the edge weights are symmetric. For digraphs, the eigenvalues $\xi_i$ of $L$ are generally complex numbers, so a monotonicity property such as that obtained in Lemma~\ref{lemma12} is no longer readily available. Below we analyze the optimal value of the perturbation parameter on random digraphs via simulation. We find that $\tilde{\epsilon} = \Re(\xi_2) / 2$ is a good guideline for directed graphs.

\subsection{Undirected Graphs}

Consider an undirected graph as illustrated in Figure~\ref{epsilon_undirected} with 16 nodes distributed uniformly in the unit square. Nodes are connected if the Euclidean distance between them is no more than $\sqrt{2\log{n}/n}$ so that the graph $\calG$ is connected with probability at least $1-1/n^2$ \cite{GU98,PE03}; this is the standard random geometric graph model. For the graph shown in Fig.~\ref{epsilon_undirected}, the second smallest eigenvalue of weighted Laplacian matrix $L$ for BBGA is $\xi_{2}=0.5335$ so the optimal perturbation parameter for BBGA is $\epsilon^*=\xi_{2}/2=0.2668$.

\begin{figure}[!t]
\centering
\includegraphics[width=2.5in]{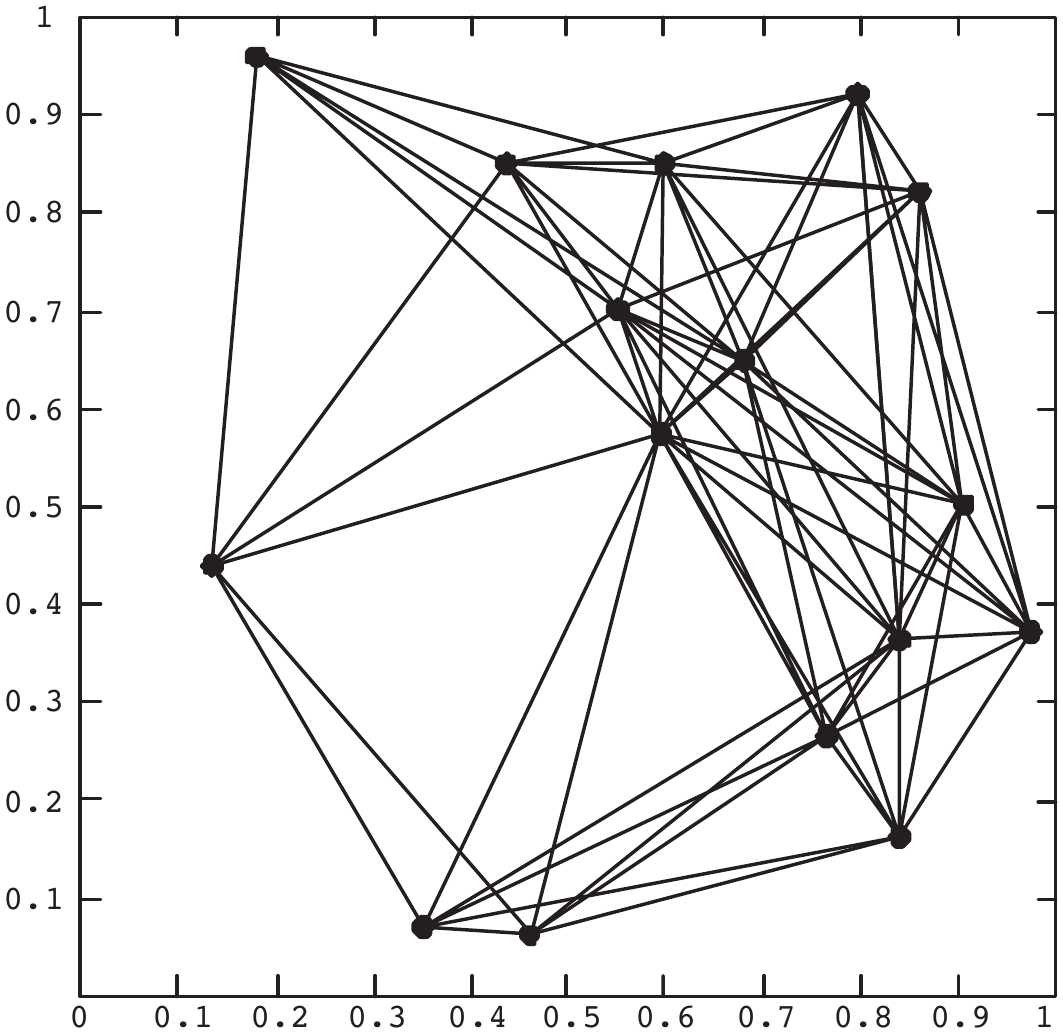}
\caption{An example for undirected graphs with 16 nodes}
\label{epsilon_undirected}
\end{figure}

\begin{figure}[!t]
\centering
\includegraphics[width=2.5in]{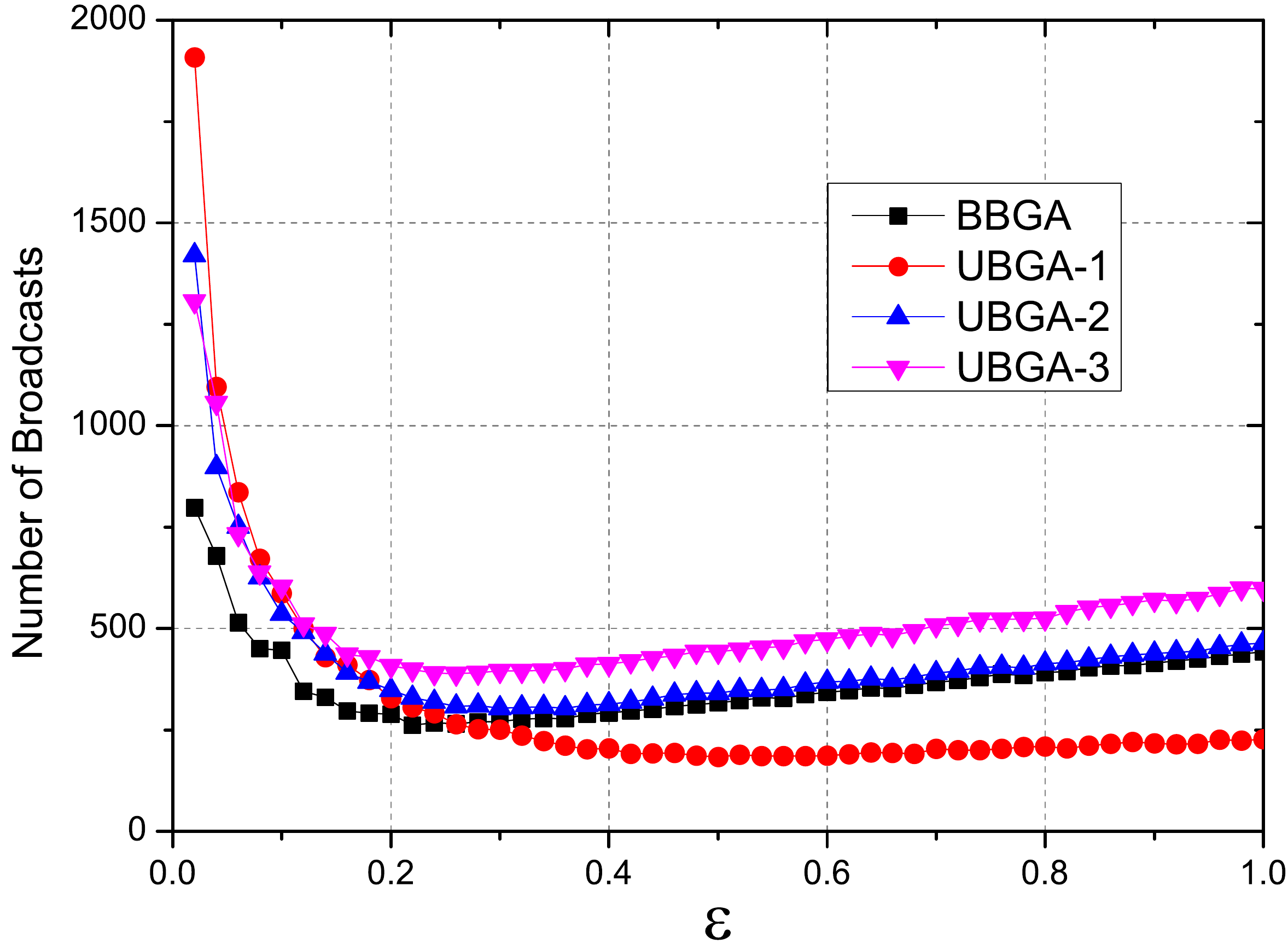}
\caption{Number of broadcasts to converge with respect to $\epsilon$ for simulations on the graph shown in Fig.~\ref{epsilon_undirected}.}
\label{epsilon_undirected_broadcasts}
\end{figure}

Figure~\ref{epsilon_undirected_broadcasts} shows the number of broadcasts to achieve consensus as a function of $\epsilon$. Each point is an average over 100 trials, and we sweep over values of $\epsilon$ from $0.02$ to $1$ in increments of $0.02$. The initial values of all nodes are independent and uniformly distributed between $0$ and $1$. Recall the error vector $m(t)$ defined in the proof of Theorem~\ref{thm:secondMoment}. We declare that consensus is achieved at the first iteration $t$ where $\|m(t) - m(t-1)\|_2 \le 10^{-5}$. Here we compare four broadcasts gossip algorithms. For BBGA we use the weights $a_{j,k}$ as defined in \eqref{eqn:bbga_a}. The three versions of UBGA have different choices of weights $a_{j,k}$; they are
\begin{equation}
a_{j,k} = \begin{cases} 0.5 & \text{ if } j \in \calN_k^- \text{ for UBGA-1},\\
1/\delta_j^+ & \text{ if } j \in \calN_k^- \text{ for UBGA-2},\\
1/\delta_j^- & \text{ if } j \in \calN_k^- \text{ for UBGA-3},\\
0 & \text{ if } j \notin \calN_k^-.
\end{cases} \label{eqn:ubga_as}
\end{equation}
Observe that the fastest convergence for BBGA occurs near $\epsilon=0.26$, which matches the value predicted by Proposition~\ref{theorem5}. Also observe that all three versions of UBGA have larger optimal perturbation parameter than BBGA. Using any version of UBGA at the optimal value $\epsilon^*$ for BBGA results in suboptimal performance. UBGA-1 exhibits a number advantages over the other algorithms: it converges in fewer broadcasts than the other algorithms for suitably chosen $\epsilon$, and the curve for UBGA-1 in Fig.~\ref{epsilon_undirected_broadcasts} is extremely flat near the optimal value, so its performance is very robust to the choice of $\epsilon$ in this region. From a practical perspective, UBGA-1 is also easy to implement in undirected networks since all weights $a_{j,k}$ are constants only depending on the network connectivity. For the graph shown in Fig.~\ref{epsilon_undirected}, the largest eigenvalue of $L$ for BBGA is $1.3796$, and the corresponding upper bound for $\epsilon$ is $29.30$, which can also be verified by simulation.

%\begin{conjecture}
%For any connected undirected graph, all eigenvalues of weighted Laplacian matrix $L$ in BBGA are real numbers.
%\end{conjecture}
%
%Since the weighted Laplacian matrix $L$ is generally not symmetric or positive semidefinite, we haven't found any method to prove eigenvalues of $L$ in BBGA are real. Actually, we have verified $Conjecture$ 1 by hundreds of simulation with different nodes in various undirected graphs, including random graphs, complete graphs, ring graphs and path graphs. All of these simulation results show us that $Conjecture$ 1 is right. Therefore, the optimal perturbation parameter of BBGA can be figured out if the graph is any undirected graph and topology information is available, which is a big breakthrough in this research field.

\subsection{Strongly Connected Digraphs}

In practical wireless settings, not all links may be symmetric due to differing transmit powers (e.g., if the batteries at different nodes have experienced different usage), multipath effects, or interference. To simulate directed networks, we begin with a (undirected) random geometric graph and then add and delete directed edges by random coin flips (while ensuring that the directed graph remains strongly connected. Figure~\ref{topo_directed} illustrates an example of a strongly connected directed graph. In this example, since the directed graph has fewer edges than the corresponding undirected graph shown in Fig.~\ref{epsilon_undirected}, one would expect that more broadcasts are needed to achieve consensus. The real part of the second smallest eigenvalue of $L$ for the directed graph in Fig.~\ref{topo_directed} is $0.3930$, so the approximately optimal perturbation parameter is $\epsilon^* = 0.1965$.

\begin{figure}[!t]
\centering
\includegraphics[width=2.5in]{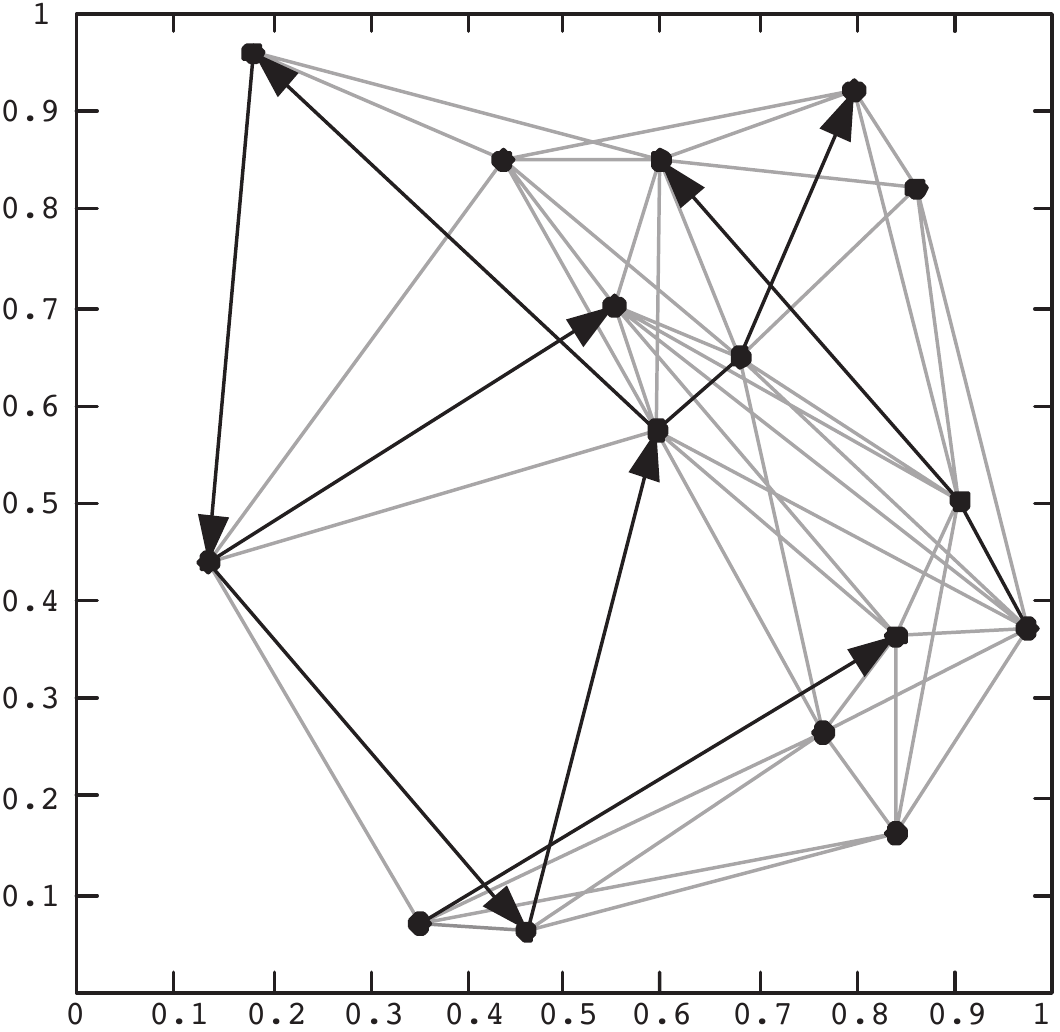}
\caption{An example for digraphs with 16 nodes. The gray lines denote undirected edges and lines with one arrow denote directed edges.}
\label{topo_directed}
\end{figure}

\begin{figure}[!t]
\centering
\includegraphics[width=2.5in]{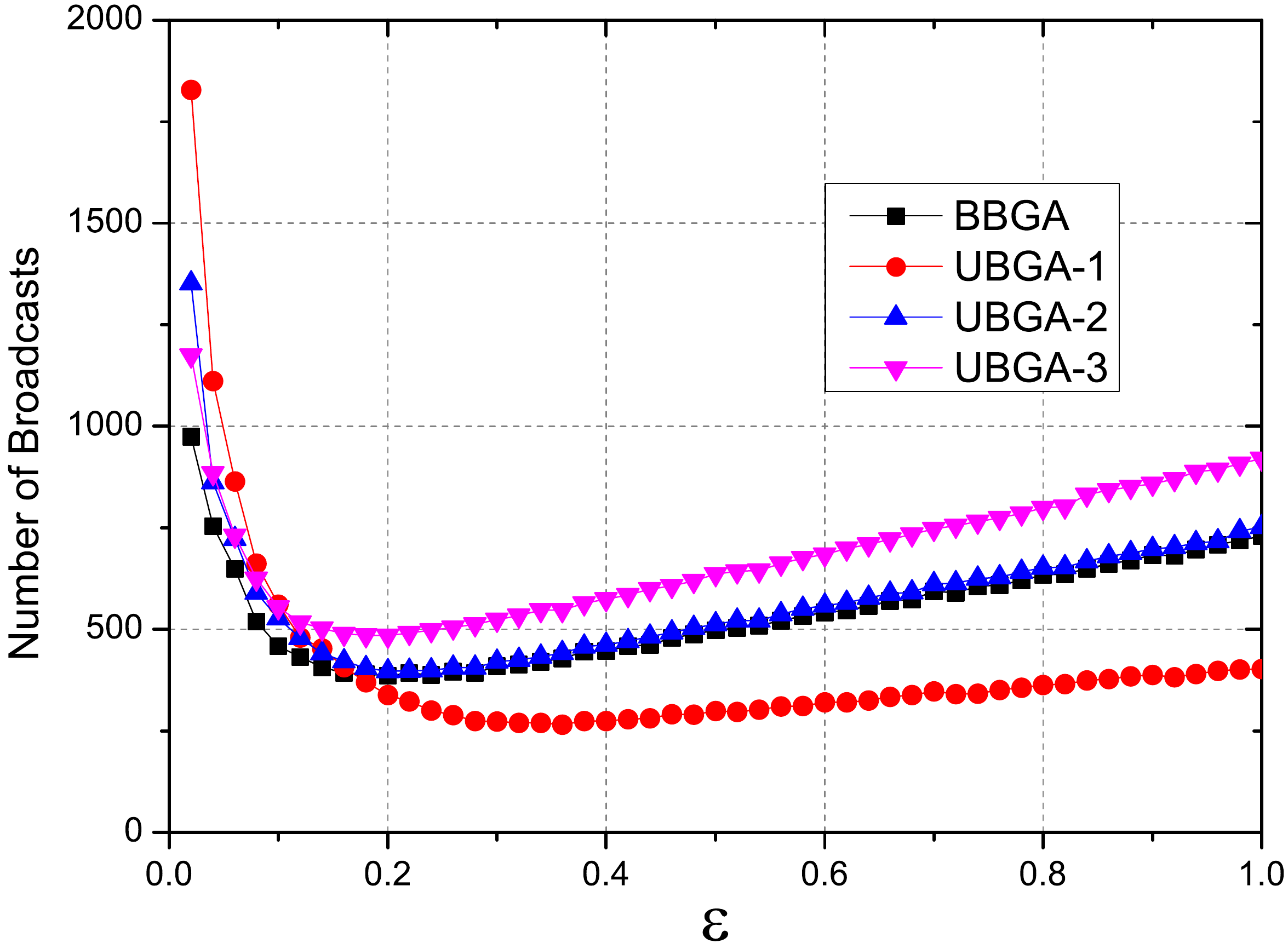}
\caption{Number of broadcasts to converge with respect to $\epsilon$ for simulations on the directed graph shown in Fig.~\ref{topo_directed}.}
\label{number-broadcasts}
\end{figure}

Fig.~\ref{number-broadcasts} illustrates the number of broadcasts with respect to perturbation parameter $\epsilon$ required to obtain $\|m(t) - m(t-1)\|_2 \le 10^{-5}$. The initial values are independent and uniform over $[0,1]$. The optimal perturbation parameter occurs at $0.20$, which matches the predicted value well. From this figure, we see that UBGA-1 still gives the best performance

\subsection{Scaling Behavior}

In the previous two subsections, we illustrate the performance of UBGA and BBGA on particular directed and undirected graphs. In this subsection, we demonstrate the scaling behavior of these two algorithms as the size of the network increases. We compare the performance of three varieties of UBGA and two varieties of BBGA. The three varieties of UBGA are those as defined above, with weights $a_{j,k}$ given in \eqref{eqn:ubga_as}, and with $\epsilon=0.5$. For BBGA, we use the same weights $a_{j,k}$ as given in \eqref{eqn:bbga_a} and set $\epsilon$ either to $0.5$ (BBGA-0.5) or to $\epsilon^*$ (BBGA-opt).

Following~\cite{AY09}, we investigate two metrics for error. The UBGA algorithms are guaranteed to converge to the average consensus solution, and we measure the mean squared error after $t$ iterations,
\begin{equation}
\textstyle r(t) = \frac{1}{n} \|x(t) - \frac{1}{n}\mathbf{1}\mathbf{1}^T x(0)\|_2^2.
\end{equation}
Since BBGA and other biased broadcast gossip algorithms do not converge to the average consensus, we measure their rate of convergence via the deviation,
\begin{equation}
\textstyle q(t) = \frac{1}{n} \|x(t) - \frac{1}{n}\mathbf{1} \mathbf{1}^T x(t)\|_2^2,
\end{equation}
which is guaranteed to go to zero. Note that we do not include the companion variables $y(t)$ in these calculations, since ultimately the aim is to reach consensus only on the states $x(t)$.

As above, we declare that consensus is achieved if $\|m(t) - m(t-1)\|_2 \le 10^{-5}$. In the sequel, we report the results of numerical simulations on random geometric graphs and random strongly connected digraphs simulated using the procedure described above. 

\subsubsection{Convergence rate}

\begin{figure}[!t]
\centering
\includegraphics[width=2.5in]{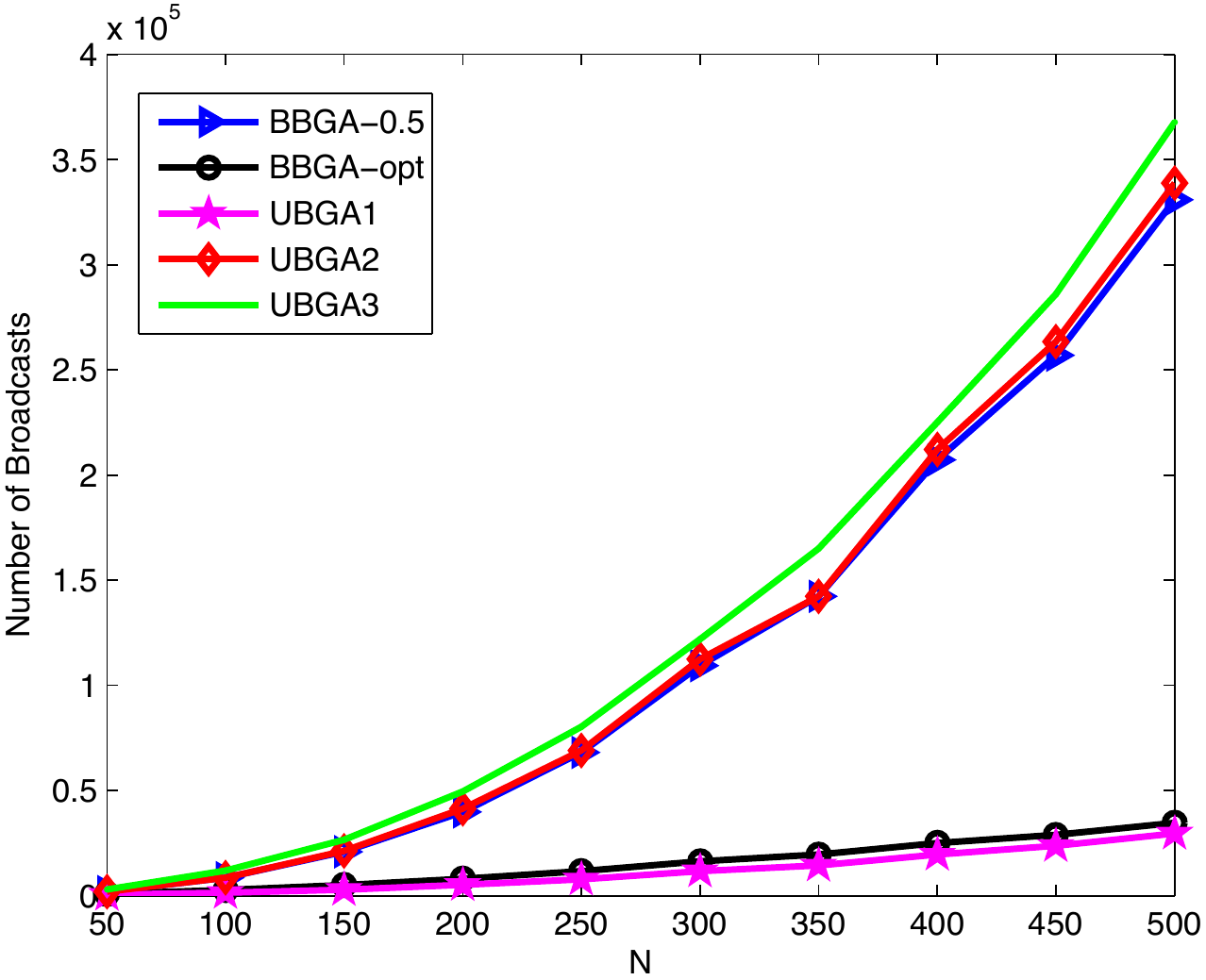}
\caption{Number of broadcasts to converge with respect to $N$ on undirected graphs.}
\label{undi:number-broadcasts}
\end{figure}
\begin{figure}[!t]
\centering
\includegraphics[width=2.5in]{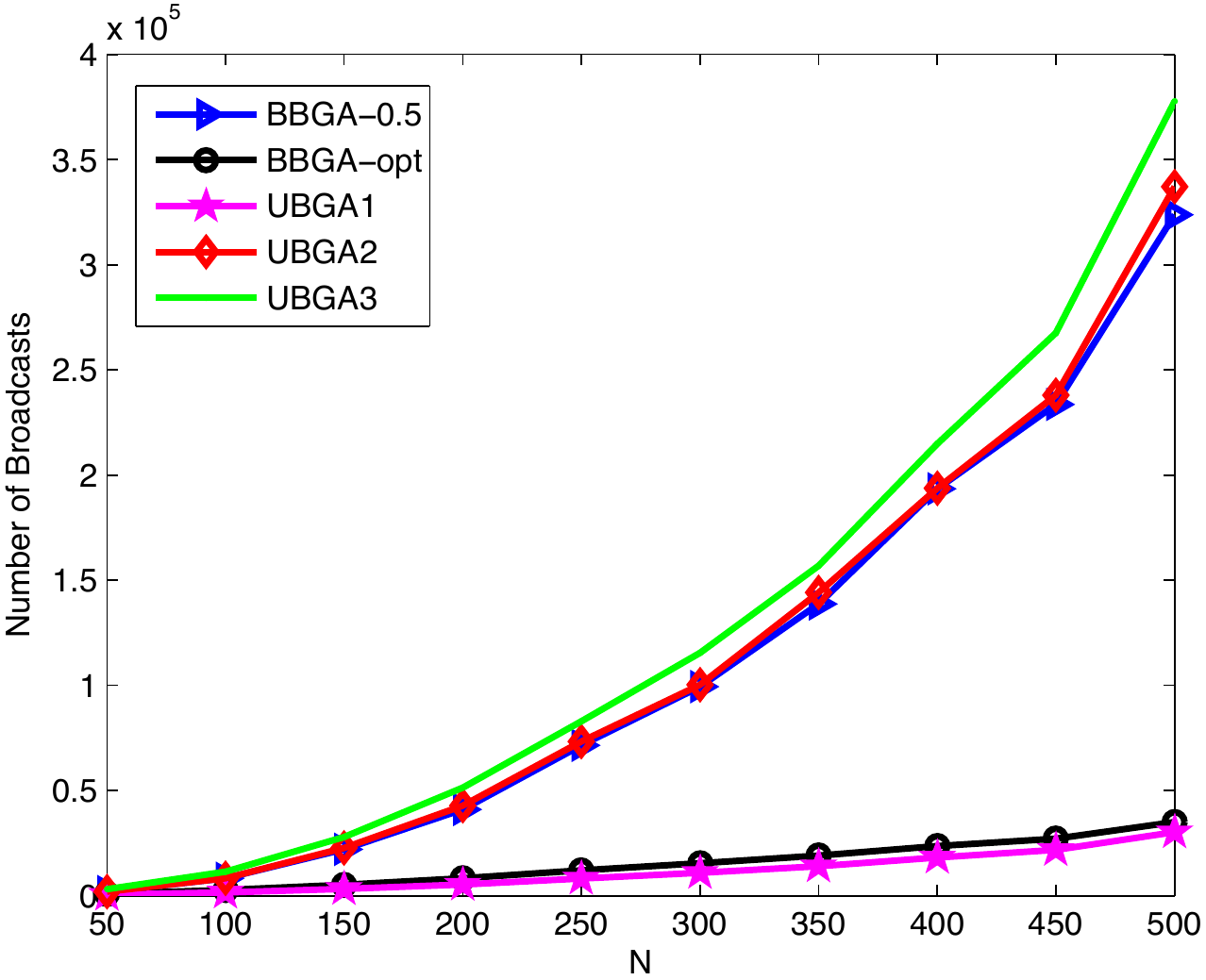}
\caption{Number of broadcasts to converge with respect to $N$ on digraphs.}
\label{dig:number-broadcasts}
\end{figure}

The number of transmissions required to achieve consensus for undirected graphs and digraphs are shown in Figs.~\ref{undi:number-broadcasts} and \ref{dig:number-broadcasts} respectively. Unsurprisingly, BBGA-opt converges significantly faster than BBGA-0.5. However UBGA-1 is still the best one in terms of converge rate performance of scaling behavior. Since UBGA-1 can achieve the average consensus, we prefer UBGA-1 if out-degree information is available; otherwise, BBGA-opt is the winner.

\subsubsection{Deviation}

\begin{figure}[!t]
\centering
\includegraphics[width=2.5in]{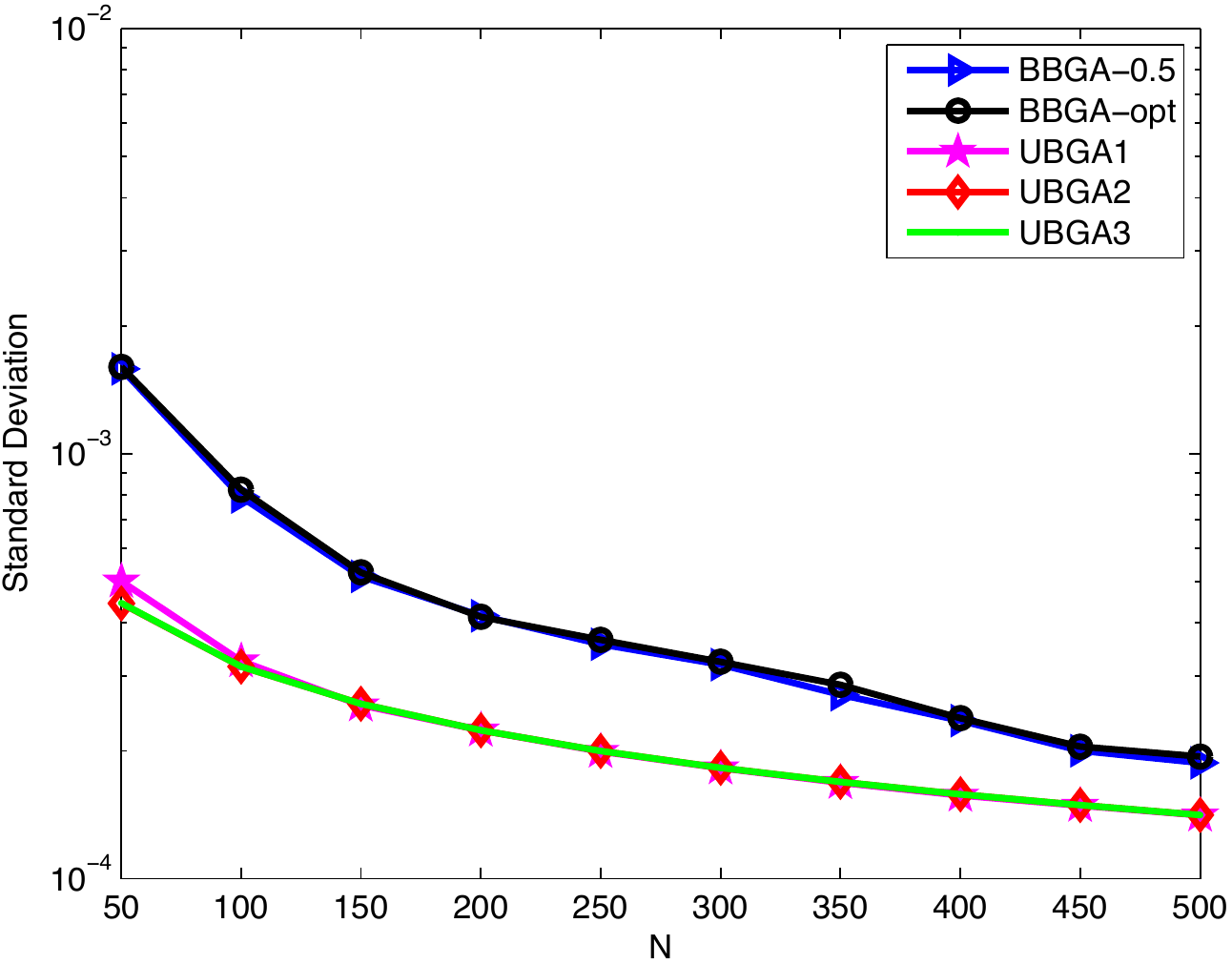}
\caption{The standard deviation performance with respect to $N$ on undirected graphs.}
\label{undi:std_dev}
\end{figure}

\begin{figure}[!t]
\centering
\includegraphics[width=2.5in]{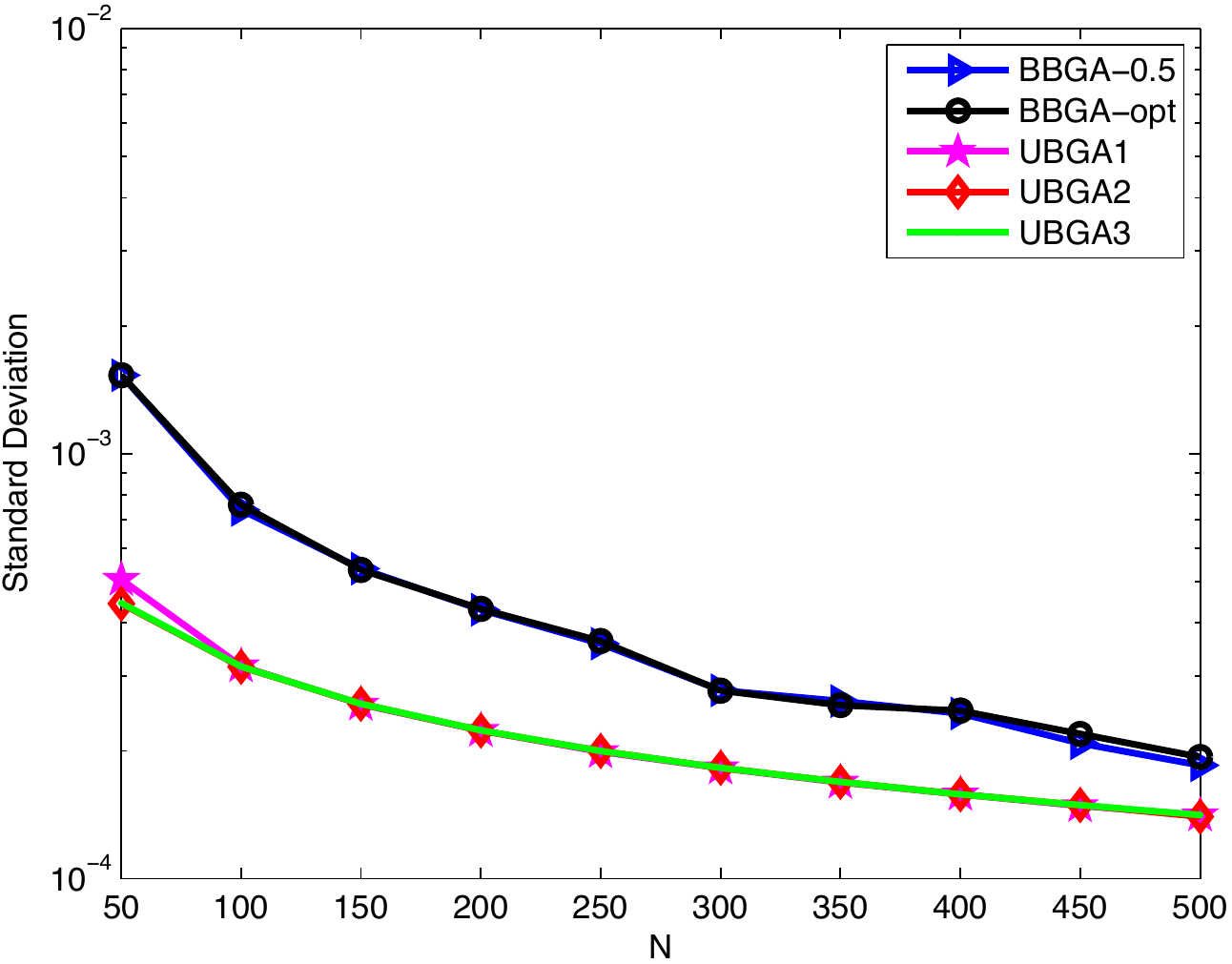}
\caption{The standard deviation performance with respect to $N$ on digraphs.}
\label{dig:std_dev}
\end{figure}

Figures~\ref{undi:std_dev} and \ref{dig:std_dev} show the deviation $q(t)$ at the time the algorithm is declared to have converged. For this particular initialization scheme (i.i.d.~uniform), we note that the BBGA algorithms are roughly half an order of magnitude worse than the UBGA schemes. We investigate the effects of initialization on deviation further in the next section. For now, we note that both versions of BBGA achieve comparable performance in terms of deviation, and likewise, all three versions of UBGA achieve effectively the same deviation at the time they converge.

In summary, from the experiments reported in this section we conclude that UBGA-1 is the most desirable solution if the out-degree information is available (including, when $\calG$ is undirected); otherwise BBGA-opt is the next most preferable since it gives the fastest rate of convergence.

\section{Performance Analysis} \label{sec:perf}

In this section, we compare the broadcast gossip algorithms proposed in this paper with the previous broadcast gossip algorithms of~\cite{AY09} and \cite{FR09}. In the figures and discussion below, BGA-1 refers to the algorithm in \cite{AY09}, BGA-2 refers to the one in \cite{FR09}, and BBGA and UBGA are the algorithms proposed in this paper. The previous section illustrated that UBGA-1 exhibits many advantages, both in terms of the choice of perturbation parameter and the rate of convergence, compared to the other UBGA algorithms. For this reason, in this section we use UBGA-1 as the representative of the UBGAs. For both UBGA and BBGA, we will investigate two settings for the perturbation parameter: $\epsilon = 0.5$ and $\epsilon^*$. Note that $\epsilon^*$ is only optimal for BBGA, and it may be suboptimal for UBGA. We also remark that the comparison of BGA-1 and BGA-2 with BBGA-opt and UBGA-opt (i.e., those using $\epsilon^*$) is unfair, since the information used to determine $\epsilon^*$ is not made available to either BGA-1 or BGA-2; in particular, neither of those algorithms uses global topology information such as $\xi_2$. This is our primary motivation for also considering the performance of UBGA and BBGA with $\epsilon = 0.5$.

%By analysis, we found an interesting result that the expected optimal value for perturbation parameter is approximate 0.5 for any random graph. For example, if there are 9 nodes in an undirected graph, the optimal $\epsilon$ is 0.90 for complete graph and 0.04 for path graph respectively. When the number of nodes increases, the optimal $\epsilon$ will approach to 1 for complete graph and 0 for path graph.

All simulations in this section use (undirected) random geometric graph topologies with the same connectivity radius as in the previous section. When $\calG$ is directed, BGA-1 is no longer guaranteed to converge to the average consensus in expectation. Unless otherwise noted, each result corresponds to the average over $100$ Monte Carlo trials.

Since the initial values effect the performance of various broadcast gossip algorithms, we consider four approaches to initializing the values $x_i(0)$: 1) independent and \emph{uniform} over $[0,1]$; 2) independent and \emph{Gaussian} with zero mean, unit variance; 3) the \emph{spike} initialization, where one random node has an initial value of $1$ and all other nodes have initial values $0$; and 4) the \emph{slope} initialization, the initial value at node $i$ is the sum of its x- and y-coordinates in the unit square (note all graphs are drawn from the ensemble of random geometric graphs in the plane.

\subsection{Deviation}

Figures~\ref{uniform:sim_var_und}--\ref{spike:sim_var_und} show the deviation $q(t) = \frac{1}{n}\|x(t) - \frac{1}{n}\mathbf{1}\mathbf{1}^T x(t)\|_2^2$ as a function of time for different initializations. Note that this indicates how quickly the algorithms converge to a consensus, regardless of the value on which consensus is achieved.

It is clear from these figures that BGA-1 converges to its final value faster than the other algorithms. As we will see below, this is because BGA-1 generally achieves a lower accuracy (in terms of mean squared error, $r(t)$) than the other methods.

The algorithms BGA-2, BBGA, and UBGA all maintain companion variables. Among these algorithms we observe that BGA-2 converges slower, in general, than BBGA. Also note that BBGA-opt converges significantly faster than BBGA-0.5 when $n=50$ or $100$, but the performance of the two is much closer for larger networks. Somewhat surprisingly, the deviation of UBGA-opt and UBGA-0.5 are typically better or comparable to BBGA-opt. This is surprising because it indicates that UBGA is converging faster, despite the fact that it is converging to the average consensus. On the other hand, BGA-1 converges quickly to a consensus which is not on the average, and BGA-2 typically converges to the average consensus but more slowly than UBGA. We conclude that UBGA strikes a desirable balance between converging quickly while achieving consensus on the average.

\begin{figure*}[!t]
\centering{\subfigure[$n=50$]{\includegraphics[width=2in]{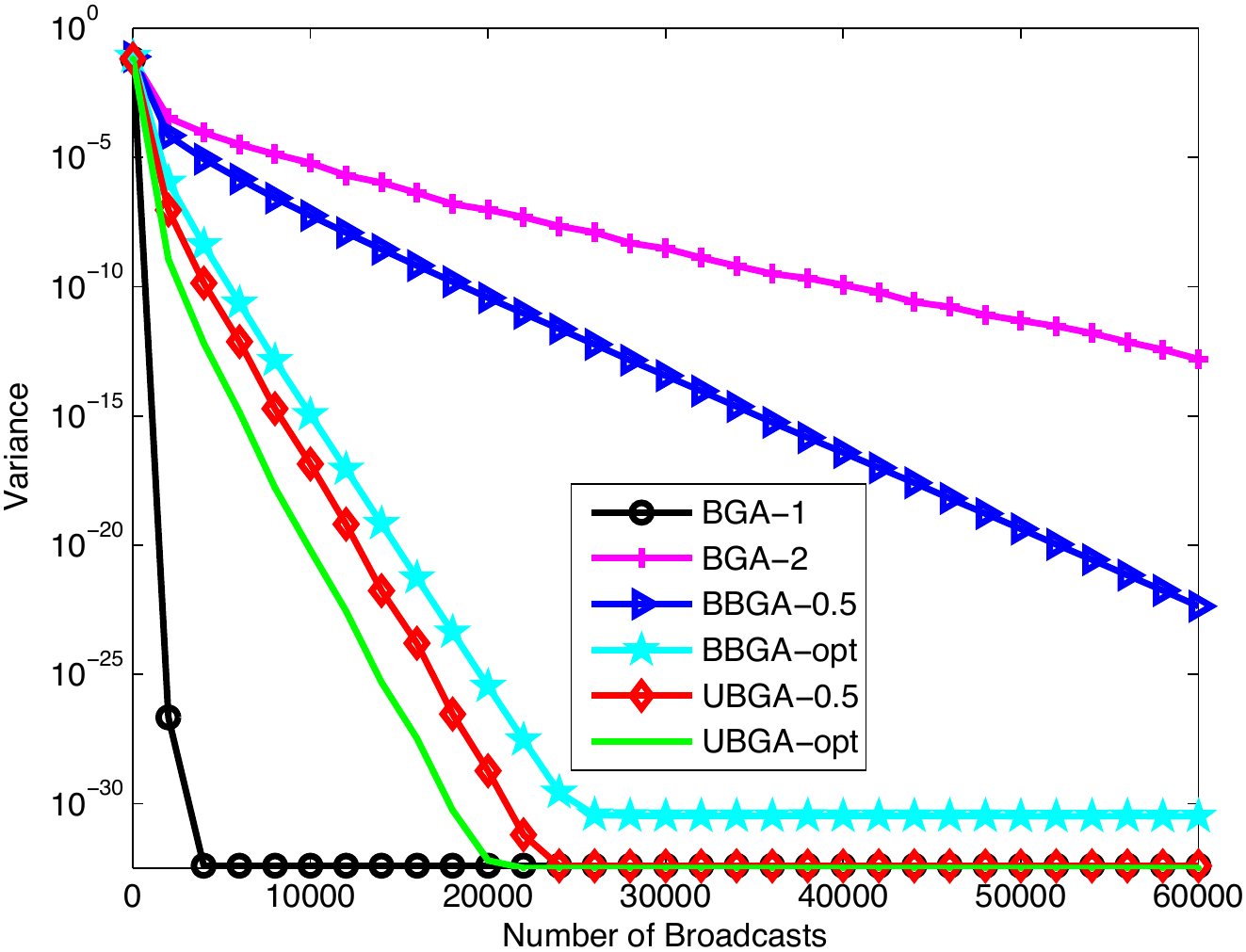}
\label{uniform:variance for undirected graphs with 50 nodes}}
\hfil
\subfigure[$n=100$]{\includegraphics[width=2in]{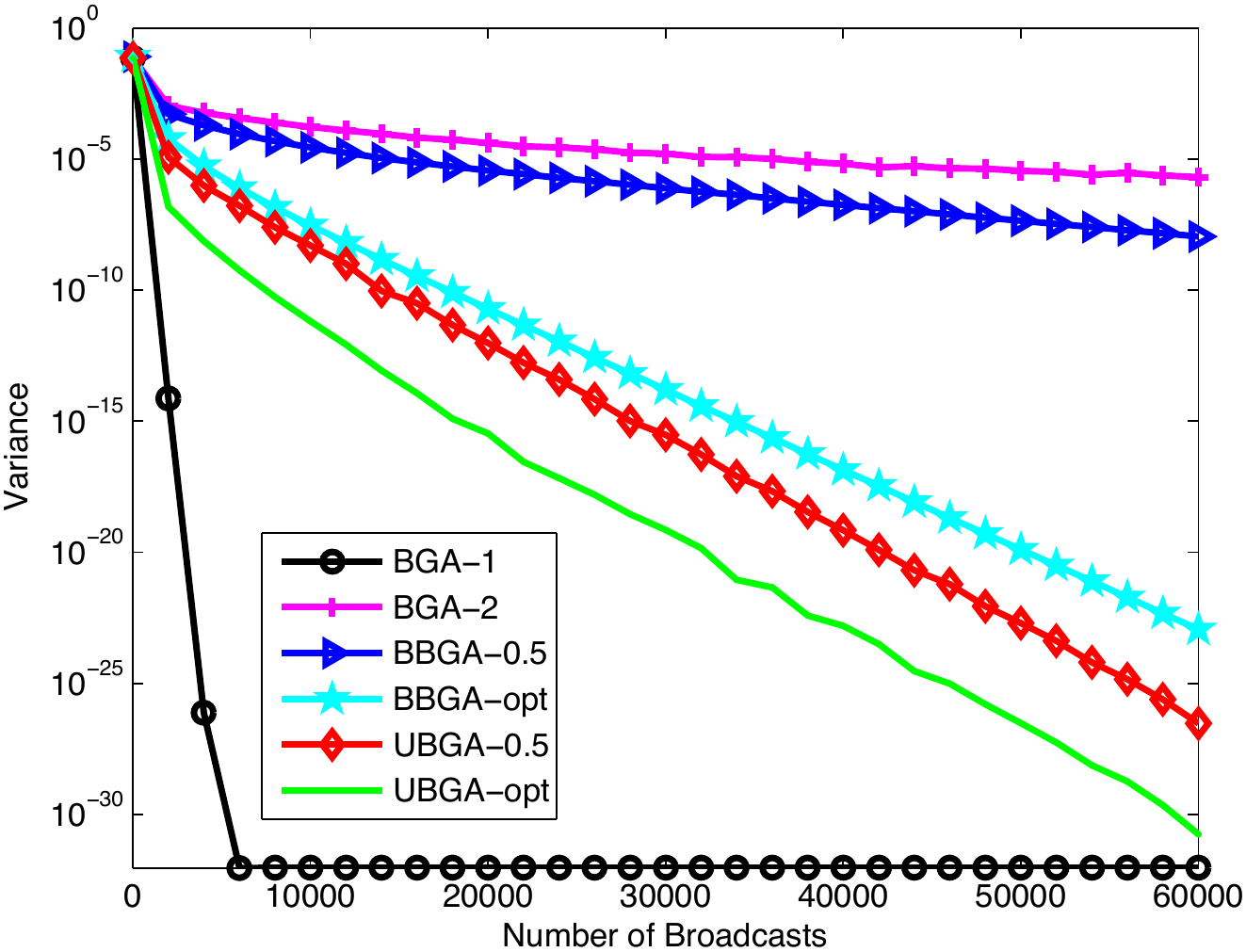}
\label{uniform:variance for undirected graphs with 100 nodes}}
\hfil
\subfigure[$n=500$]{\includegraphics[width=2in]{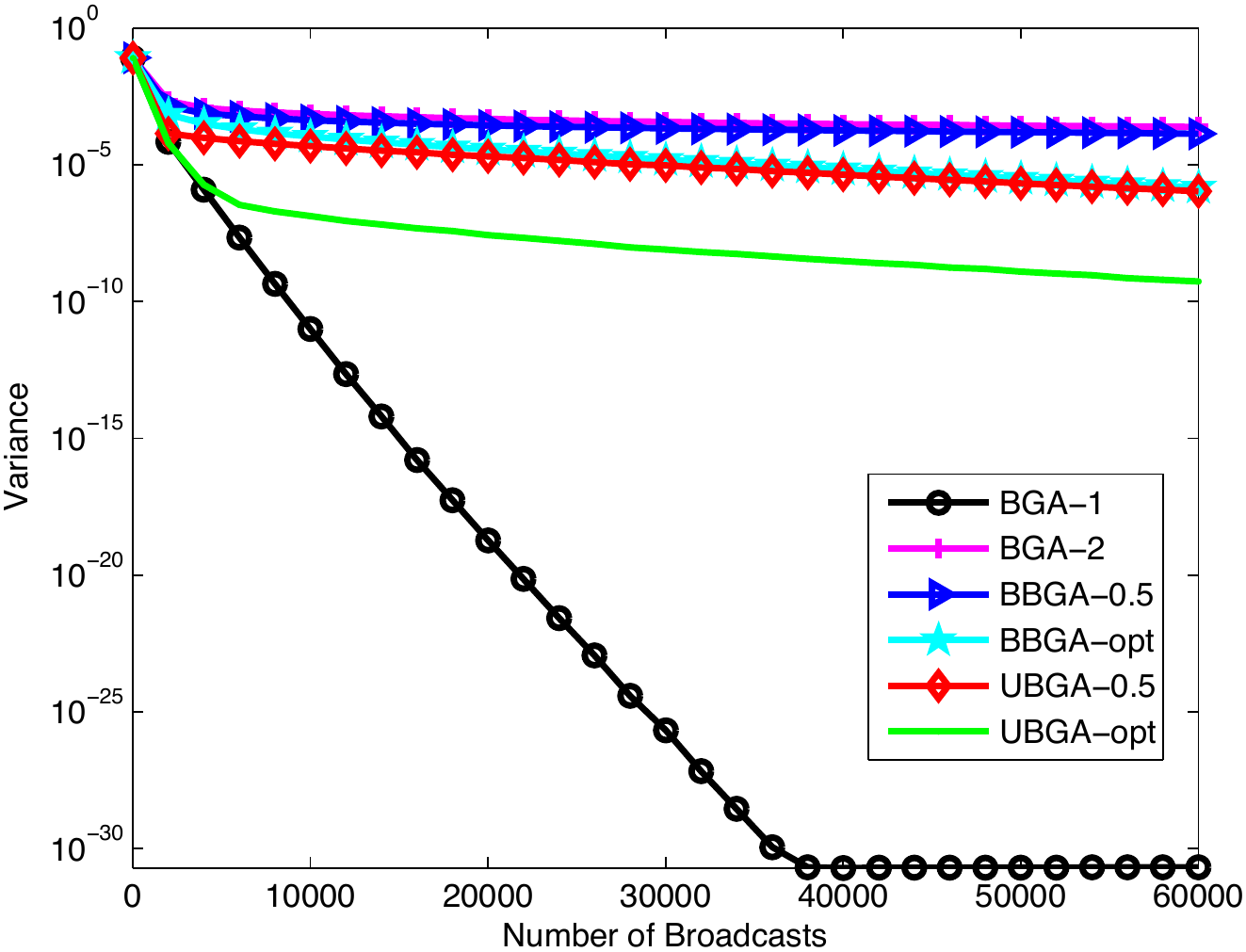}
\label{uniform:ariance for undirected graphs with 500 nodes}}}
\caption{The deviation of BGA-1, BGA-2, BBGA and UBGA with respect to the number of broadcasts on undirected random geometric graphs with uniform distribution for initial values.}
\label{uniform:sim_var_und}
\end{figure*}

\begin{figure*}[!t]
\centering{\subfigure[$n=50$]{\includegraphics[width=2in]{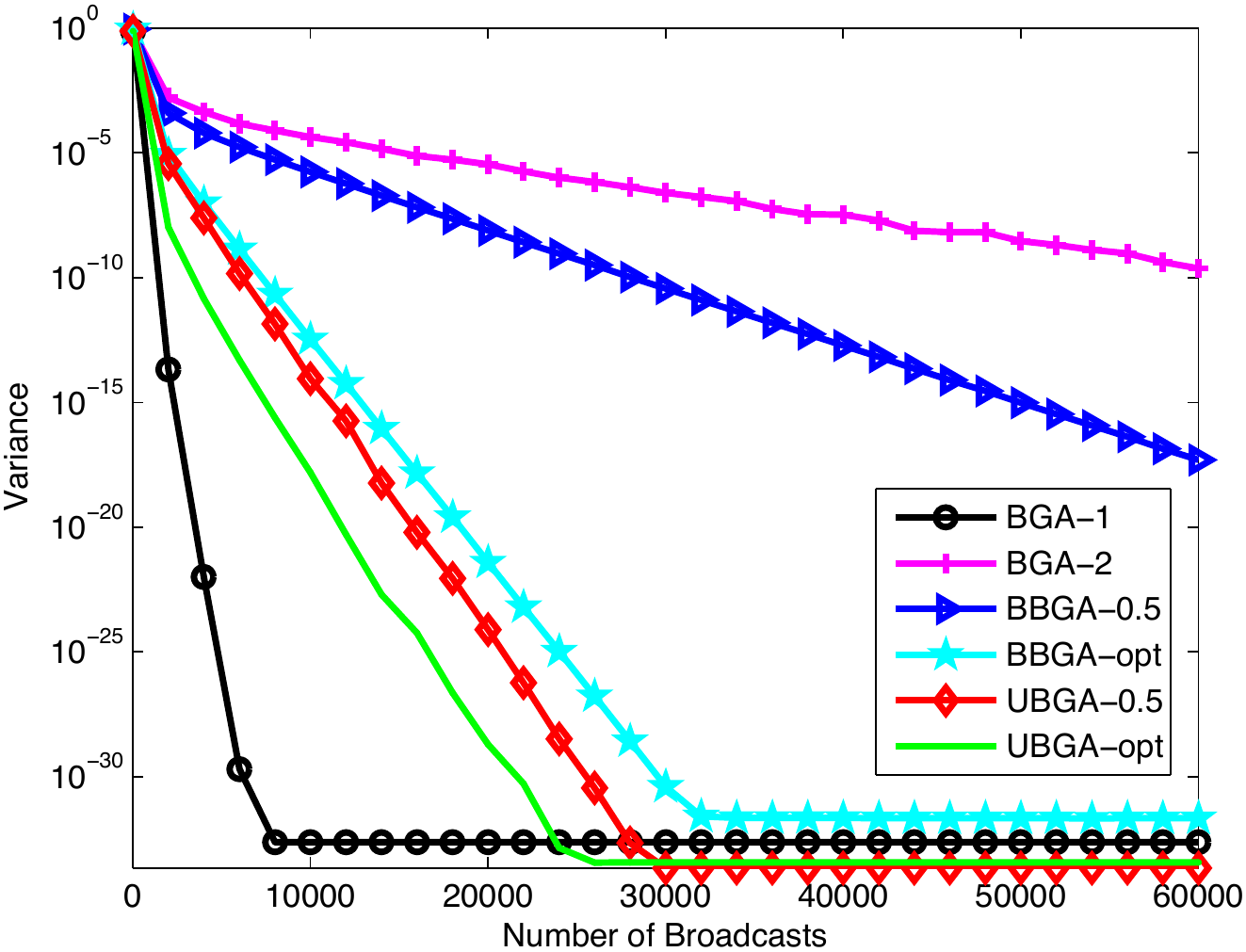}
\label{gaussian:variance for undirected graphs with 50 nodes}}
\hfil
\subfigure[$n=100$]{\includegraphics[width=2in]{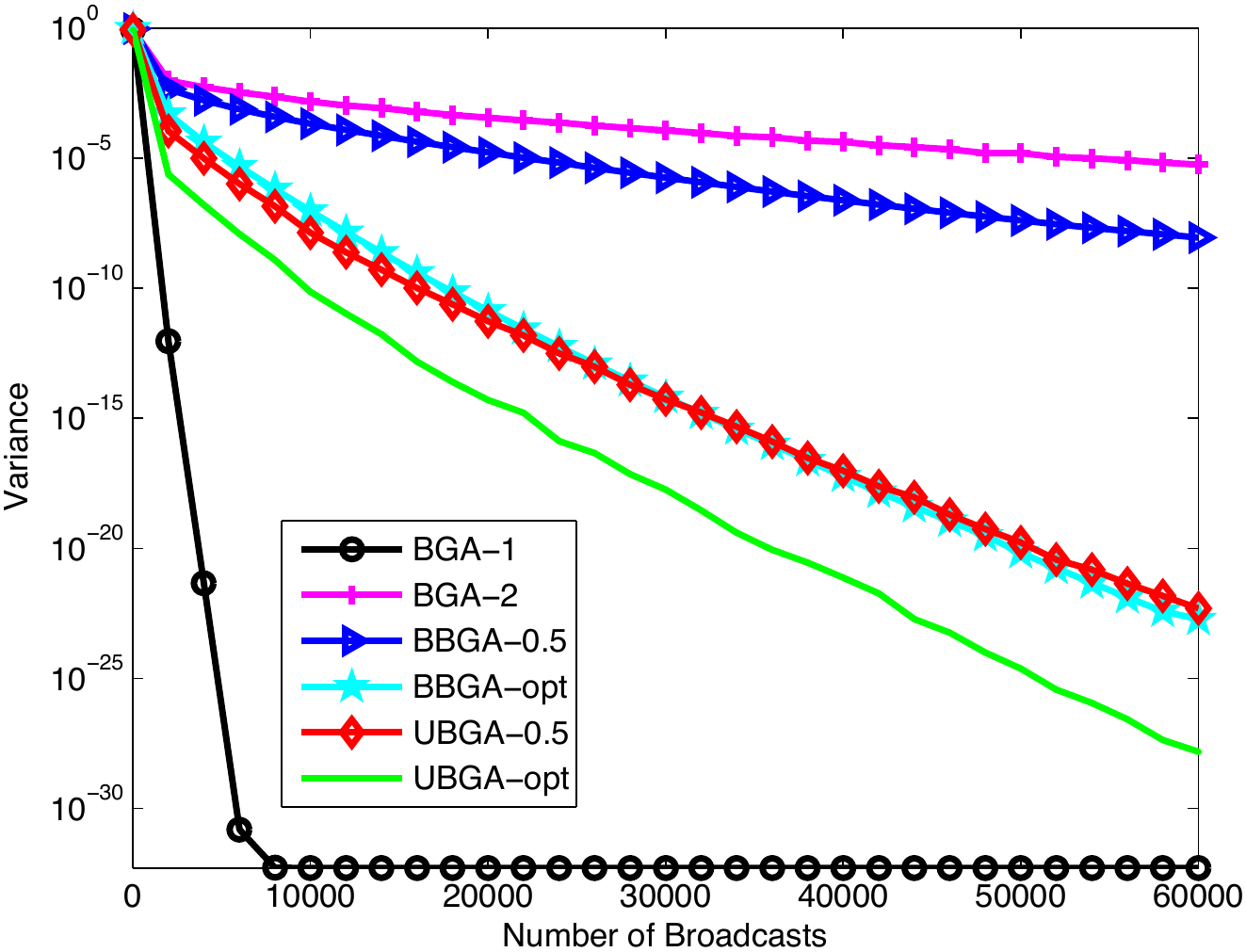}
\label{gaussian:variance for undirected graphs with 100 nodes}}
\hfil
\subfigure[$n=500$]{\includegraphics[width=2in]{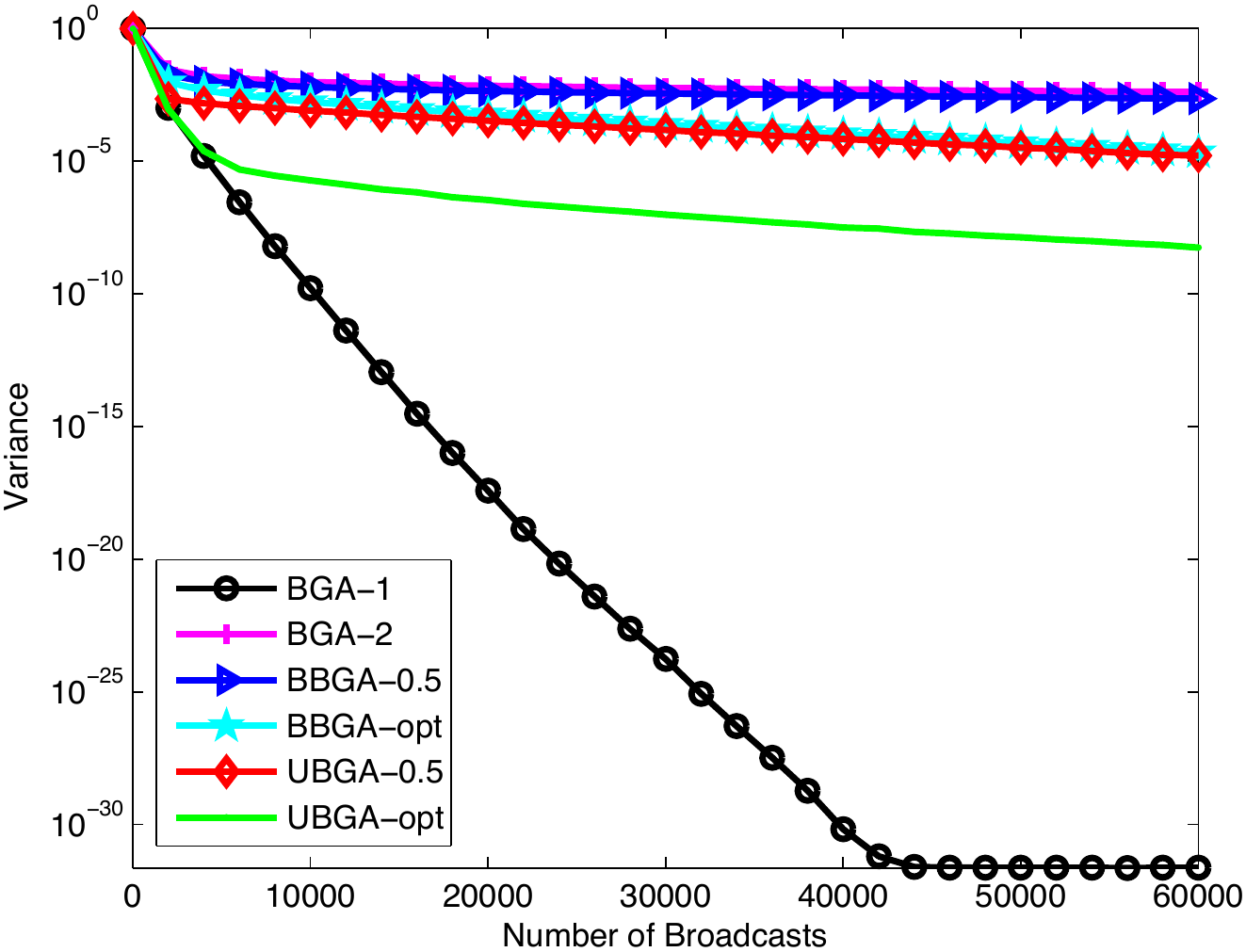}
\label{gaussian:ariance for undirected graphs with 500 nodes}}}
\caption{The deviation of BGA-1, BGA-2, BBGA and UBGA with respect to the number of broadcasts on undirected random geometric graphs with Gaussian initial values.}
\label{gaussian:sim_var_und}
\end{figure*}

\begin{figure*}[!t]
\centering{\subfigure[$n=50$]{\includegraphics[width=2in]{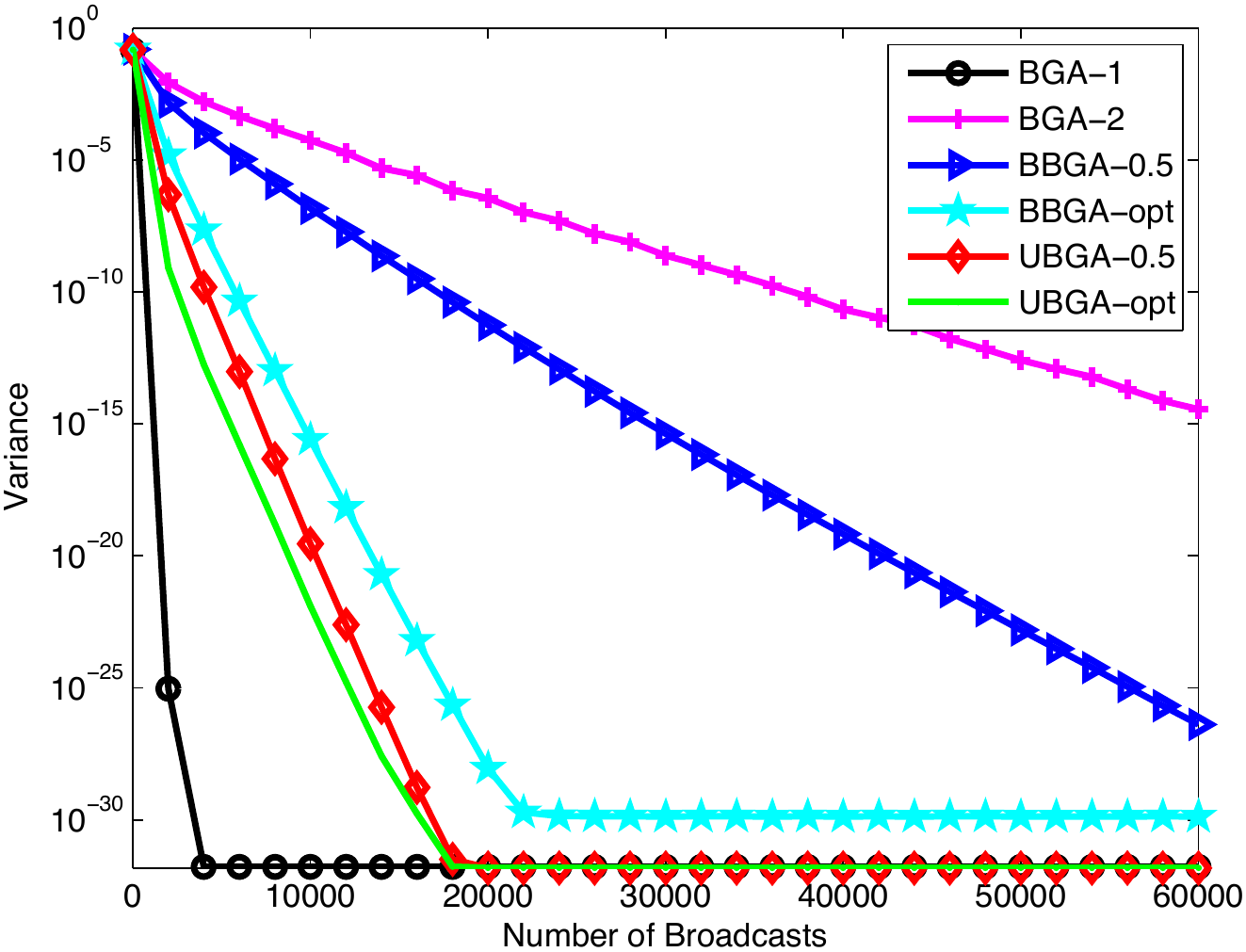}
\label{slope:variance for undirected graphs with 50 nodes}}
\hfil
\subfigure[$n=100$]{\includegraphics[width=2in]{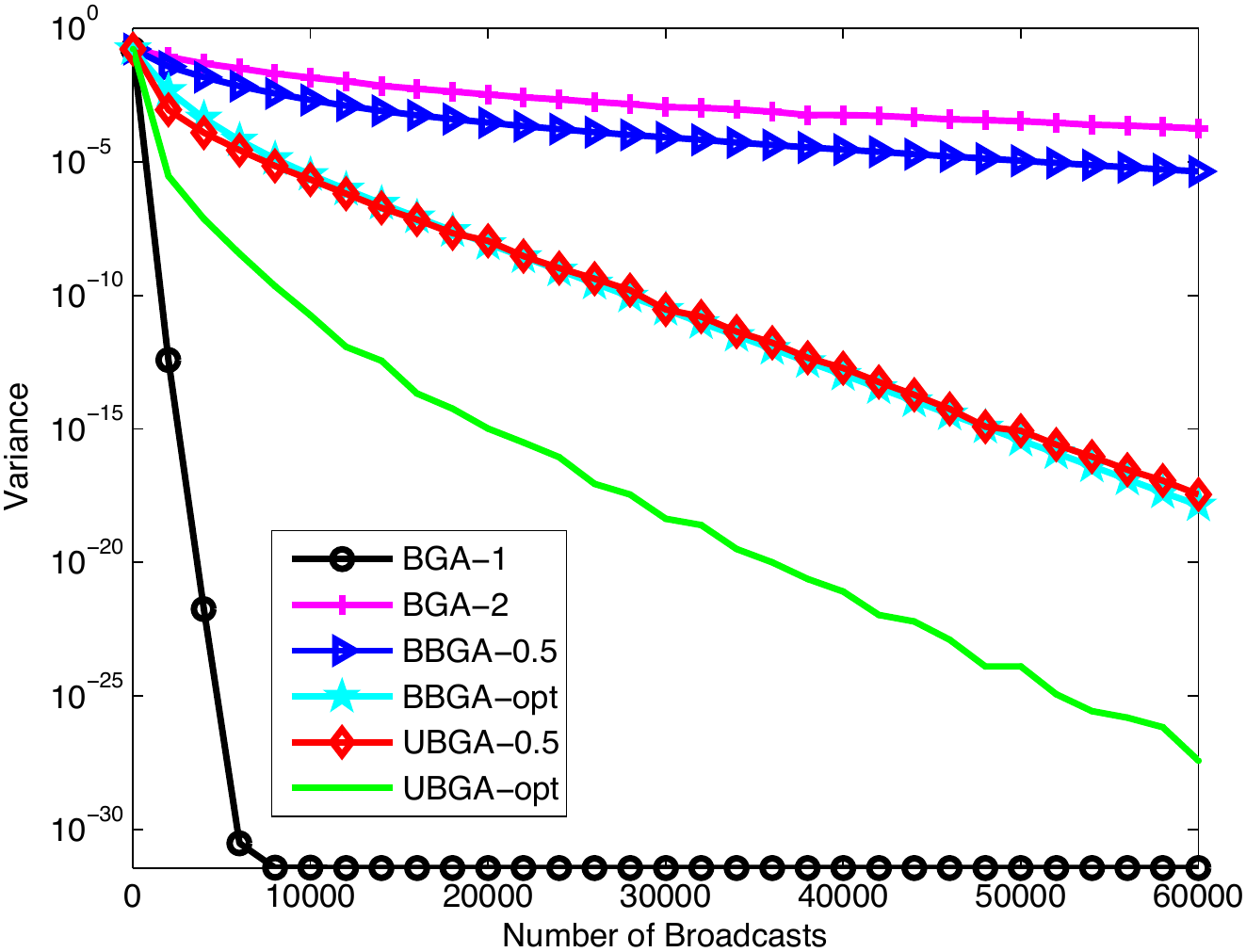}
\label{slope:variance for undirected graphs with 100 nodes}}
\hfil
\subfigure[$n=500$]{\includegraphics[width=2in]{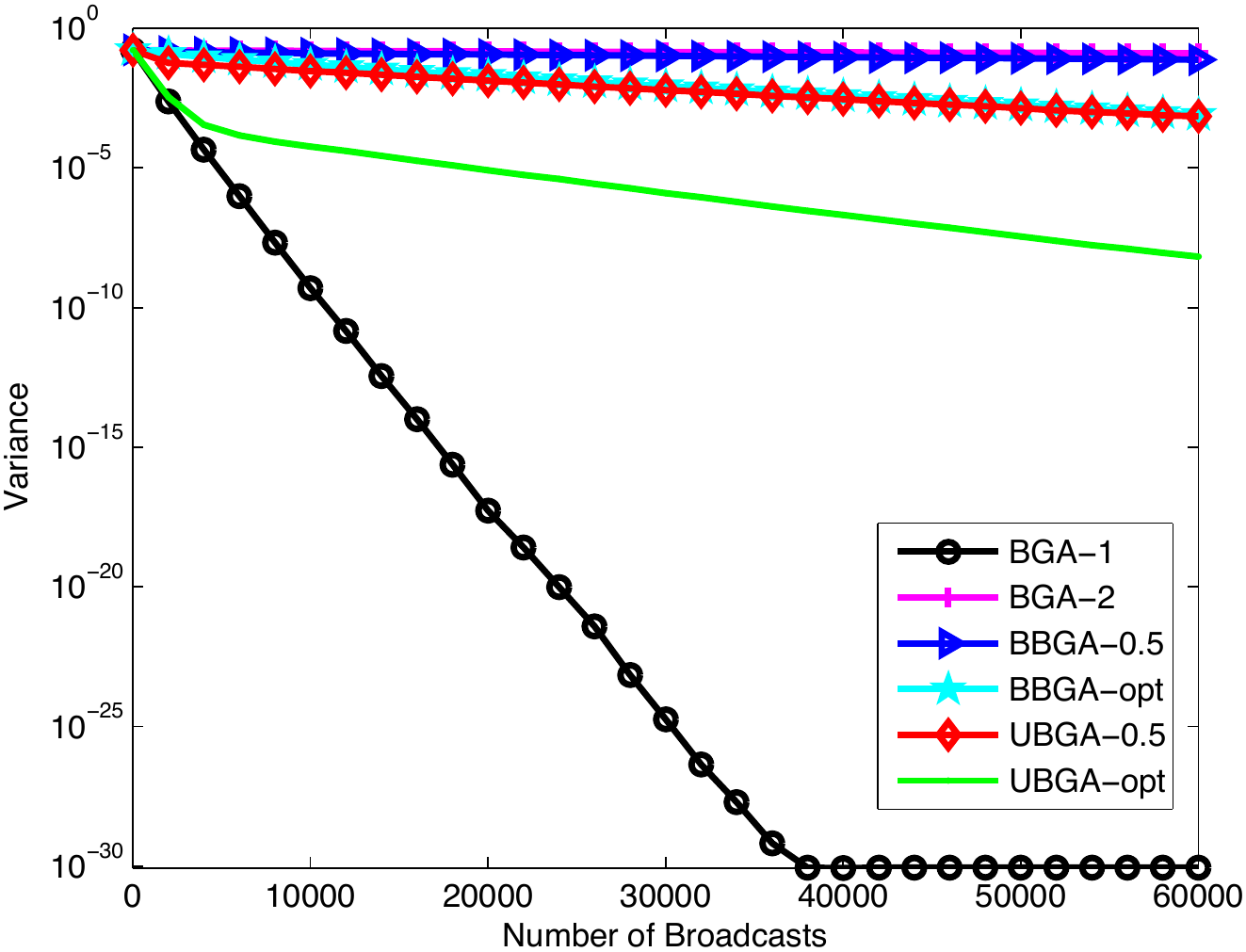}
\label{slope:ariance for undirected graphs with 500 nodes}}}
\caption{The deviation of BGA-1, BGA-2, BBGA and UBGA with respect to the number of broadcasts on undirected random geometric graphs with slope initial values.}
\label{slope:sim_var_und}
\end{figure*}

\begin{figure*}[!t]
\centering{\subfigure[$n=50$]{\includegraphics[width=2in]{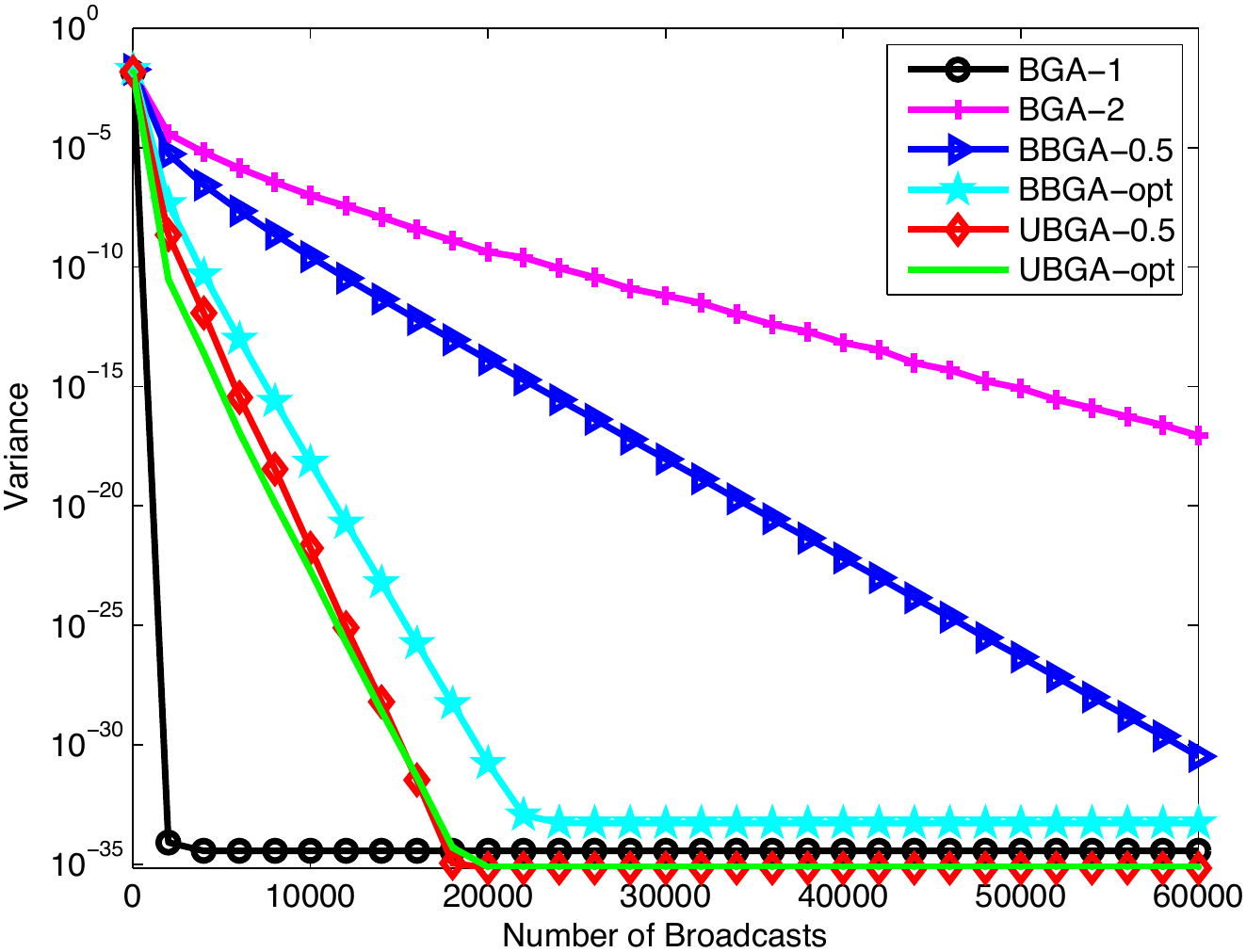}
\label{spike:variance for undirected graphs with 50 nodes}}
\hfil
\subfigure[$n=100$]{\includegraphics[width=2in]{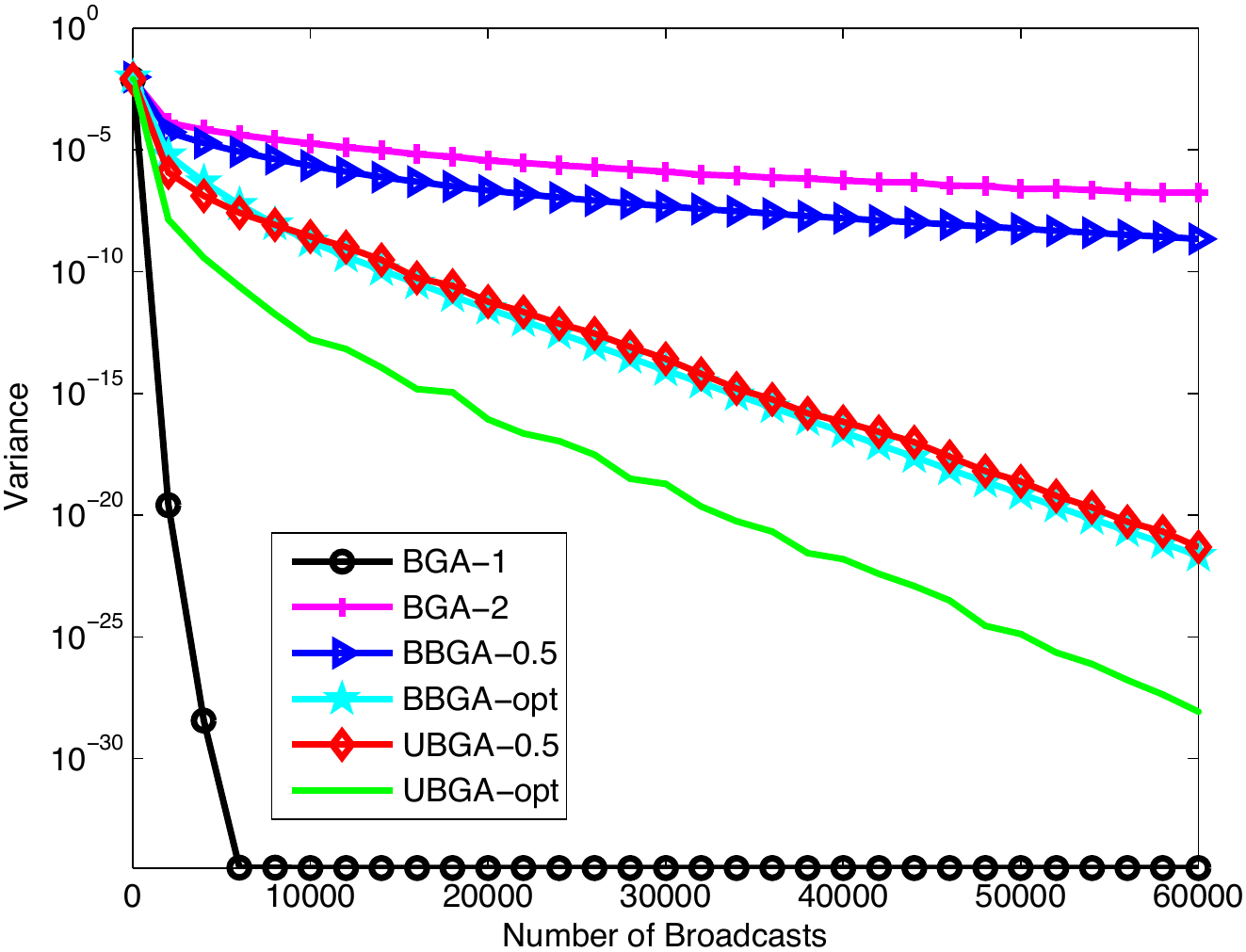}
\label{spike:variance for undirected graphs with 100 nodes}}
\hfil
\subfigure[$n=500$]{\includegraphics[width=2in]{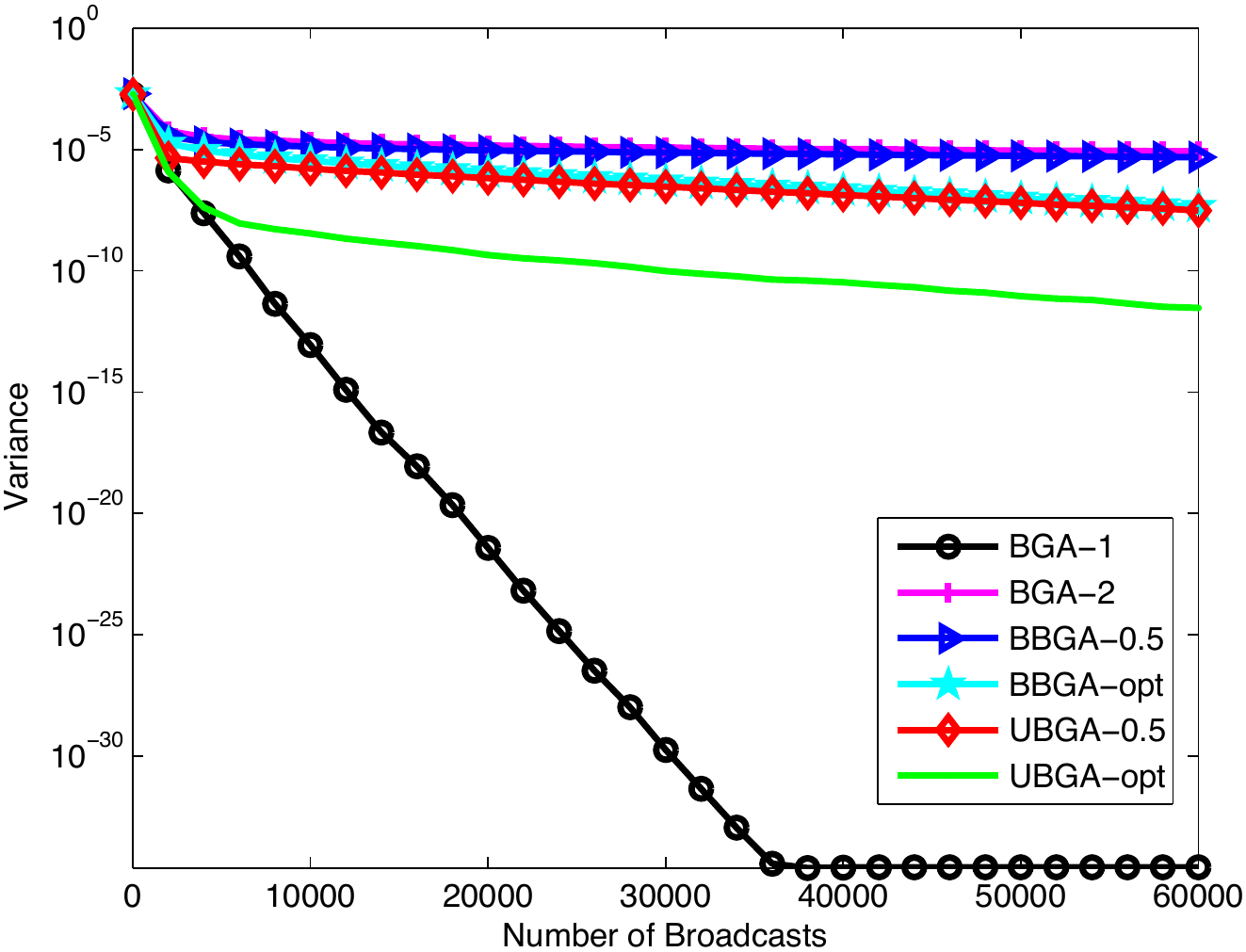}
\label{spike:ariance for undirected graphs with 500 nodes}}}
\caption{The deviation of BGA-1, BGA-2, BBGA and UBGA with respect to the number of broadcasts on undirected random geometric graphs with spike initial values.}
\label{spike:sim_var_und}
\end{figure*}

\subsection{Mean Squared Error}

Figs.~\ref{uniform:mse-und}--\ref{spike:mse-und} show the mean squared error $r(t) = \frac{1}{n} \|x(t) - \frac{1}{n}\mathbf{1}\mathbf{1}^T x(0)\|_2^2$ as a function of $t$ for all four algorithms and the four different initializations on networks of $n=50$, $100$, and $500$ nodes. UBGA generally has the best performance among all algorithms, in the sense that a small deviation is achieved with relatively few broadcasts. BGA-1 has a high deviation; it is well-known that it converges quickly but that it does not converge to the average consensus. When out-degree information is available UBGA is preferable. For networks with $n=50$ or $100$ nodes, using $\epsilon = 0.5$ is close enough to optimal that the performance is extremely good for UBGA-0.5. For larger graphs, the performance of UBGA-opt dominates that of UBGA-0.5. An interesting open problem is to come up with a better practical guideline for setting $\epsilon$ as a function of network size and structure, e.g., for random geometric graphs.

It is interesting to note that BBGA has better performance than BGA-2 for a smaller number of broadcasts. Since BGA-2 converges to the average consensus in most examples, but BBGA does not, this indicates that BGA-2 converges slower than BBGA. For larger networks BBGA may be preferable as an alternative which quickly reaches a reasonably accurate solution.

\begin{figure*}[!t]
\centering{\subfigure[$n=50$]{\includegraphics[width=2in]{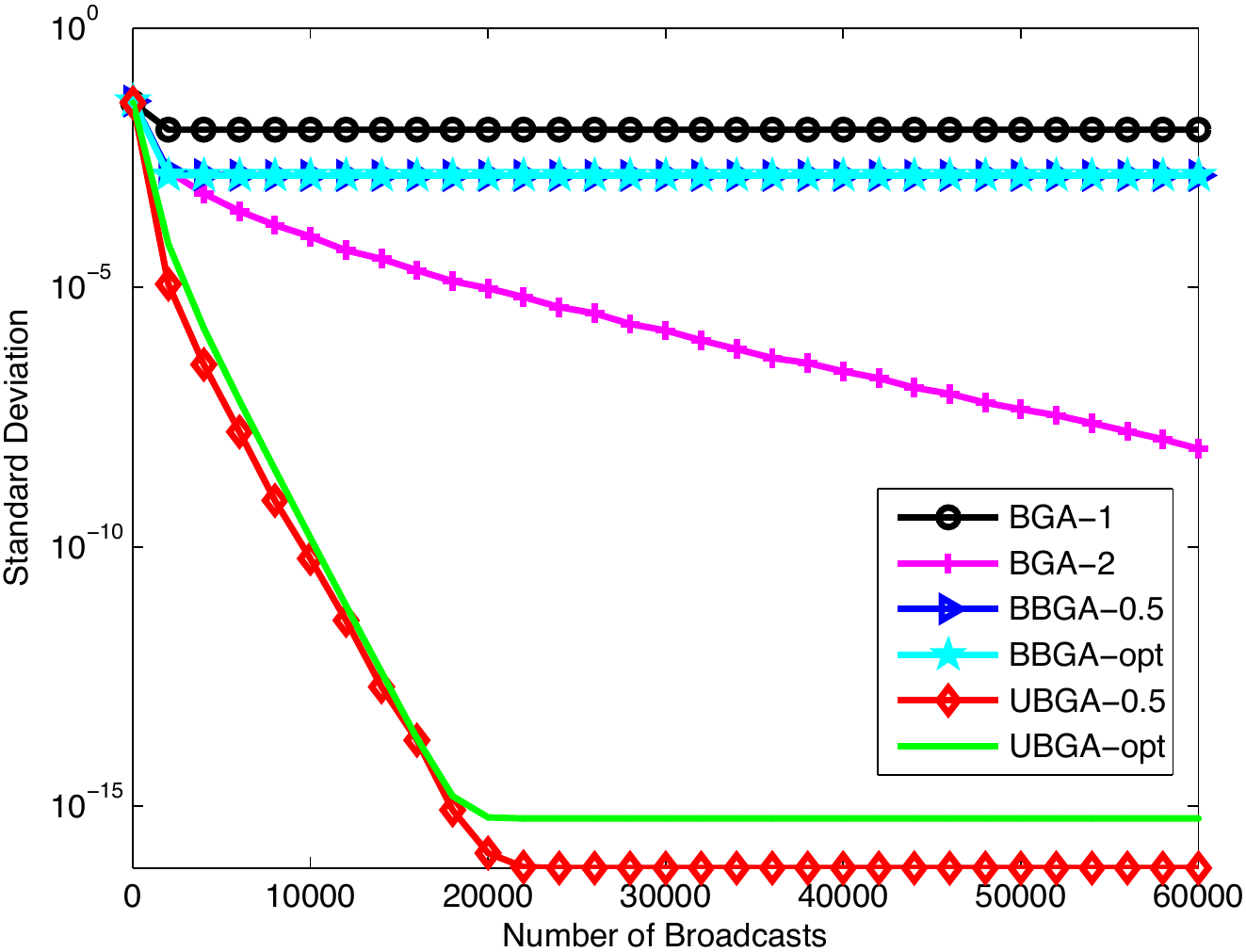}
\label{uniform:mse for undirected graphs with 50 nodes}}
\hfil
\subfigure[$n=100$]{\includegraphics[width=2in]{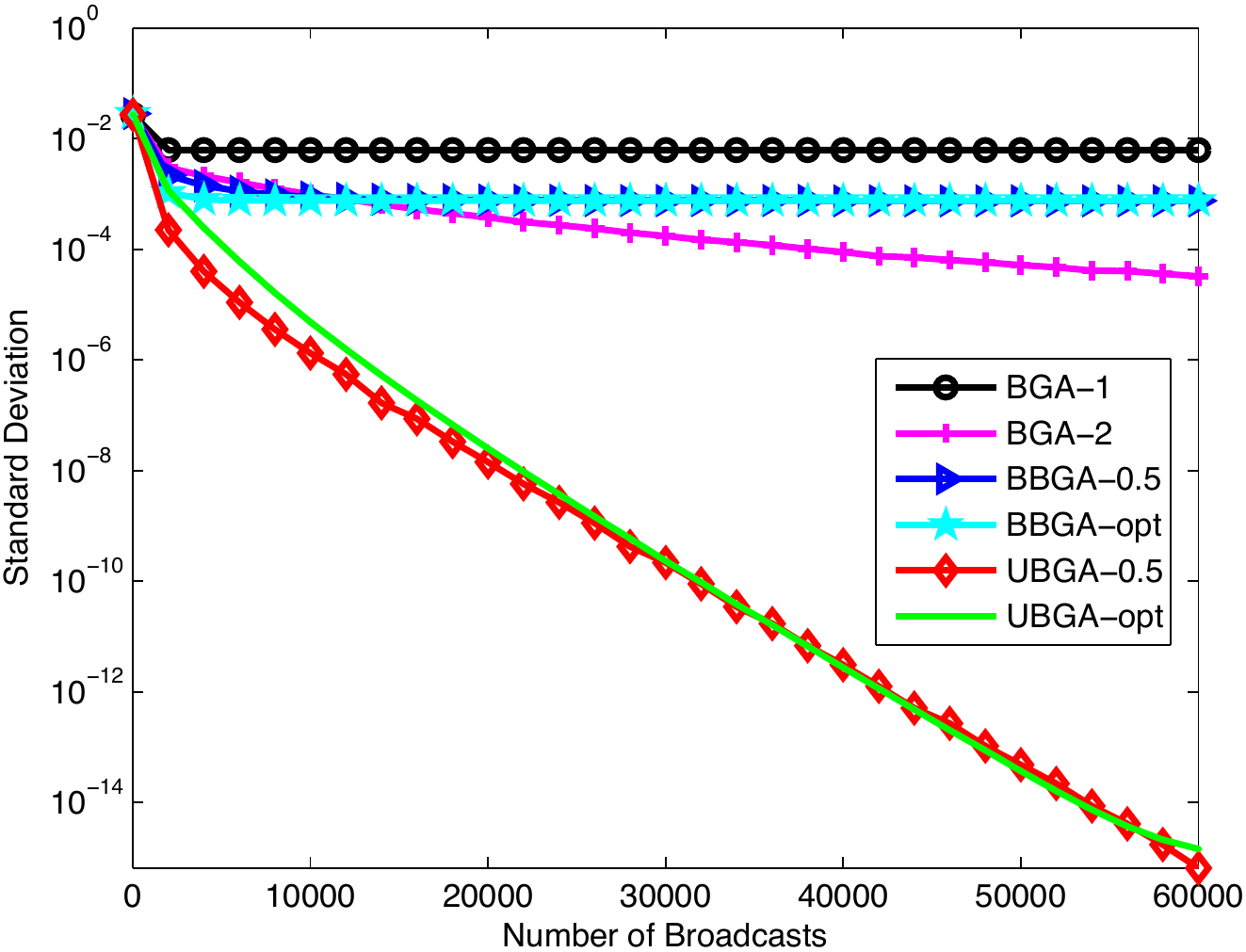}
\label{uniform:mse for undirected graphs with 100 nodes}}
\hfil
\subfigure[$n=500$]{\includegraphics[width=2in]{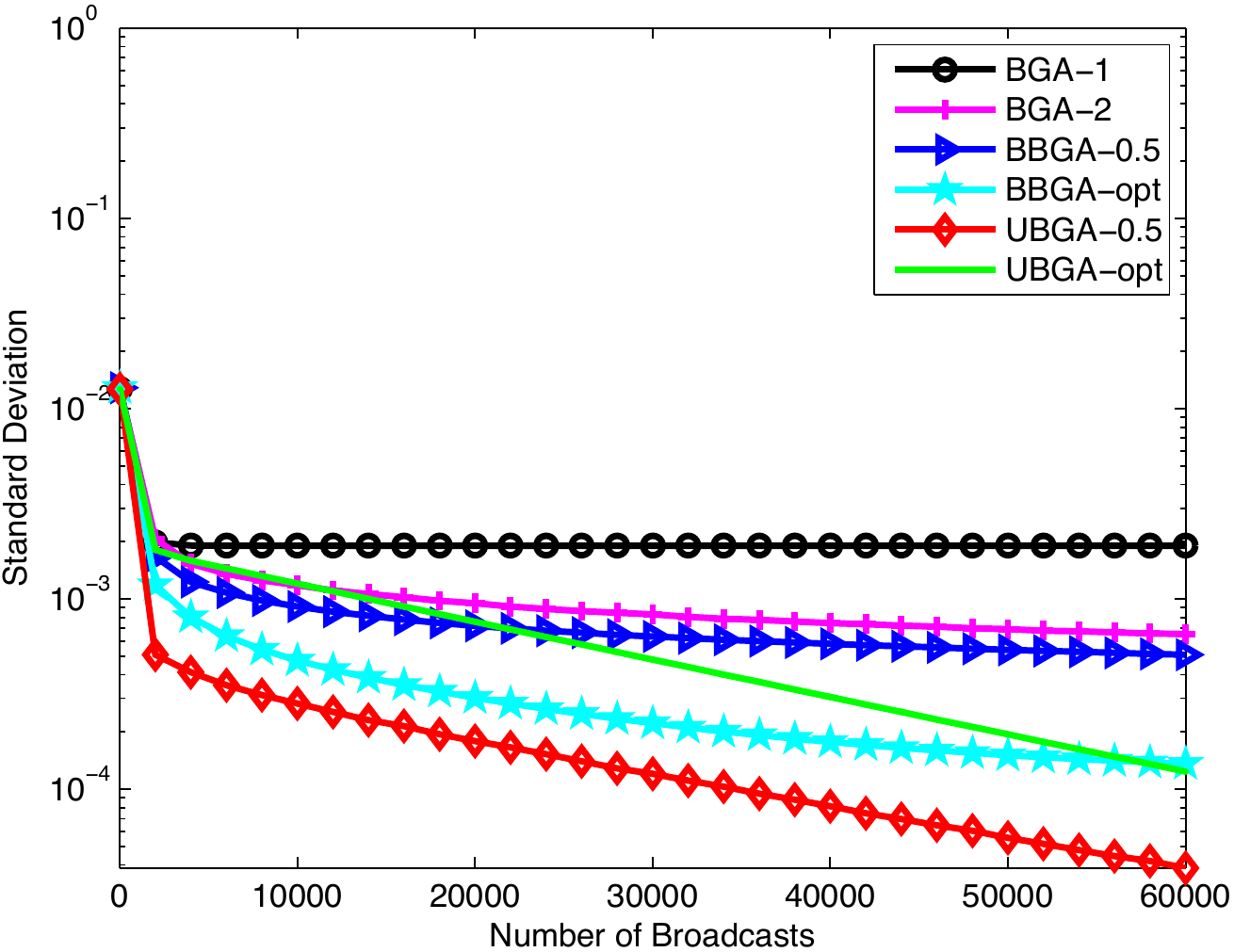}
\label{uniform:mse for undirected graphs with 500 nodes}}}
\caption{The mean squared error of BGA-1, BGA-2, BBGA and UBGA with respect to the number of broadcasts on undirected random geometric graphs with uniform initial values.}
\label{uniform:mse-und}
\end{figure*}

\begin{figure*}[!t]
\centering{\subfigure[$n=50$]{\includegraphics[width=2in]{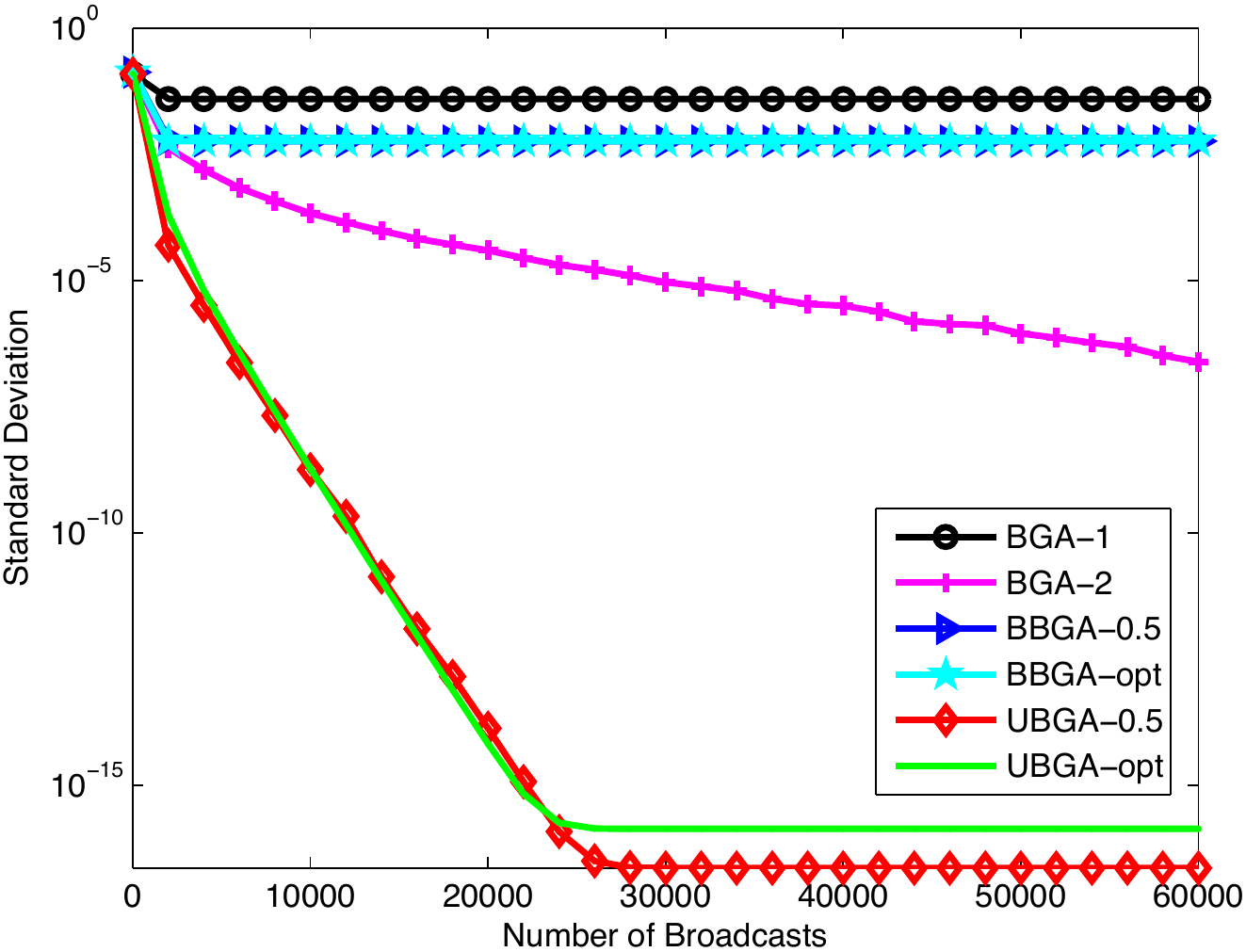}
\label{gaussian:mse for undirected graphs with 50 nodes}}
\hfil
\subfigure[$n=100$]{\includegraphics[width=2in]{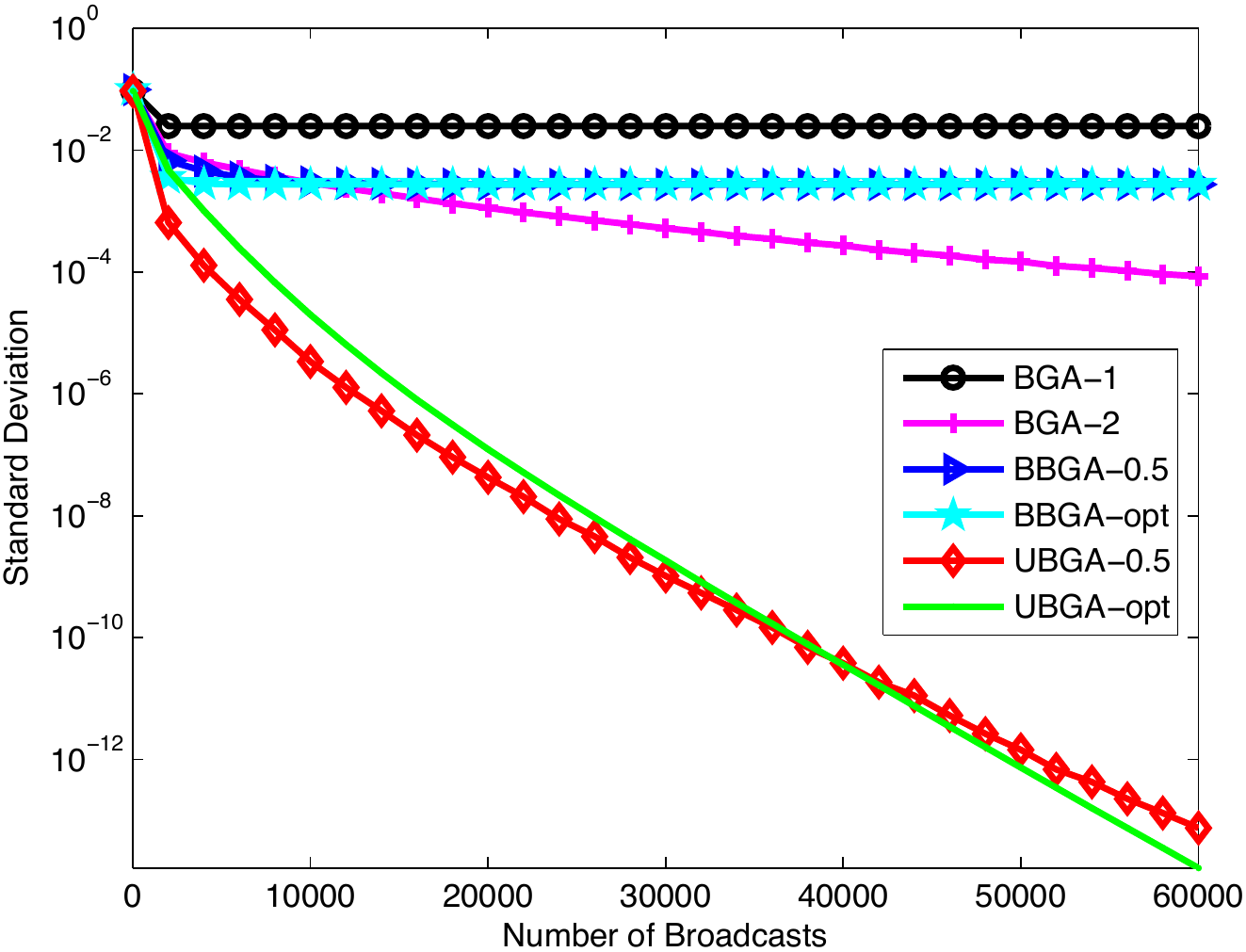}
\label{gaussian:mse for undirected graphs with 100 nodes}}
\hfil
\subfigure[$n=500$]{\includegraphics[width=2in]{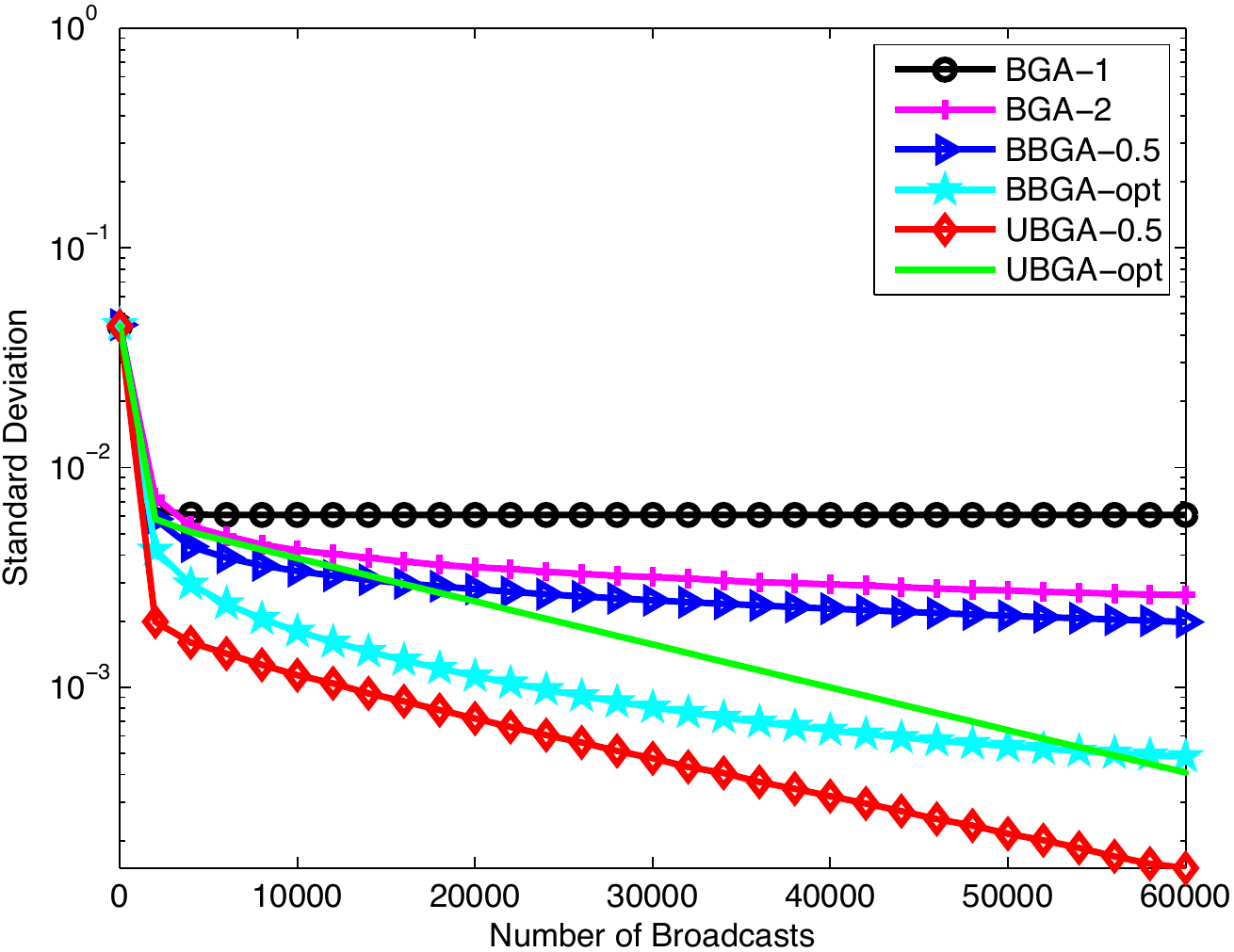}
\label{gaussian:mse for undirected graphs with 500 nodes}}}
\caption{The mean squared error of BGA-1, BGA-2, BBGA and UBGA with respect to the number of broadcasts on undirected random geometric graphs with Gaussian initial values.}
\label{guassian:mse-und}
\end{figure*}

\begin{figure*}[!t]
\centering{\subfigure[$n=50$]{\includegraphics[width=2in]{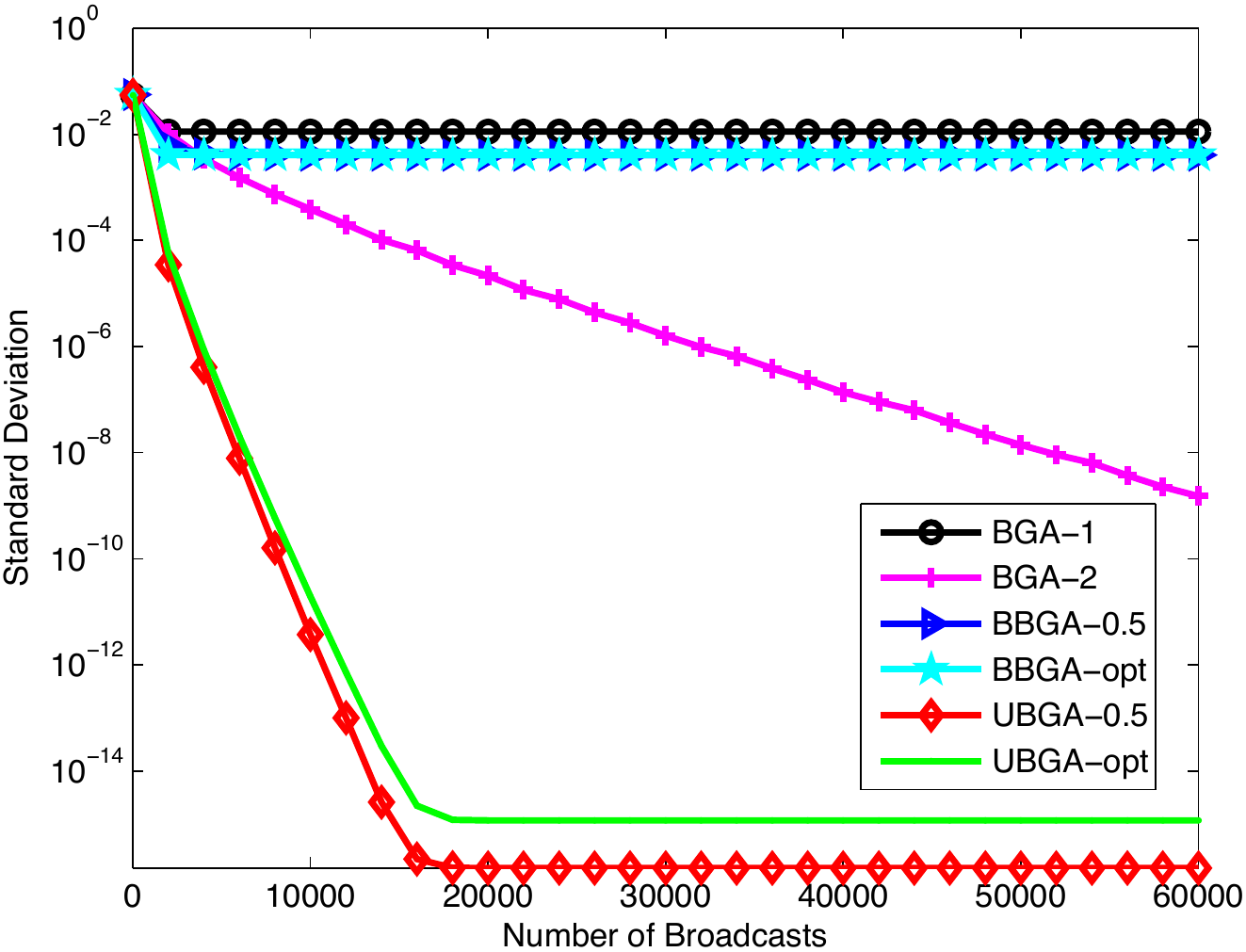}
\label{slope:mse for undirected graphs with 50 nodes}}
\hfil
\subfigure[$n=100$]{\includegraphics[width=2in]{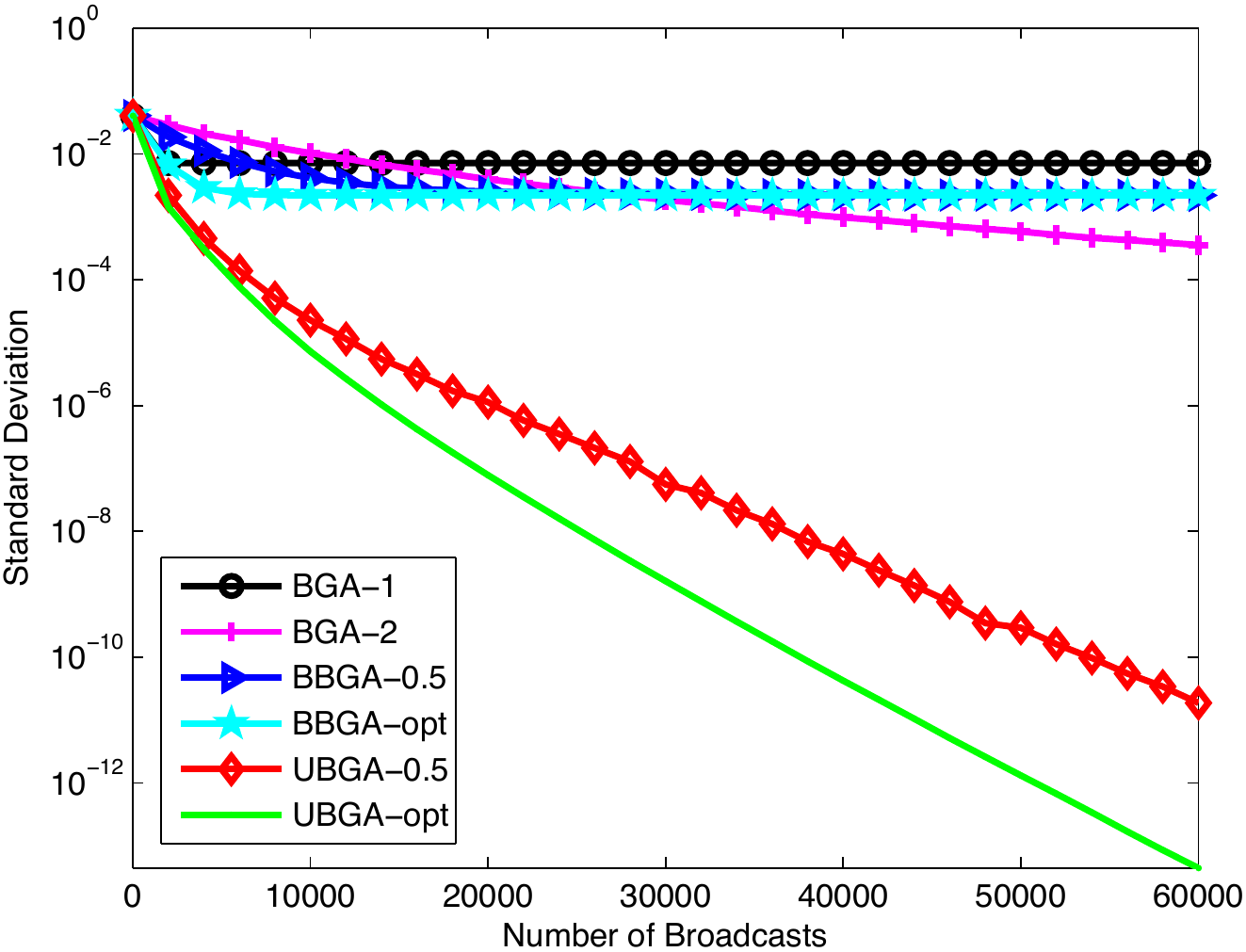}
\label{slope:mse for undirected graphs with 100 nodes}}
\hfil
\subfigure[$n=500$]{\includegraphics[width=2in]{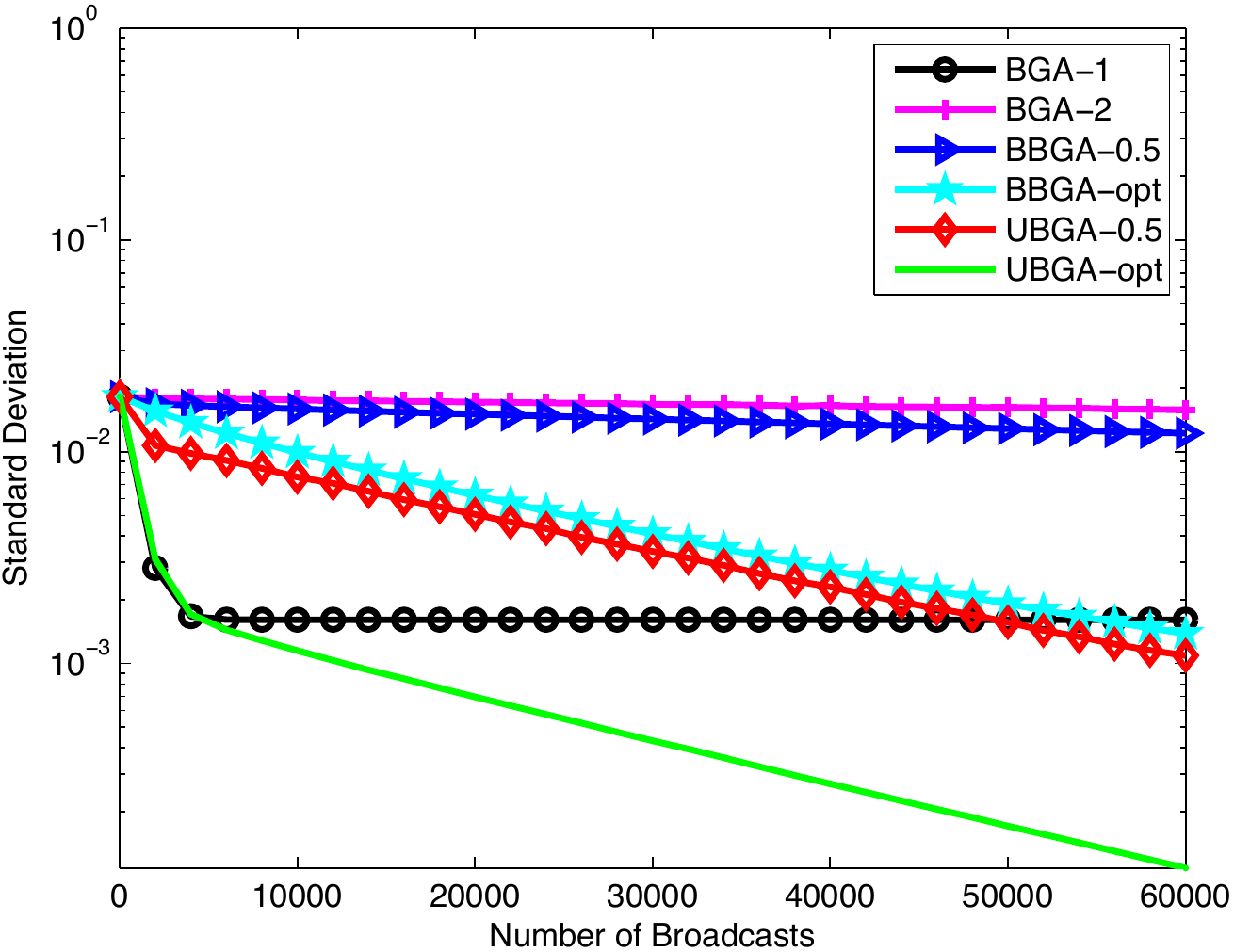}
\label{slope:mse for undirected graphs with 500 nodes}}}
\caption{The mean squared error of BGA-1, BGA-2, BBGA and UBGA with respect to the number of broadcasts on undirected random geometric graphs with slope initial values.}
\label{slope:mse-und}
\end{figure*}

\begin{figure*}[!t]
\centering{\subfigure[$n=50$]{\includegraphics[width=2in]{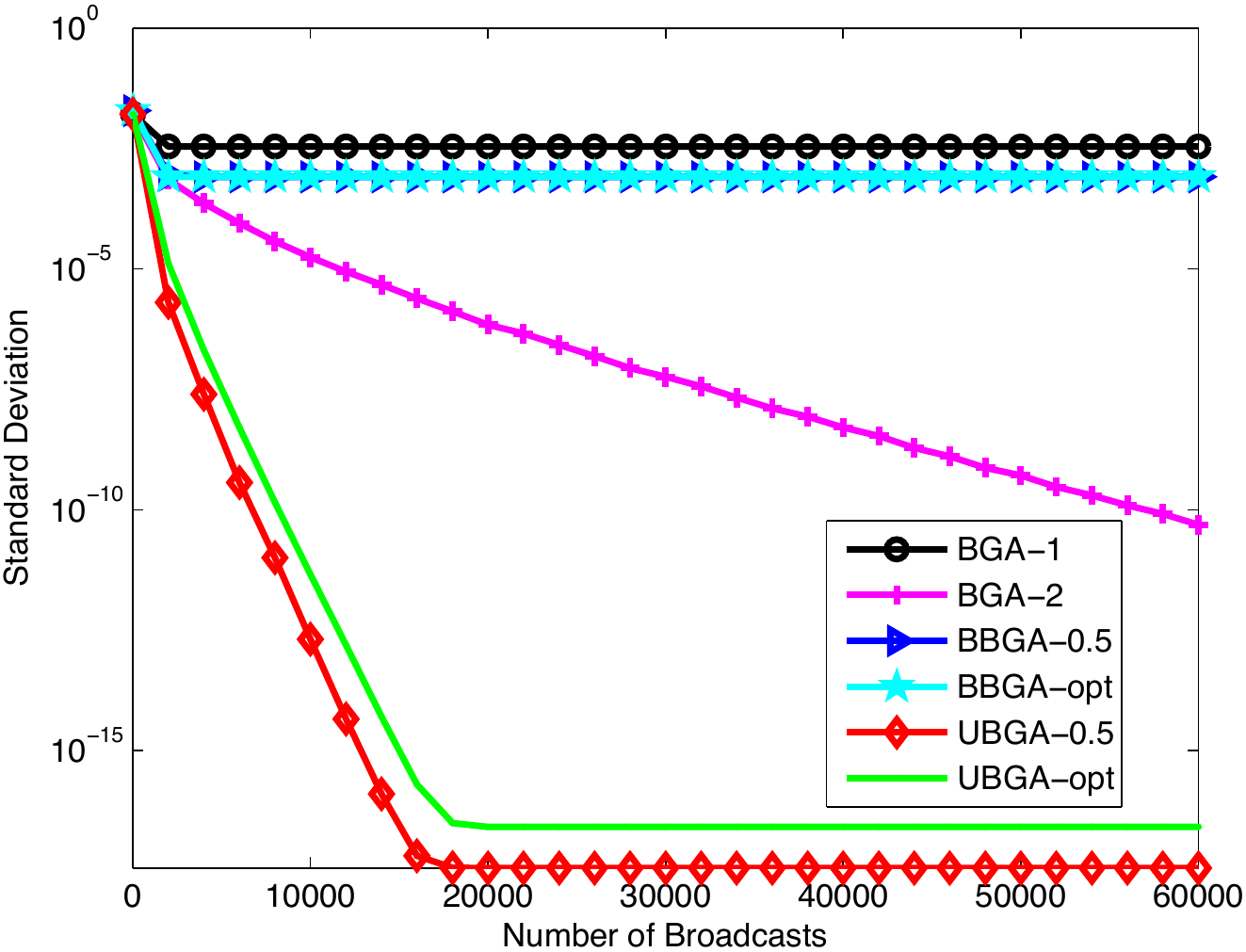}
\label{spike:mse for undirected graphs with 50 nodes}}
\hfil
\subfigure[$n=100$]{\includegraphics[width=2in]{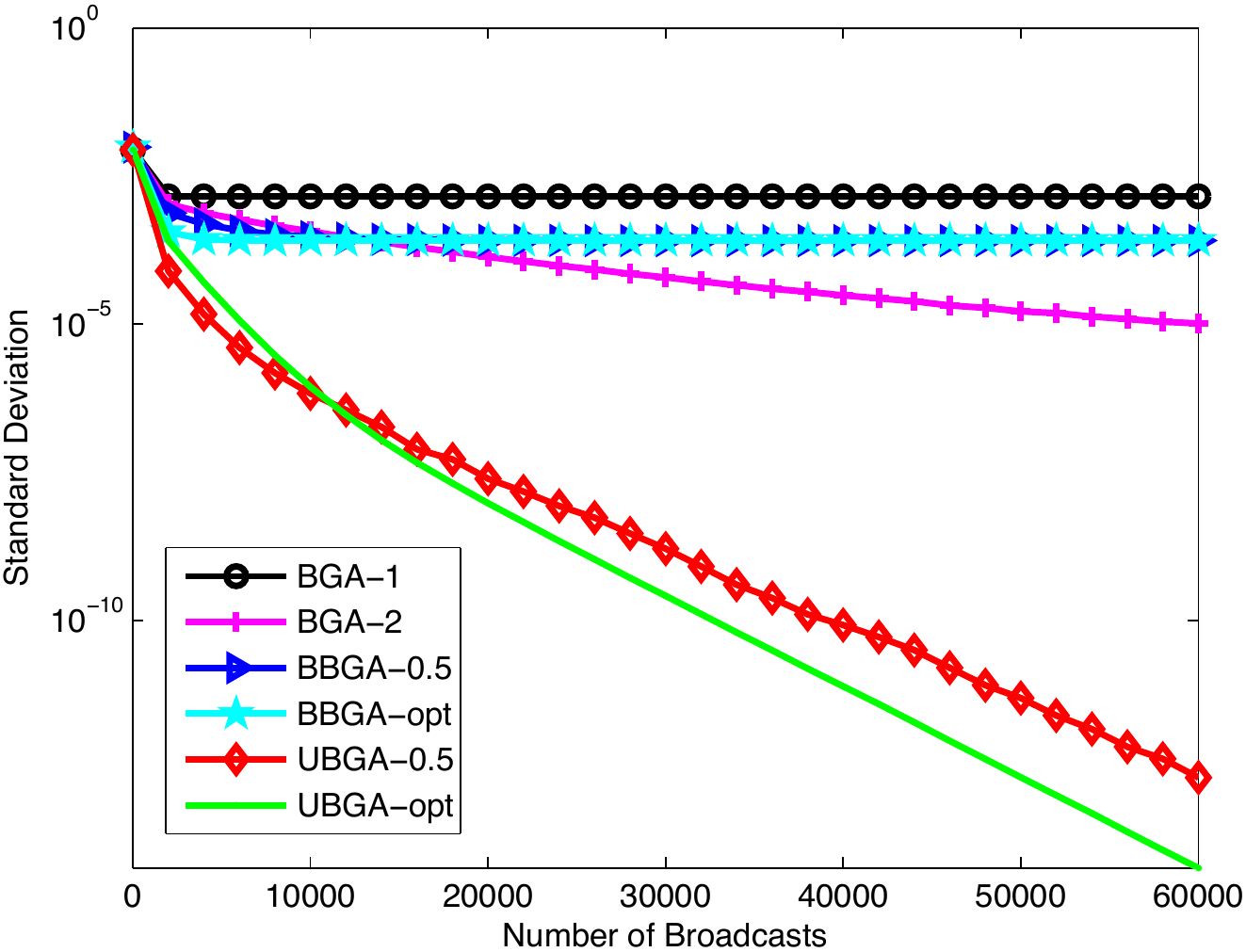}
\label{spike:mse for undirected graphs with 100 nodes}}
\hfil
\subfigure[$n=500$]{\includegraphics[width=2in]{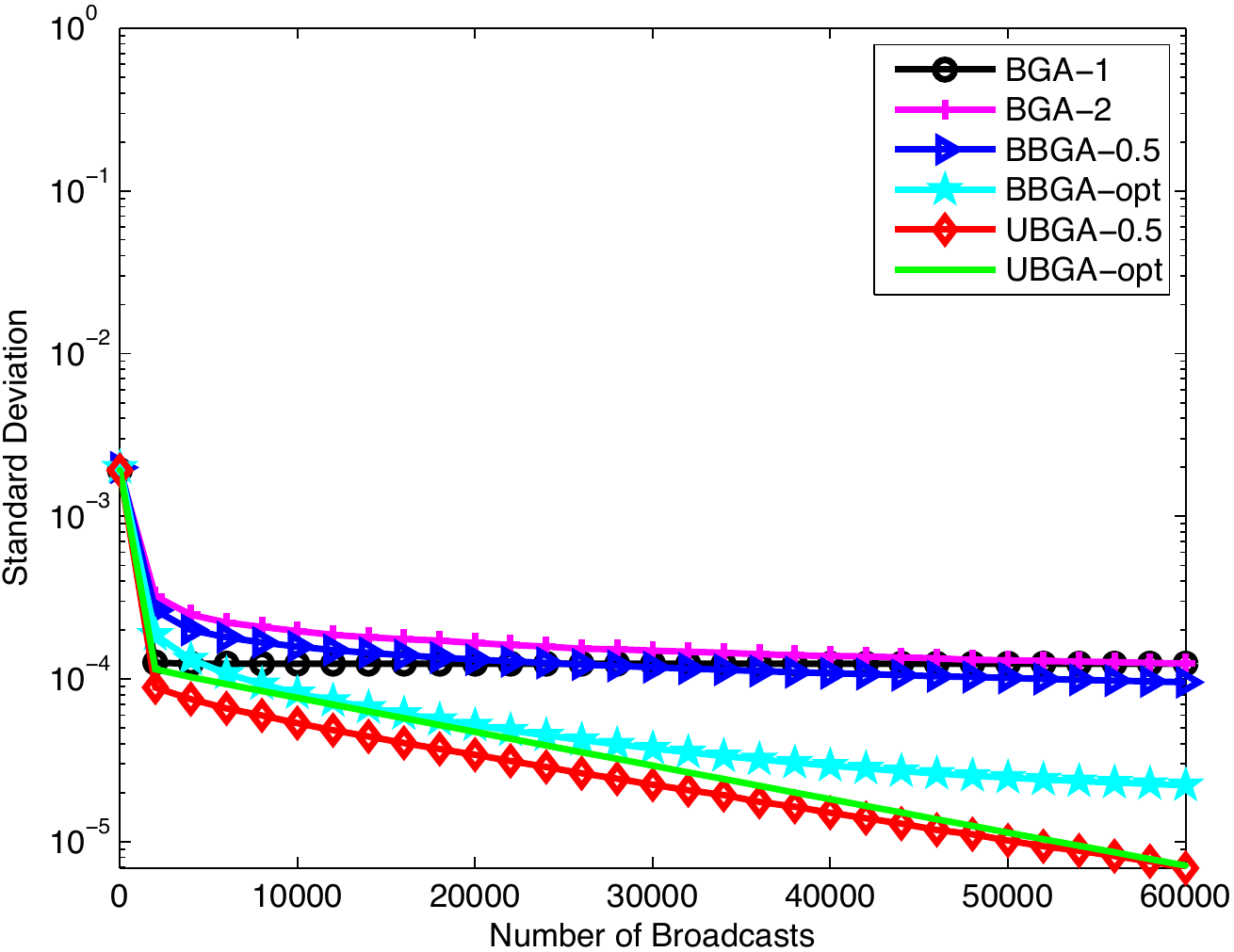}
\label{spike:mse for undirected graphs with 500 nodes}}}
\caption{The mean squared error of BGA-1, BGA-2, BBGA and UBGA with respect to the number of broadcasts on undirected random geometric graphs with spike initial values.}
\label{spike:mse-und}
\end{figure*}

\section{Conclusion and future work} \label{sec:conclusion}

In this paper, we propose a framework for broadcast gossip algorithms and prove that consensus is achieved in both expectation and in the mean squared sense for reasonably chosen coefficients. Then we analyze two particular broadcast gossip algorithms, UBGA and BBGA, where the former preserves the average and the latter is more practical for implementation in non-symmetric broadcast networks. These algorithms have an interpretation from the perspective of matrix perturbation, and we derive an upper bound on the perturbation parameter under which convergence is guaranteed. We also study the optimal value of the perturbation parameter and find it is within the range of allowable values. By numerical analysis, the optimal perturbation parameter obtained from BBGA on undirected digraphs is shown to also work well on digraphs. If the out-degree information is available (as is the case in undirected networks), we demonstrate that UBGA outperforms the existing state-of-the-art broadcast gossip algorithms. When out-degree information is not available, BBGA is a promising alternative because it exhibits an excellent tradeoff between the rate of convergence and the limiting mean squared error.

%In the future, we'd like to prove that all eigenvalues of weighted Laplacian matrix $L$ for BBGA on undirected graphs are real numbers. 
Interesting future work includes studying convergence properties of broadcast gossip algorithms with quantized transmissions. The broadcast gossip algorithms proposed in this paper involve maintaining and transmitting companion variables, in addition to the state variables which are being averaged, and we are interested in understanding how the number of bits allocated to these two different values impacts the rate of convergence and limiting value.

Finally, since wireless media is shared by nodes within communication radius for each other, broadcast packets are likely to undergo collisions and interference, and it would also be interesting to develop a deeper understanding of how broadcast gossip algorithms behave under more realistic channel models (e.g., accounting for capture effects).

\bibliographystyle{IEEEtran}
\bibliography{broadcast}

\end{document}